\theoremstyle{plain}
\newtheorem{theorem}{Theorem}[section]
\newtheorem{lemma}{Lemma}[section]
\newtheorem{corollary}{Corollary}[section]
\pgfplotsset{compat=newest} 
\pgfplotsset{plot coordinates/math parser=false}
\newtheorem{definition}{Definition}[section]
\newtheorem{remark}{Remark}[section]
\patchcmd{\tkz@DrawLine}{\begingroup}{\begingroup\makeatletter}{}{}
\DeclareMathOperator{\argmax}{argmax}
\DeclareMathOperator{\esup}{ess\, sup}
\newcommand\makebig[2]{%
  \@xp\newcommand\@xp*\csname#1\endcsname{\bBigg@{#2}}%
  \@xp\newcommand\@xp*\csname#1l\endcsname{\@xp\mathopen\csname#1\endcsname}%
  \@xp\newcommand\@xp*\csname#1r\endcsname{\@xp\mathclose\csname#1\endcsname}%
}
\providecommand*{\ped}[1]{%
\ensuremath{_\textnormal{#1}}}
\providecommand*{\ap}[1]{%
\ensuremath{^\textnormal{#1}}}
\providecommand*{\eu}%
{\ensuremath{\mathrm{e}}}
\providecommand*{\im}%
{\ensuremath{\mathrm{i}}}
\providecommand*{\GammaF}%
{\ensuremath{\mathrm{\Gamma}}}
\providecommand*{\BetaF}%
{\ensuremath{\mathrm{\Beta}}}
\DeclareMathSymbol{\Gamma}{\mathalpha}{letters}{"00}
\DeclareMathSymbol{\Delta}{\mathalpha}{letters}{"01}
\DeclareMathSymbol{\Theta}{\mathalpha}{letters}{"02}
\DeclareMathSymbol{\Lambda}{\mathalpha}{letters}{"03}
\DeclareMathSymbol{\Xi}{\mathalpha}{letters}{"04}
\DeclareMathSymbol{\Pi}{\mathalpha}{letters}{"05}
\DeclareMathSymbol{\Sigma}{\mathalpha}{letters}{"06}
\DeclareMathSymbol{\Upsilon}{\mathalpha}{letters}{"07}
\DeclareMathSymbol{\Phi}{\mathalpha}{letters}{"08}
\DeclareMathSymbol{\Psi}{\mathalpha}{letters}{"09}
\DeclareMathSymbol{\Omega}{\mathalpha}{letters}{"0A}
\definecolor{matblue}{rgb}{0.0000,0.4470,0.7410}
\definecolor{matred}{rgb}{0.8500,0.3250,0.0980}
\definecolor{matyellow}{rgb}{0.9290,0.6940,0.1250}
\definecolor{matpurple}{rgb}{0.4940,0.1840,0.5560}
\definecolor{matgreen}{rgb}{0.4660,0.6740,0.1880}
\definecolor{matcyan}{rgb}{0.3010,0.7450,0.9330}
\definecolor{matmaroon}{rgb}{0.6350,0.0780,0.1840}
\newtcolorbox[auto counter]{modelbox}[2][]{%
  colback=white, colframe=black,
  fonttitle=\bfseries,
  title=Model~\thetcbcounter: #2,
  label=#1
}
\newcommand{\modref}[1]{Model~\ref{#1}}
\begin{document}

\title{Two-dimensional FrBD friction models for rolling contact\footnote{This document is a corrected author version of the article published in \emph{Nonlinear Dynamics} 114, 444 (2026) \url{https://doi.org/10.1007/s11071-026-12298-x}. It incorporates corrections made after publication (see Errata) and is distributed in accordance with the CC BY licence.}}
\date{}
\author[a,b]{Luigi Romano\thanks{Corresponding author. Email: luigi.romano@liu.se.}}
\affil[a]{\footnotesize{Division of Vehicular Systems, Department of Electrical Engineering, Linköping University, SE-581 83 Linköping, Sweden}}
\affil[b]{\footnotesize{Control Systems Technology Group, Department of Mechanical Engineering, Eindhoven University of Technology, Groene Loper 1, 5612 AZ Eindhoven, the Netherlands}}

\maketitle

\begin{abstract}
This paper develops a comprehensive two-dimensional generalisation of the recently introduced Friction with Bristle Dynamics (FrBD) framework for rolling contact problems. The proposed formulation extends the one-dimensional FrBD model to accommodate simultaneous longitudinal and lateral slips, spin, and arbitrary transport kinematics over a finite contact region. The derivation combines a rheological representation of the bristle element with an analytical local sliding-friction law. By relying on an application of the Implicit Function Theorem, the notion of sliding velocity is then eliminated, and a fully dynamic friction model, driven solely by the rigid relative velocity, is obtained. Building upon this local model, three distributed formulations of increasing complexity are introduced, covering standard linear rolling contact, as well as linear and semilinear rolling in the presence of large spin slips. For the linear formulations, well-posedness, stability, and passivity properties are investigated under standard assumptions. In particular, the analysis reveals that the model preserves passivity under almost any parametrisation of practical interest. Numerical simulations illustrate steady-state action surfaces, transient relaxation phenomena, and the effect of time-varying normal loads. The results provide a unified and mathematically tractable friction model applicable to a broad class of rolling contact systems.
\end{abstract}
\section*{Keywords}
Rolling contact; friction; friction modelling; contact mechanics; distributed parameter systems; semilinear systems

\section{Introduction}\label{intro}
Rolling and sliding contact phenomena are central to a wide range of engineering systems \cite{KinematicsMio,Flores,Flores2}, including wheel-rail interaction \cite{Knothe,KalkerBook,2000RCP}, tyre-road dynamics \cite{Guiggiani,Pacejka2,LibroMio,Gauterin,Gauterin2}, spherical robots \cite{Sphere1,Sphere2,Sphere3,Sphere4,SphereNoSlip1,SphereNoSlip2,SphereNoSlip3}, and general mechanical and tribological components such as continuous transmissions 
\cite{CarboneTrans,TransModel,Belt1,Frendo1,Frendo2,Frendo3} and bearing elements \cite{bearing1,bearing2,bearing3,bearing4,bearing5}.

Historically, the first analytical theory of rolling contact was developed by Kalker, who combined kinematic relationships with a local Coulomb-Amontons friction model. Focusing on similar elastic cylinders undergoing longitudinal slip\footnote{Or \emph{creepage}, as better known within the railway community.} under dry friction conditions, Kalker was able to analyse both steady-state and transient phenomena \cite{Kalker51,Kalker5} within the theoretical framework offered by his full nonlocal theory of elasticity \cite{KalkerPhD}. The solutions obtained in these seminal works are now regarded as classical results in contact mechanics and are reported in several authoritative references \cite{KalkerBook,Johnson,Goryacheva,Barber}. However, pure longitudinal slip remains the only case for which a closed-form analytical solution is available. The intrinsic complexities associated with unsteady rolling have confined other exact analyses, particularly those involving combined slip conditions, to the numerical domain \cite{Vollebregt1,Vollebregt2,Vollebregt3,Vollebregt4}, whereas several authors have proposed approximate analytical solutions \cite{Nielsen,Alfredsson}.

In fact, more involved cases featuring combined translational and spin slips can be studied, at least qualitatively, using Kalker's simplified model \cite{KalkerBook,Johnson,KalkerSimp}, which employs a Winkler-type approximation for the elastic relationship between deformations and stresses. Closely related formulations, based on local constitutive laws, quickly became popular in vehicle dynamics, where they are known as \emph{brush models}. A key advantage of these simplified models is that they replace the convolution integrals arising from full elasticity theory with linear algebraic relations, thereby enabling both steady-state and transient analyses for a wide range of wheel-rail and tyre-road operating conditions. Analytical solutions derived from brush models can be found, for example, in \cite{Gross,Alonso1,Alonso2,Ciavarella1,Ciavarella2,Ciavarella3,Al-Bender,USB,LibroMio}. More recent extensions addressing large spin slips and omnidirectional rolling are presented in \cite{Meccanica2,SphericalWheel}, whilst \cite{Vollebregt1,Vollebregt2,Vollebregt3,Vollebregt4} discusses adaptations to account for lubricated friction and third-body layers.

Despite their usefulness and wide adoption, classical brush models suffer from an important limitation: they require an explicit distinction between adhesion and sliding regions within the contact patch. When the rolling contact process is formulated as a spatially distributed problem, this feature introduces insurmountable obstacles to obtaining general analytical results whilst complicating rigorous mathematical treatment, as well as integration with control or estimation algorithms. To address these drawbacks, several alternative theories of rolling and spinning friction have been proposed.
For example, the local approaches presented in \cite{Zhuravlev1,Zhuravlev2,Kireenkov1,Kireenkov2} estimate the friction forces and moments acting on spinning or sliding bodies by relying on simplified representations of rolling kinematics, permitting the derivation of compact expressions for the global quantities of interest. Dynamic friction models, such as the Dahl and LuGre formulations \cite{Astrom1,Olsson,Astrom2}, have also been extended to rolling contact applications, particularly in tyre-road interaction \cite{Sorine,TsiotrasConf,Tsiotras1,Tsiotras3,Deur0,Deur1,Deur2}. These descriptions do not differentiate between stick and slip regimes, since they only model the average deflection of bristle elements within the contact area. As such, they possess a mathematical structure that is more amenable to analysis and control design. Conceptually, similar descriptions may also be adapted starting from other dynamic friction formulations, including Leuven-type \cite{Integrated,Leuven} and elastoplastic models  \cite{Elasto1,Elasto2}. However, most of these approaches remain empirical in nature and therefore provide only limited physical insight into the underlying mechanisms of friction.

Recently, as a variant of LuGre, the Friction with Bristle Dynamics (FrBD) formulation was introduced in \cite{FrBD} as a first-order approximation to a rheological description of bristle-like elements attached to the rolling components. Being physically motivated, the FrBD model provides a consistent dynamic description of sliding in the presence of dry and lubricated friction. However, the variant developed in \cite{FrBD} was conceived for line contact conditions and, albeit able to capture essential transient and steady-state effects, its applicability to rolling and spinning contact, which is inherently a two-dimensional distributed phenomenon, remains limited. The present paper addresses this intrinsic limitation by presenting a full two-dimensional generalisation of the FrBD model, suitable to describe rolling contact processes occurring between a deformable body and a rigid substrate, or between similarly elastic bodies, in the presence of dry or lubricated friction. As for its one-dimensional counterpart developed in \cite{FrBD}, the main ingredient for the refined formulation is a physically motivated constitutive relationship postulating the bristle force as a function of both deformation and deformation rate. This local constitutive relationship is coupled with an analytical law for sliding friction through a nonlinear mapping involving a generalised friction coefficient matrix. The combination leads to an implicit differential equation for the sliding velocity, which is approximated analytically using a constructive version of the Implicit Function Theorem. The resulting dynamic friction law depends solely on the rigid relative velocity and bristle deflection, eliminating the concept of sliding velocity and yielding a nonlinear \emph{ordinary differential equation} (ODE) applicable to both sliding and rolling contexts. In particular, to move from a local dynamic model to a spatially distributed rolling contact formulation, the paper adopts an Eulerian approach: through the definition of a transport-velocity field and a change of variables from time to travelled distance, the bristle dynamics is expressed, in very general terms, as a first-order nonlinear \emph{partial differential equation} (PDE) defined over a possibly time-dependent contact domain. This reformulation enables the friction state to vary within the contact area, capturing important dynamical effects such as loading and unloading, relaxation from leading to trailing edge, and spin-induced asymmetries.

Specifically, three distributed rolling contact models are derived from this general PDE. The first is essentially a standard description, similar to those already encountered, for instance, in \cite{Tsiotras1,Tsiotras3,Deur0,Deur1}, and qualifies as the FrBD analogue of the classical brush models. The second model variant accounts more explicitly for the presence of large spin slips, which are incorporated exactly into the expressions for the relative and transport velocities, yielding a rather involved semilinear PDE. Finally, the third variant is introduced as an approximation to the semilinear one, and only considers first-order effects that are relevant for strongly spin-dominated regimes. It shares some similarities with the advanced brush and LuGre-brush models proposed in \cite{Meccanica2,SphericalWheel} and \cite{LuGreSpin}, albeit being substantially more complex. For the linear formulations introduced in the paper, well-posedness is established by invoking standard results, ensuring existence and uniqueness of mild and classical solutions under appropriate regularity conditions on the transport velocity, slip inputs, and matrices of coefficients.
The mathematical analysis further investigates important notions such as stability -- including \emph{input-to-state stability} (ISS) and \emph{input-to-output stability} (IOS) --, and passivity, which, apart from having important physical interpretations, are crucial for control-oriented applications. A central finding is that, in contrast to LuGre-based formulations, the linear models delivered by the paper are (almost) always passive, as inherited from the underlying rheological structure of the bristle-like elements.

The theoretical findings of the paper are complemented with numerical results illustrating both steady-state and transient phenomena in rolling contact. Action surfaces are computed to characterise steady force-slip relationships, whilst transient simulations focus on relaxation processes and on the effect of time-varying contact regions or normal forces.
Overall, the paper provides the first mathematically rigorous, two-dimensional dynamic friction model tailored specifically to rolling contact systems. It unifies rheological bristle dynamics, nonlinear friction laws, and PDE representations of distributed contact, offering a versatile and physically grounded framework for future developments in wheel-rail, tyre, and general rolling contact modelling.

The remainder of this manuscript is organised as follows. Section~\ref{sect:2Dext} details the derivation of the extended two-dimensional FrBD formulation combining physical-oriented models with sound analytical arguments.  Then, Sect.~\ref{sect:models} particularises the general equations obtained in Sect.~\ref{sect:2Dext} for rolling contact systems, delivering three different model variants with increasing level of complexity. The linear models introduced in Sect.~\ref{sect:models} are analysed in Sect.~\ref{sect:math} in terms of stability and passivity. These notions are also instrumental in highlighting key mathematical features connected with friction-related phenomena. A rich variety of numerical results are then reported and discussed in Sect.~\ref{sect:numer}, concerning both steady-state and transient behaviours. Finally, the main conclusions of the paper are summarised in Sect.~\ref{sect:conclusion}, where possible directions for future research are also indicated. Technical results and additional modelling details are instead confined to Appendices~\ref{app:1},~\ref{app:Patch}, and~\ref{app:details}.

\subsection*{Notation}
In this paper, $\mathbb{R}$ denotes the set of real numbers; $\mathbb{R}_{>0}$ and $\mathbb{R}_{\geq 0}$ indicate the set of positive real numbers excluding and including zero, respectively. The set of positive integer numbers is indicated with $\mathbb{N}$, whereas $\mathbb{N}_{0}$ denotes the extended set of positive integers including zero, i.e., $\mathbb{N}_{0} = \mathbb{N} \cup \{0\}$; the set of positive even integers is indicated with $2\mathbb{N}$.
The set of $n\times m$ matrices with values in $\mathbb{F}$ ($\mathbb{F} = \mathbb{R}$, $\mathbb{R}_{>0}$, $\mathbb{R}_{\geq0}$) is denoted by $\mathbf{M}_{n\times m}(\mathbb{F})$ (abbreviated as $\mathbf{M}_{n}(\mathbb{F})$ whenever $m=n$). $\mathbf{Sym}_n(\mathbb{R})$ represents the group of symmetric matrices with values in $\mathbb{R}$; the identity matrix on $\mathbb{R}^n$ is indicated with $\mathbf{I}_n$. A positive-definite matrix is noted as $\mathbf{M}_n(\mathbb{R}) \ni \mathbf{Q} \succ \mathbf{0}$; a positive semidefinite one as $\mathbf{M}_n(\mathbb{R}) \ni \mathbf{Q} \succeq \mathbf{0}$. 
The standard Euclidean norm on $\mathbb{R}^n$ is indicated with $\norm{\cdot}_2$; operator norms are simply denoted by $\norm{\cdot}$.
Given a domain $\Omega$ with closure $\overline{\Omega}$, $L^p(\Omega;\mathcal{Z})$ and $C^k(\overline{\Omega};\mathcal{Z})$ ($p, k \in \{1, 2, \dots, \infty\}$) denote respectively the spaces of $L^p$-integrable functions and $k$-times continuously differentiable functions on $\overline{\Omega}$ with values in $\mathcal{Z}$ (for $T = \infty$, the interval $[0,T]$ is identified with $\mathbb{R}_{\geq 0}$). In particular, $L^2(\Omega;\mathbb{R}^n)$ denotes the Hilbert space of square-integrable functions on $\Omega$ with values in $\mathbb{R}^n$, endowed with inner product $\langle \bm{\zeta}_1, \bm{\zeta}_2 \rangle_{L^2(\Omega;\mathbb{R}^n)} = \int_\Omega \bm{\zeta}_1^{\mathrm{T}}(\bm{x})\bm{\zeta}_2(\bm{x}) \dif \bm{x}$ and induced norm $\norm{\bm{\zeta}(\cdot)}_{L^2(\Omega;\mathbb{R}^n)}$. The Hilbert space $H^1(\Omega;\mathbb{R}^n)$ consists of functions $\bm{\zeta}\in L^2(\Omega;\mathbb{R}^n)$ whose weak derivative also belongs to $L^2(\Omega;\mathbb{R}^n)$. For a function $f :\Omega \to \mathbb{R}$, the sup norm is defined as $\norm{f(\cdot)}_\infty \triangleq \esup_{\Omega} \abs{f(\cdot)}$; $f : \Omega \to \mathbb{R}$ belongs to the space $L^\infty(\Omega;\mathbb{R})$ if $\norm{f(\cdot)}_\infty < \infty$. A function $f \in C^0(\mathbb{R}_{\geq 0}; \mathbb{R}_{\geq 0})$ belongs to the space $\mathcal{K}$ if it is strictly increasing and $f(0) = 0$; $f \in \mathcal{K}$ belongs to the space $\mathcal{K}_\infty$ if it is unbounded. Finally, a function $f \in C^0(\mathbb{R}_{\geq 0}^2; \mathbb{R}_{\geq 0})$ belongs to the space $\mathcal{KL}$ if $f(\cdot,t) \in \mathcal{K}$ and is strictly decreasing in its second argument, with $\lim_{t\to \infty}f(\cdot,t) = 0$. 


\section{2D extension of the FrBD friction model}\label{sect:2Dext}

This section extends the FrBD friction model to cover two-dimensional contact cases, where relative rolling and sliding motions may occur in both the longitudinal and lateral directions. The model is derived by mixing the right doses of physical intuition and algebraic dexterity. For this, three main ingredients are needed: a rheological description of a bristle-like element, a local friction model, and a sufficiently convincing mathematical artifice. The first two modelling ingredients are introduced in Sect.~\ref{sect:rhelANdFr}, and then mixed mathematically in Sect.~\ref{sect:DynamicDer}.

\subsection{Rheological and friction models}\label{sect:rhelANdFr}
As already anticipated, two key constituents in the derivation of the extended FrBD model are a rheological description of a bristle element, and an analytical description of the local sliding friction process. These are discussed in detail in Sects.~\ref{sect:rheol} and~\ref{sect:frictionModel}.

\subsubsection{Rheological model}\label{sect:rheol}
The first step in deriving a two-dimensional FrBD friction model is to specify an appropriate rheological representation for the bristle element. This may be achieved by analysing separately the two configurations illustrated in Fig.~\ref{fig:LumpModel}, where a body slides over a flat substrate. The problem is studied in a reference frame $(O;x,y,z)$ oriented as follows: the $x$-axis (longitudinal) is usually directed as the main direction of motion of the upper body, the $z$-axis (vertical) points downward, into the lower body, and the $y$-axis (lateral) is oriented so to have a right-handed system.

First, the situation depicted in Fig.~\ref{fig:LumpModel}(a) is considered: the upper body travels with rigid relative velocity $\mathbb{R}^2 \ni \bm{v}\ped{r} = [v_{\textnormal{r}x}\; v_{\textnormal{r}y}]^{\mathrm{T}}$ with respect to the rigid substrate (lower body). To the lower boundary of the upper body, deformable massless bristles protrude, whose deflection is denoted by $\mathbb{R}^2 \ni \bm{z} = [z_x\; z_y]^{\mathrm{T}}$ (not to be confounded with the vertical axis $z$). These bristles, which are interpreted as material particles in the paper, consistently with the classic theories of contact mechanics \cite{KalkerBook}, are assumed to have zero vertical dimension, and only undergo deformations along the $x$ and $y$ directions. The roots of the bristles, which are attached to the upper body, have rigid relative velocity $\bm{v}\ped{r}$ with respect to the substrate. On the other hand, the total sliding velocity between the tip of the bristle and the lower body reads
\begin{align}\label{eq:slidingS}
\bm{v}\ped{s}(\dot{\bm{z}},\bm{v}\ped{r}) = \bm{v}\ped{r}+ \dot{\bm{z}}.
\end{align}

\begin{figure}
\centering
\includegraphics[width=1\linewidth]{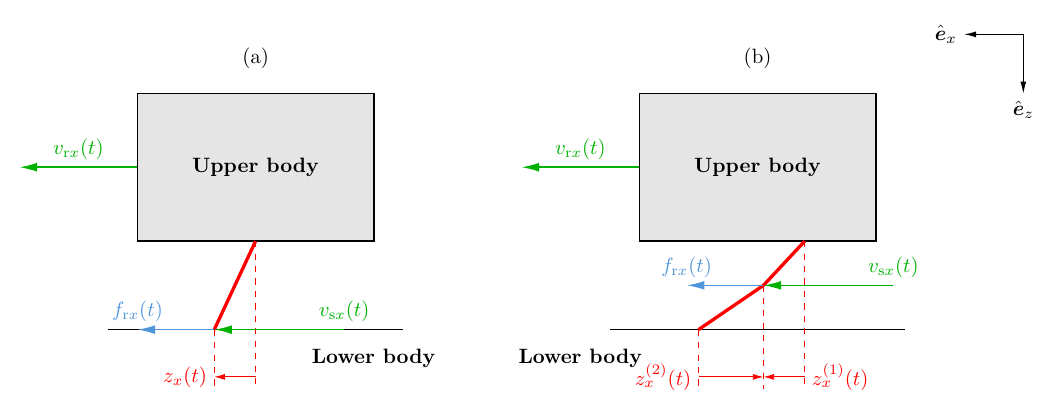} 
\caption{A schematic representation of the friction model: (a) configuration with a rigid substrate; (b) configuration with a deformable substrate. The problem is studied in a right-handed reference frame $(O;x,y,z)$ with unit vectors $(\hat{\bm{e}}_x, \hat{\bm{e}}_y, \hat{\bm{e}}_z)$.}
\label{fig:LumpModel}
\end{figure}
As illustrated in Fig.~\ref{fig:LumpModelForces}, during its sliding over the rigid substrate, the bristle generates a nondimensional force\footnote{Essentially, a force per unit of vertical load, which may be interpreted either as a friction potential or as a dynamic friction coefficient.}, $\mathbb{R}^2 \ni \bm{f} = [f_x\; f_y]^{\mathrm{T}}$, which, in the absence of inertial effects, counteracts the friction force per unit of normal load $p \in \mathbb{R}_{>0}$ acting on its tip, $\mathbb{R}^2 \ni \bm{f}\ped{r} = [f_{\textnormal{r}x}\; f_{\textnormal{r}y}]^{\mathrm{T}}$. A simple but rather general constitutive equation, based on a Kelvin-Voigt rheological representation of the bristle, is
\begin{align}\label{eq:rheol1}
\bm{f}(\dot{\bm{z}},\bm{z}) = \mathbf{\Sigma}_0\bm{z} + \mathbf{\Sigma}_1\dot{\bm{z}},
\end{align}
where the matrices $\mathbf{Sym}_2(\mathbb{R}) \ni \mathbf{\Sigma}_0 \succ \mathbf{0}$ and $\mathbf{Sym}_2(\mathbb{R}) \ni \mathbf{\Sigma}_1 \succeq \mathbf{0}$ collect the normalised micro-stiffness and micro-damping coefficients: 
\begin{subequations}\label{eq:Sigmas1}
\begin{align}
\mathbf{\Sigma}_0 & = \begin{bmatrix} \sigma_{0xx} & \sigma_{0xy} \\ \sigma_{0xy} & \sigma_{0yy} \end{bmatrix}, \\
\mathbf{\Sigma}_1 & = \begin{bmatrix} \sigma_{1xx} & \sigma_{1xy} \\ \sigma_{1xy} & \sigma_{1yy} \end{bmatrix}, 
\end{align}
\end{subequations}
or, in the case of diagonal matrices:
\begin{subequations}\label{eq:Sigmas2}
\begin{align}
\mathbf{\Sigma}_0 & = \begin{bmatrix} \sigma_{0x} & 0 \\ 0 & \sigma_{0y} \end{bmatrix}, \label{eq:Sigmas02}\\
\mathbf{\Sigma}_1 & = \begin{bmatrix} \sigma_{1x} & 0 \\ 0 & \sigma_{1y} \end{bmatrix}. 
\end{align}
\end{subequations}
In Eqs.~\eqref{eq:rheol1}-\eqref{eq:Sigmas2}, $\mathbf{\Sigma}_0$ and $\mathbf{\Sigma}_1$ have been assumed constant for simplicity, but the generalisation to nonlinear stiffness and damping terms is straightforward. It is also worth emphasising that, in contrast to the LuGre model, Eq.~\eqref{eq:rheol1} does not include viscous contributions, which will be instead incorporated correctly in the local model friction model detailed in Sect.~\ref{sect:frictionModel}. This strategy is coherent with the observations reported in several recent publications (see, for instance, \cite{Trib1,Rill}).

\begin{figure}
\centering
\includegraphics[width=0.5\linewidth]{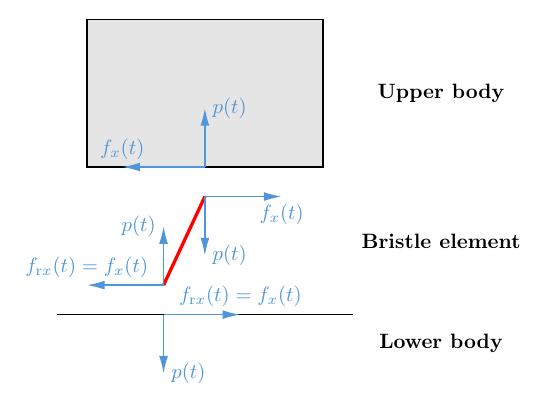} 
\caption{Free-body diagram of the bristle element in the $x$ and $z$-directions, along with its reaction on the upper and lower bodies in the absence of inertial effects.}
\label{fig:LumpModelForces}
\end{figure}

In the second scenario illustrated in Fig.~\ref{fig:LumpModel}(b), deformable bristles $\mathbb{R}^2 \ni \bm{z}\ap{(1)} = [z_x\ap{(1)}\; z_y\ap{(1)}]^{\mathrm{T}}$ and $\mathbb{R}^2 \ni \bm{z}\ap{(2)} = [z_x\ap{(2)}\; z_y\ap{(2)}]^{\mathrm{T}}$ protrude from both bodies (the superscripts $\ap{(1)}$ and $\ap{(2)}$ refer to body 1 and 2, respectively). In this case, by defining $\mathbb{R}^2 \ni \bm{z} \triangleq \bm{z}\ap{(1)}-\bm{z}\ap{(2)}$, the total sliding velocity between the tips of the bristles attached to the upper and lower body may be computed again as in Eq.~\eqref{eq:slidingS}. Furthermore, a similar relationship to Eq.~\eqref{eq:rheol1} may be postulated concerning each body, that is, 
\begin{subequations}\label{eq:fbSedndss}
\begin{align}
\bm{f}\ap{(1)}\bigl(\dot{\bm{z}}\ap{(1)},\bm{z}\ap{(1)}\bigr) & = \mathbf{\Sigma}_0\ap{(1)}\bm{z}\ap{(1)} + \mathbf{\Sigma}_1\ap{(1)}\dot{\bm{z}}\ap{(1)}, \\
\bm{f}\ap{(2)}\bigl(\dot{\bm{z}}\ap{(2)}, \bm{z}\ap{(2)}\bigr) & = \mathbf{\Sigma}_0\ap{(2)}\bm{z}\ap{(2)} + \mathbf{\Sigma}_1\ap{(2)}\dot{\bm{z}}\ap{(2)}.
\end{align}
\end{subequations}
Again, it may be assumed that $\mathbf{Sym}_2(\mathbb{R}) \ni \mathbf{\Sigma}_0\ap{(1)}, \mathbf{\Sigma}_0\ap{(2)} \succ \mathbf{0}$ and $\mathbf{Sym}_2(\mathbb{R}) \ni \mathbf{\Sigma}_1\ap{(1)}, \mathbf{\Sigma}_1\ap{(2)} \succeq \mathbf{0}$. Clearly, it must hold that
\begin{align}
\bm{f}\ap{(1)}\bigl(\dot{\bm{z}}\ap{(1)}\bm{z}\ap{(1)}\bigr)  = - \bm{f}\ap{(2)}\bigl(\dot{\bm{z}}\ap{(2)}, \bm{z}\ap{(2)}\bigr) .
\end{align}
Consequently, assuming $\mathbf{\Sigma}_0\ap{(1)} = \mathbf{\Sigma}_0\ap{(2)}$, $\mathbf{\Sigma}_1\ap{(1)} = \mathbf{\Sigma}_1\ap{(2)}$ yields
\begin{align}\label{eq:rheol2}
\begin{split}
\bm{f}(\dot{\bm{z}},\bm{z}) & \triangleq \bm{f}\ap{(1)}\bigl(\dot{\bm{z}}\ap{(1)}, \bm{z}\ap{(1)}\bigr)   \equiv -\bm{f}\ap{(2)}\bigl(\dot{\bm{z}}\ap{(2)}, \bm{z}\ap{(2)}\bigr)  \\
& \equiv \dfrac{1}{2}\Bigl[\bm{f}\ap{(1)}\bigl(\dot{\bm{z}}\ap{(1)}, \bm{z}\ap{(1)}\bigr)-\bm{f}\ap{(2)}\bigl(\dot{\bm{z}}\ap{(2)}, \bm{z}\ap{(2)}\bigr)\Bigr] = \mathbf{\Sigma}_0\bm{z} + \mathbf{\Sigma}_1\dot{\bm{z}},
\end{split}
\end{align}
where
\begin{align}
\mathbf{\Sigma}_0 \triangleq \dfrac{1}{2}\mathbf{\Sigma}_0\ap{(1)}\equiv \dfrac{1}{2}\mathbf{\Sigma}_0\ap{(2)}, \\
\mathbf{\Sigma}_1 \triangleq \dfrac{1}{2}\mathbf{\Sigma}_1\ap{(1)} \equiv \dfrac{1}{2}\mathbf{\Sigma}_1\ap{(2)}.
\end{align}
Compared to~\eqref{eq:fbSedndss}, Eq.~\eqref{eq:rheol2} has exactly the same structure as Eq.~\eqref{eq:rheol1}, and describes the bristle force solely in terms of the relative deflection and velocity $\bm{z}$ and $\dot{\bm{z}}$. Therefore, it may be concluded that, when the stiffness and damping matrices $\mathbf{\Sigma}_0$ and $\mathbf{\Sigma}_1$ are constant, Eq.~\eqref{eq:rheol1} suffices to describe both the situations depicted in Fig.~\ref{fig:LumpModel}. In rolling contact, this is generally true for bodies that are either incommensurably or comparably rigid, where, to a good approximation, the expressions for the relative sliding and rigid velocities are independent of the assumed constitutive model \cite{KinematicsMio}. In intermediate cases where the contacting bodies undergo different amounts of deformation, this no longer holds, and the kinematics becomes coupled with the material's rheological equations, even in the simplest case of pure elastic behaviours \cite{KinematicsMio}. Such complex situations are deliberately disregarded in this paper, but will be addressed in future works.

\subsubsection{Local friction model and friction dissipation}\label{sect:frictionModel}
In addition to a rheological representation of the bristle element, it is necessary to specify a local model for the friction force as a function of the sliding velocity, $\bm{f}\ped{r}(\bm{v}\ped{s})$. Whereas an excellent and exhaustive overview of analytical friction models may be found in \cite{Antali}, the approach followed in this paper is specifically inspired by those of \cite{Sorine,Tsiotras3}, and starts with important considerations about the \emph{passivity} of friction. The fundamental idea is that the normalised friction force generated during the reciprocal sliding motion between the contacting bodies should maximise the \emph{dissipation rate} $-\bm{f}\ped{r}^{\mathrm{T}}\bm{v}\ped{s}$. To this end, the following optimisation problem is formulated: 
\begin{align}\label{eq:Problem}
\begin{split}
\bm{f}\ped{r}^\star & = \argmax_{\bm{f}\ped{r} \in \mathcal{C}(\bm{v}\ped{s}(t))} -\bm{f}\ped{r}^{\mathrm{T}}\bm{v}\ped{s}(t), \quad \bm{v}\ped{s}(t) \in \mathbb{R}^2, \; t \in [0,T],
\end{split}
\end{align}
where $\mathcal{C}(\bm{v}\ped{s})$ denotes the set of admissible normalised friction forces, depending on the sliding velocity:
\begin{align}
\mathcal{C}(\bm{v}\ped{s}) & \triangleq \biggl\{\bm{f}\ped{r} \in \mathbb{R}^2 \mathrel{\bigg|} \norm{\mathbf{M}^{-1}(\bm{v}\ped{s})\bm{f}\ped{r}}_2 \leq 1\biggr\},
\end{align}
and
\begin{align}\label{eq;matrixM}
\mathbf{M}(\bm{v}\ped{s}) = \begin{bmatrix} \mu_{xx}(\bm{v}\ped{s}) &  \mu_{xy}(\bm{v}\ped{s}) \\ \mu_{xy}(\bm{v}\ped{s})  & \mu_{yy}(\bm{v}\ped{s})\end{bmatrix}
\end{align}
is a symmetric, positive definite matrix, i.e., $\mathbf{Sym}_2(\mathbb{R}) \ni \mathbf{M}(\bm{y}) \succ \mathbf{0}$ for all $\bm{y} \in \mathbb{R}^2$. In this manuscript, it is generally assumed that $\mathbf{M} \in C^0(\mathbb{R}^2;\mathbf{Sym}_2(\mathbb{R}))$. For many practical applications, and especially in isotropic conditions, $\mathbf{M}(\bm{v}\ped{s})$ is often postulated to be of the form $\mathbf{M}(\bm{v}\ped{s}) = \mu(\bm{v}\ped{s})\mathbf{I}_2$, where $\mu : \mathbb{R}^2 \to [\mu\ped{min},\infty)$, with $ \mu\ped{min} \in \mathbb{R}_{>0}$, provides an analytical expression for the friction coefficient. A very common expression for $\mu(\bm{v}\ped{s})$ is, for instance,
\begin{align}\label{eq:muExample}
\mu(\bm{v}\ped{s}) = \mu\ped{d} + (\mu\ped{s}-\mu\ped{d})\exp\Biggl(-\biggl(\dfrac{\norm{\bm{v}\ped{s}}_2}{v\ped{S}}\biggr)^{\delta\ped{S}}\Biggr)+ \mu\ped{v}(\bm{v}\ped{s}),
\end{align}
where $\mu\ped{s},\mu\ped{d} \in \mathbb{R}_{>0}$ denote the static and dynamic friction coefficients, $v\ped{S} \in \mathbb{R}_{>0}$ indicates the Stribeck velocity, $\delta\ped{S} \in \mathbb{R}_{\geq 0}$ the Stribeck exponent, and $\mu\ped{v} : \mathbb{R}^2 \to \mathbb{R}_{\geq 0}$ captures the viscous friction.

In any case, the solution to the problem described by Eqs.~\eqref{eq:Problem}-\eqref{eq;matrixM} was obtained in \cite{Sorine,Tsiotras3}, and is given as in Theorem~\ref{lemma:friction} below.
\begin{theorem}\label{lemma:friction}
The solution to Eqs.~\eqref{eq:Problem}-\eqref{eq;matrixM} is given by
\begin{align}\label{eq:fdbddsOrg}
\bm{f}\ped{r}^\star\bigl(\bm{v}\ped{s}(t)\bigr) = -\dfrac{\mathbf{M}^2\bigl(\bm{v}\ped{s}(t)\bigr)\bm{v}\ped{s}(t)}{\norm{\mathbf{M}\bigl(\bm{v}\ped{s}(t)\bigr)\bm{v}\ped{s}(t)}_2}, \quad \bm{v}\ped{s}(t) \in \mathbb{R}^2\setminus \{\bm{0}\}, \; t\in [0,T],
\end{align}
with
\begin{align}\label{eq:dissOpt}
-\bm{f}\ped{r}^\star\bigl(\bm{v}\ped{s}(t)\bigr)^{\mathrm{T}}\bm{v}\ped{s}(t) = \norm{\mathbf{M}\bigl(\bm{v}\ped{s}(t)\bigr)\bm{v}\ped{s}(t)}_2, \quad t\in [0,T].
\end{align}
\begin{proof}
See the proof of Theorem 2 in \cite{Tsiotras3}.
\end{proof}
\end{theorem}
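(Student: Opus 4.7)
The plan is to reduce the constrained optimisation~\eqref{eq:Problem} to a one-line application of the Cauchy--Schwarz inequality by a linear change of variable that straightens the ellipsoidal constraint set $\mathcal{C}(\bm{v}\ped{s})$ into the standard Euclidean unit ball. The whole argument rests on the hypothesis that, for every $\bm{v}\ped{s}\in\mathbb{R}^2$, $\mathbf{M}(\bm{v}\ped{s})$ is symmetric and positive definite, hence invertible, with $\mathbf{M}^{-1}(\bm{v}\ped{s})$ itself symmetric and positive definite.

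Concretely, I would introduce the substitution $\bm{u} \triangleq \mathbf{M}^{-1}(\bm{v}\ped{s})\bm{f}\ped{r}$, so that the admissibility condition $\bm{f}\ped{r} \in \mathcal{C}(\bm{v}\ped{s})$ becomes simply $\norm{\bm{u}}_2 \leq 1$, whilst the cost functional rewrites, exploiting the symmetry of $\mathbf{M}$, as
\begin{equation*}
-\bm{f}\ped{r}^{\mathrm{T}}\bm{v}\ped{s} = -(\mathbf{M}(\bm{v}\ped{s})\bm{u})^{\mathrm{T}}\bm{v}\ped{s} = -\bm{u}^{\mathrm{T}}\mathbf{M}(\bm{v}\ped{s})\bm{v}\ped{s}.
\end{equation*}
The problem thus reduces to maximising the linear functional $\bm{u} \mapsto -\bm{u}^{\mathrm{T}}(\mathbf{M}(\bm{v}\ped{s})\bm{v}\ped{s})$ over the closed Euclidean unit ball. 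Since $\bm{v}\ped{s} \neq \bm{0}$, positive definiteness of $\mathbf{M}$ guarantees that $\mathbf{M}(\bm{v}\ped{s})\bm{v}\ped{s} \neq \bm{0}$, so the Cauchy--Schwarz inequality yields the sharp upper bound $\norm{\mathbf{M}(\bm{v}\ped{s})\bm{v}\ped{s}}_2$, attained at the unique unit vector $\bm{u}^\star = -\mathbf{M}(\bm{v}\ped{s})\bm{v}\ped{s}/\norm{\mathbf{M}(\bm{v}\ped{s})\bm{v}\ped{s}}_2$.

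It then remains only to revert to the original variable and evaluate the optimal cost. The first step delivers $\bm{f}\ped{r}^\star = \mathbf{M}(\bm{v}\ped{s})\bm{u}^\star = -\mathbf{M}^2(\bm{v}\ped{s})\bm{v}\ped{s}/\norm{\mathbf{M}(\bm{v}\ped{s})\bm{v}\ped{s}}_2$, which is exactly Eq.~\eqref{eq:fdbddsOrg}. Substituting this expression back into the cost and using the identity $\bm{v}\ped{s}^{\mathrm{T}}\mathbf{M}^2(\bm{v}\ped{s})\bm{v}\ped{s} = \norm{\mathbf{M}(\bm{v}\ped{s})\bm{v}\ped{s}}_2^2$, which follows directly from symmetry of $\mathbf{M}$, produces Eq.~\eqref{eq:dissOpt} at once.

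There is essentially no obstacle to this argument: the only delicate point, that the maximiser is not unique at $\bm{v}\ped{s} = \bm{0}$ (where the cost functional vanishes identically on the compact set $\mathcal{C}(\bm{0})$), is already excluded from the scope of the theorem by the hypothesis $\bm{v}\ped{s} \in \mathbb{R}^2 \setminus \{\bm{0}\}$ -- physically, it is the expected set-valued nature of dry friction at rest. An essentially equivalent route, closer to the treatment of \cite{Tsiotras3}, would apply Lagrange multipliers to the squared active constraint $\norm{\mathbf{M}^{-1}\bm{f}\ped{r}}_2^2 = 1$, obtaining the stationarity condition $\bm{v}\ped{s} + 2\lambda\mathbf{M}^{-2}(\bm{v}\ped{s})\bm{f}\ped{r} = \bm{0}$, and determining the multiplier $\lambda > 0$ by enforcing the constraint; the two derivations produce the same closed-form expression.
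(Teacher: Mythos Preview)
Your proof is correct and entirely self-contained. The paper itself does not supply a proof but simply cites Theorem~2 of \cite{Tsiotras3}; your Cauchy--Schwarz argument via the change of variable $\bm{u}=\mathbf{M}^{-1}\bm{f}\ped{r}$ is a clean alternative to the Lagrange-multiplier route you correctly anticipate as being that of the reference, and in fact makes the uniqueness of the maximiser (for $\bm{v}\ped{s}\neq\bm{0}$) more transparent than the KKT approach does.
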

Equation~\eqref{eq:fdbddsOrg} provides some preliminary but useful indications to generalise the FrBD model to two-dimensional contact in a physically consistent manner. Its interpretation is that the generated friction force, as a function of the sliding velocity, should maximise the dissipated power at the interface between the contacting bodies. However, it is worth observing that Eq.~\eqref{eq:fdbddsOrg} is not defined for $\bm{v}\ped{s} = \bm{0}$. For this reason, but also to facilitate the implementation of numerical algorithms \cite{Rill}, it may be convenient to replace Eq.~\eqref{eq:fdbddsOrg} with the following formula:
 \begin{align}\label{eq:frModified}
\bm{f}\ped{r}^\star(\bm{v}\ped{s}) =- \dfrac{\mathbf{M}^2(\bm{v}\ped{s})\bm{v}\ped{s}}{\norm{\mathbf{M}(\bm{v}\ped{s})\bm{v}\ped{s}}_{2,\varepsilon}},
\end{align}
where $\varepsilon \in \mathbb{R}_{\geq 0}$ represents a regularisation parameter, and $\norm{\cdot}_{2,\varepsilon} \in C^0(\mathbb{R}^2;\mathbb{R}_{\geq 0})$ is a regularisation of the Euclidean norm $\norm{\cdot}_2$ for $\varepsilon\in \mathbb{R}_{>0}$, often converging uniformly to $\norm{\cdot}_2$ in $C^0(\mathbb{R}^2;\mathbb{R}_{\geq 0})$ for $\varepsilon \to 0$ (e.g., $\norm{\bm{y}}_{2,\varepsilon }= \sqrt{\norm{\bm{y}}_2^2 +\varepsilon}$), and with $\norm{\cdot}_{2,\varepsilon} \in C^1(\mathbb{R}^2;\mathbb{R}_{\geq 0})$ for $\varepsilon \in \mathbb{R}_{>0}$. 

Before moving to the next Sect.~\ref{sect:DynamicDer}, an important consideration is formalised in Remark~\ref{remark:1} below.

\begin{remark}\label{remark:1}
If $\mathbf{M}(\bm{v}\ped{s}) = \mu(\bm{v}\ped{s})\mathbf{I}_2$, as it happens in the isotropic case, then Eq.~\eqref{eq:fdbddsOrg} reduces to
\begin{align}\label{eq:fdbddsOrg0}
\bm{f}\ped{r}^\star(\bm{v}\ped{s}) = -\mu(\bm{v}\ped{s})\dfrac{\bm{v}\ped{s}}{\norm{\bm{v}\ped{s}}_{2}}, \quad \bm{v}\ped{s} \in \mathbb{R}^2\setminus\{\bm{0}\}.
\end{align}
In this case, Eq.~\eqref{eq:frModified} may be modified accordingly as
\begin{align}\label{eq:frIsotrEps}
\bm{f}\ped{r}^\star(\bm{v}\ped{s}) = -\mu(\bm{v}\ped{s})\dfrac{\bm{v}\ped{s}}{\norm{\bm{v}\ped{s}}_{2,\varepsilon}},
\end{align}
which is the simplest two-dimensional counterpart to Eq. (4) in \cite{FrBD}. In this context, it is worth emphasising that, for $\varepsilon \in \mathbb{R}_{>0}$, Eq.~\eqref{eq:frIsotrEps} does not follow directly from~\eqref{eq:frModified}, and must be obtained by regularising~\eqref{eq:fdbddsOrg} \emph{a posteriori}. The norm $\norm{\cdot}_2$ in Eq.~\eqref{eq:muExample} may be similarly regularised.
\end{remark}

\subsection{Dynamic friction model derivation}\label{sect:DynamicDer}
Section~\ref{sect:rhelANdFr} introduced the two main modelling ingredients needed to extend the dynamic FrBD friction model to the two-dimensional contact case. Theorem~\ref{thm:Theorem1} below provides the final, analytical building block.

\begin{theorem}[Edwards \cite{Edwards}]\label{thm:Theorem1}
Suppose that the mapping $\bm{H} : \mathbb{R}^{m+n}\to \mathbb{R}^n$ is $C^1$ in a neighbourhood of a point $(\bm{x}^\star,\bm{y}^\star)$, where $\bm{H}(\bm{x}^\star,\bm{y}^\star) = \bm{0}$. If the Jacobian matrix $\nabla_{\bm{y}}\bm{H}(\bm{x}^\star,\bm{y}^\star)^{\mathrm{T}}$ is nonsingular, there exist a neighbourhood $\mathcal{X}$ of $\bm{x}^\star$ in $\mathbb{R}^m$, a neighbourhood $\mathcal{Y}$ of $(\bm{x}^\star,\bm{y}^\star)$ in $\mathbb{R}^{m+n}$, and a mapping $\bm{h} \in C^1(\mathcal{X};\mathbb{R}^n)$ such that $\bm{y} = \bm{h}(\bm{x})$ solves the equation $\bm{H}(\bm{y},\bm{x}) = \bm{0}$ in $\mathcal{Y}$. 
In particular, the implicitly defined mapping $\bm{h}(\cdot)$ is the limit of the sequence $\{\bm{h}_k\}_{ k\in \mathbb{N}_0}^\infty$ of the successive approximations inductively defined by
\begin{subequations}
\begin{align}
\bm{h}_{k+1}(\bm{x}) & = \bm{h}_k(\bm{x}) - \nabla_{\bm{y}}\bm{H}(\bm{x}^\star,\bm{y}^\star)^{-\mathrm{T}}\bm{H}\bigl(\bm{x},\bm{h}_k(\bm{x})\bigr), \\
 \bm{h}_0(\bm{x}) & = \bm{y}^\star,
\end{align}
\end{subequations}
for $\bm{x} \in \mathcal{X}$.
\end{theorem}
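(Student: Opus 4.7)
The plan is to prove this as the classical Implicit Function Theorem via the Banach Fixed-Point Theorem, using the simplified Newton iteration stated in the theorem itself. The core idea is to recognise that the prescribed successive approximations are exactly the Picard iterates of a carefully designed contraction operator, so the problem reduces to verifying (i) the contraction property, (ii) the stability of the iterates (i.e., that they stay within a suitable neighbourhood), and (iii) the regularity of the resulting limit function.

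\textbf{Step 1: Reformulation as a fixed-point problem.} Set $\mathbf{A} \triangleq \nabla_{\bm{y}}\bm{H}(\bm{x}^\star,\bm{y}^\star)^{-\mathrm{T}}$, which is well defined by hypothesis, and, for each $\bm{x}$ in a suitable neighbourhood of $\bm{x}^\star$, define
\begin{equation*}
\bm{T}_{\bm{x}}(\bm{y}) \triangleq \bm{y} - \mathbf{A}\,\bm{H}(\bm{x},\bm{y}).
\end{equation*}
Since $\mathbf{A}$ is nonsingular, fixed points of $\bm{T}_{\bm{x}}$ coincide with zeros of $\bm{H}(\bm{x},\cdot)$. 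Note also that the recursion in the statement is precisely $\bm{h}_{k+1}(\bm{x}) = \bm{T}_{\bm{x}}\bigl(\bm{h}_k(\bm{x})\bigr)$ with $\bm{h}_0(\bm{x}) = \bm{y}^\star$.

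\textbf{Step 2: Contraction property.} Differentiating, one obtains $\nabla_{\bm{y}}\bm{T}_{\bm{x}}(\bm{y})^{\mathrm{T}} = \mathbf{I}_n - \mathbf{A}\,\nabla_{\bm{y}}\bm{H}(\bm{x},\bm{y})^{\mathrm{T}}$, which vanishes at $(\bm{x}^\star,\bm{y}^\star)$ by construction. Invoking the $C^1$-smoothness of $\bm{H}$, there exist closed balls $\overline{B_r(\bm{x}^\star)}\subset \mathbb{R}^m$ and $\overline{B_\rho(\bm{y}^\star)} \subset \mathbb{R}^n$ on which $\bigl\|\nabla_{\bm{y}}\bm{T}_{\bm{x}}(\bm{y})\bigr\| \leq \tfrac{1}{2}$. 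Applying the mean value inequality then yields
\begin{equation*}
\bigl\|\bm{T}_{\bm{x}}(\bm{y}_1)-\bm{T}_{\bm{x}}(\bm{y}_2)\bigr\|_2 \leq \tfrac{1}{2}\|\bm{y}_1-\bm{y}_2\|_2, \qquad \bm{y}_1,\bm{y}_2 \in \overline{B_\rho(\bm{y}^\star)},
\end{equation*}
uniformly in $\bm{x} \in \overline{B_r(\bm{x}^\star)}$.

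\textbf{Step 3: Invariance of the ball.} Continuity of $\bm{H}$ and $\bm{H}(\bm{x}^\star,\bm{y}^\star) = \bm{0}$ let one shrink $r$ so that $\|\mathbf{A}\,\bm{H}(\bm{x},\bm{y}^\star)\|_2 \leq \rho/2$ for all $\bm{x} \in \overline{B_r(\bm{x}^\star)}$. Combining this with the contraction bound gives, for $\bm{y} \in \overline{B_\rho(\bm{y}^\star)}$,
\begin{equation*}
\|\bm{T}_{\bm{x}}(\bm{y})-\bm{y}^\star\|_2 \leq \|\bm{T}_{\bm{x}}(\bm{y})-\bm{T}_{\bm{x}}(\bm{y}^\star)\|_2 + \|\bm{T}_{\bm{x}}(\bm{y}^\star)-\bm{y}^\star\|_2 \leq \tfrac{1}{2}\rho + \tfrac{1}{2}\rho = \rho,
\end{equation*}
so $\bm{T}_{\bm{x}}$ maps $\overline{B_\rho(\bm{y}^\star)}$ into itself. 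Choosing $\mathcal{X} = B_r(\bm{x}^\star)$ and $\mathcal{Y} = \mathcal{X}\times B_\rho(\bm{y}^\star)$ supplies the required neighbourhoods.

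\textbf{Step 4: Application of Banach and identification of the iterates.} On the complete metric space $\overline{B_\rho(\bm{y}^\star)}$, Banach's Theorem guarantees a unique fixed point $\bm{h}(\bm{x})$ of $\bm{T}_{\bm{x}}$, and the Picard iterates $\bm{h}_k(\bm{x})$ converge to it geometrically. Uniformity of the contraction rate in $\bm{x}$ gives continuity of $\bm{h}$, and uniqueness in $\mathcal{Y}$ precludes additional solutions.

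\textbf{Step 5: $C^1$-regularity of $\bm{h}$.} This is the delicate part and is expected to be the main obstacle. The approach is to shrink the neighbourhoods further so that $\nabla_{\bm{y}}\bm{H}(\bm{x},\bm{y})$ remains invertible on $\mathcal{Y}$ (which is possible since invertibility is an open condition and $\bm{H}\in C^1$). Using the identity $\bm{H}(\bm{x},\bm{h}(\bm{x})) = \bm{0}$ and the $C^1$ regularity of $\bm{H}$, a standard argument -- subtracting $\bm{H}(\bm{x}_1,\bm{h}(\bm{x}_1)) - \bm{H}(\bm{x}_2,\bm{h}(\bm{x}_2)) = \bm{0}$ and applying the mean value theorem -- shows that $\bm{h}$ is differentiable with
\begin{equation*}
\nabla \bm{h}(\bm{x})^{\mathrm{T}} = -\,\nabla_{\bm{y}}\bm{H}\bigl(\bm{x},\bm{h}(\bm{x})\bigr)^{-\mathrm{T}}\,\nabla_{\bm{x}}\bm{H}\bigl(\bm{x},\bm{h}(\bm{x})\bigr)^{\mathrm{T}}.
\end{equation*}
Continuity of this expression in $\bm{x}$ (through continuity of $\bm{h}$ and of the partial Jacobians) finally delivers $\bm{h}\in C^1(\mathcal{X};\mathbb{R}^n)$, completing the proof.
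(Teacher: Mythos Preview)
Your proof is correct and follows the classical Banach fixed-point route that the constructive formulation of the theorem is designed to highlight. Note, however, that the paper does not supply its own proof of this statement: the theorem is quoted verbatim from Edwards' textbook and used as a black box, so there is no in-paper argument to compare against. Your write-up is precisely the argument one finds in Edwards (and most advanced-calculus texts): recast $\bm{H}(\bm{x},\bm{y})=\bm{0}$ as a fixed-point equation for the simplified Newton map $\bm{T}_{\bm{x}}$, exploit $C^1$-smoothness to obtain a uniform contraction on a product of balls, invoke Banach, and then recover $C^1$-regularity of the implicit map from the standard Jacobian formula. Nothing is missing.
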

Starting with Eqs.~\eqref{eq:rheol1} and~\eqref{eq:frModified}, Theorem~\eqref{thm:Theorem1} may be sapiently invoked to generalise the FrBD model to the two-dimensional contact case. The fundamental observation is that, in the absence of inertial effects, the net bristle force $\bm{f}(\dot{\bm{z}},\bm{z})$ should be equal to the frictional one $\bm{f}\ped{r}(\bm{v}\ped{s}(\dot{\bm{z}},\bm{v}\ped{r}))$ generated at the bodies' interface, that is, $\bm{f}(\dot{\bm{z}},\bm{z}) = \bm{f}\ped{r}(\bm{v}\ped{s}(\dot{\bm{z}},\bm{v}\ped{r}))$. Therefore, adopting the notation of Theorem~\ref{thm:Theorem1} and equating Eqs.~\eqref{eq:rheol1} and~\eqref{eq:frModified} gives
\begin{align}\label{eq:H}
\bm{H}(\dot{\bm{z}}, \bm{z}, \bm{v}\ped{r}) = \bm{f}(\dot{\bm{z}}, \bm{z}) - \bm{f}\ped{r}\bigl(\bm{v}\ped{s}(\dot{\bm{z}},\bm{v}\ped{r})\bigr) = \bm{0}, \quad t \in (0,T),
\end{align}
which is an implicit nonlinear ODE for the bristle dynamics $\bm{z}(t)$. In the sliding regime, where $\norm{\dot{\bm{z}}}_2 \ll \norm{\bm{v}\ped{r}}_2$, and for sufficiently smooth $\bm{H}(\cdot,\cdot,\cdot)$, Eq.~\eqref{eq:H} may be approximated by invoking Theorem~\ref{thm:Theorem1} with $\bm{x} = (\bm{z},\bm{v}\ped{r})$ and $\bm{y} = \dot{\bm{z}}$, yielding
\begin{align}\label{eq:zk01}
\dot{\bm{z}}_{k+1} = \dot{\bm{z}}_k-\nabla_{\bm{\dot{z}}}\bm{H}(\dot{\bm{z}}^\star,\bm{z}^\star,\bm{v}\ped{r}^\star)^{-\mathrm{T}}\bm{H}(\dot{\bm{z}}_k, \bm{z}, \bm{v}\ped{r}), \quad k \in \mathbb{N}_0.
\end{align}
In turn, specifying $\bm{f}(\dot{\bm{z}},\bm{z})$ and $\bm{f}\ped{r}(\bm{v}\ped{s}) = \bm{f}\ped{r}^\star(\bm{v}\ped{s})$ as in Eqs.~\eqref{eq:rheol1} and~\eqref{eq:frModified} leads to
\begin{align}\label{eq:nablaH}
\begin{split}
\nabla_{\bm{\dot{z}}}\bm{H}(\dot{\bm{z}},\bm{z},\bm{v}\ped{r})^{\mathrm{T}} & = \nabla_{\dot{\bm{z}}}\bm{f}(\dot{\bm{z}},\bm{z})^{\mathrm{T}} - \nabla_{\dot{\bm{z}}}\bm{v}\ped{s}(\dot{\bm{z}},\bm{v}\ped{r})^{\mathrm{T}}\nabla_{\bm{v}\ped{s}}\bm{f}\ped{r}\bigl(\bm{v}\ped{s}(\dot{\bm{z}},\bm{v}\ped{r})\bigr)^{\mathrm{T}} \\
 & = \nabla_{\dot{\bm{z}}}\bm{f}(\dot{\bm{z}},\bm{z})^{\mathrm{T}} - \nabla_{\bm{v}\ped{s}}\bm{f}\ped{r}\bigl(\bm{v}\ped{s}(\dot{\bm{z}},\bm{v}\ped{r})\bigr)^{\mathrm{T}}  \approx \nabla_{\dot{\bm{z}}}\bm{f}(\dot{\bm{z}},\bm{z})^{\mathrm{T}} + \dfrac{\mathbf{M}^2\bigl(\bm{v}\ped{s}(\dot{\bm{z}},\bm{v}\ped{r})\bigr)}{\norm{\mathbf{M}\bigl(\bm{v}\ped{s}(\dot{\bm{z}},\bm{v}\ped{r})\bigr)\bm{v}\ped{s}(\dot{\bm{z}},\bm{v}\ped{r})}_{2,\varepsilon}},
\end{split}
\end{align}
where the approximation committed in the last line is informally justified by the fact that the friction force $\bm{f}\ped{r}(\bm{v}\ped{s})$ may not be defined for $\bm{v}\ped{s} = \bm{0}$ (for instance, when $\varepsilon = 0$ in~\eqref{eq:frModified}). Moreover, owing to the assumption $\mathbf{\Sigma}_0 \succ \mathbf{0}$, it may be easily inferred that, in steady-state conditions ($\dot{\bm{z}} = \bm{0}$), there exist a unique solution $\bm{z}^\star = \bm{z}^\star(\bm{v}\ped{r})$ to Eq.~\eqref{eq:H} for all $\bm{v}\ped{r}\in \mathbb{R}^2$. Consequently, combining Eq.~\eqref{eq:zk01} and~\eqref{eq:nablaH} with $(\dot{\bm{z}}^\star, \bm{z}^\star, \bm{v}\ped{r}^\star) = (\bm{0},\bm{z}^\star(\bm{v}\ped{r}),\bm{v}\ped{r})$, and truncating Eq.~\eqref{eq:zk01} at $k=1$ provides, for an initial guess $\dot{\bm{z}}_0 = \bm{0}$, 
\begin{subequations}\label{eq:ODEModel}
\begin{align}
& \dot{\bm{z}}(t) = -\mathbf{G}^{-1}\bigl(\bm{v}\ped{r}(t)\bigr)\biggl[\mathbf{\Sigma}_0\norm{\mathbf{M}\bigl(\bm{v}\ped{r}(t)\bigr)\bm{v}\ped{r}(t)}_{2,\varepsilon}\bm{z}(t)+\mathbf{M}^2\bigl(\bm{v}\ped{r}(t)\bigr)\bm{v}\ped{r}(t)\biggr], \quad t \in (0,T), \\
& \bm{z}(0) = \bm{z}_0.
\end{align}
\end{subequations}
with $\mathbf{G} : \mathbb{R}^2 \to \mathbf{Sym}_2(\mathbb{R})$ according to
\begin{align}\label{eq:MG}
\mathbf{G}(\bm{v}\ped{r}) = \mathbf{\Sigma}_1\norm{\mathbf{M}(\bm{v}\ped{r})\bm{v}\ped{r}}_{2,\varepsilon} + \mathbf{M}^2(\bm{v}\ped{r}).
\end{align}
Equations~\eqref{eq:ODEModel} and~\eqref{eq:MG} describe the \emph{two-dimensional FrBD model}. In the sequel, it will be particularised to describe various distributed contact phenomena, including rolling and spinning. Before moving to Sect.~\ref{sect:models}, it is beneficial to draw some preliminary considerations about the mathematical structure of the FrBD model, as well as on its dissipative nature. Some preliminary intuition may be gained by first considering the stationary solution to Eq.~\eqref{eq:ODEModel}, which reads
\begin{align}\label{eq:zStat}
\bm{z}(\bm{v}\ped{r}) = -\mathbf{\Sigma}_0^{-1}\dfrac{\mathbf{M}^2(\bm{v}\ped{r}\bigr)\bm{v}\ped{r}}{\norm{\mathbf{M}(\bm{v}\ped{r})\bm{v}\ped{r}}_{2,\varepsilon}}.
\end{align}
The corresponding stationary bristle force, obtained by setting $\dot{\bm{z}} = \bm{0}$ in Eq.~\eqref{eq:rheol1}, reads
\begin{align}\label{eq:fStat}
\bm{f}(\bm{v}\ped{r}) = \bm{f}\ped{r}(\bm{v}\ped{r})  = -\dfrac{\mathbf{M}^2(\bm{v}\ped{r}\bigr)\bm{v}\ped{r}}{\norm{\mathbf{M}(\bm{v}\ped{r})\bm{v}\ped{r}}_{2,\varepsilon}}.
\end{align}
Equation~\eqref{eq:fStat} shows that, under steady-state conditions ($\dot{\bm{z}}(t) = \bm{0}$), the FrBD model can reproduce arbitrary static friction characteristics. It is worth noting that, unlike the LuGre model, a minus sign appears in Eqs.~\eqref{eq:zStat} and~\eqref{eq:fStat}. This difference is consistent with the adopted sign convention and with the formulation proposed in \cite{Antali}.
More generally, the ODE~\eqref{eq:ODEModel} governs the evolution of the bristle dynamics in the vicinity of sliding, that is, for relatively small deformation velocities $\dot{\bm{z}} \approx \bm{0}$ and $\bm{v}\ped{s}(\dot{\bm{z}},\bm{v}\ped{r}) \approx \bm{v}\ped{r}$.

The second observation pertains the passivity properties of the model. More specifically, regarding $\bm{v}\ped{s}(\dot{\bm{z}},\bm{v}\ped{r})$ and $\bm{f}(\dot{\bm{z}},\bm{z})$ as the input and output, respectively, and combining Eqs.~\eqref{eq:rheol1} and~\eqref{eq:dissOpt} yields
\begin{align}
\begin{split}
-\bm{f}\ped{r}^\star\bigl(\bm{v}\ped{s}(\dot{\bm{z}},\bm{v}\ped{r})\bigr)^{\mathrm{T}}\bm{v}\ped{s}(\dot{\bm{z}},\bm{v}\ped{r}) =- \bm{f}(\dot{\bm{z}},\bm{z})^{\mathrm{T}}\bm{v}\ped{s}(\dot{\bm{z}},\bm{v}\ped{r}) = -\bm{z}^{\mathrm{T}}\mathbf{\Sigma}_0\bm{v}\ped{s}(\dot{\bm{z}},\bm{v}\ped{r}) - \dot{\bm{z}}^{\mathrm{T}}\mathbf{\Sigma}_1\bm{v}\ped{s}(\dot{\bm{z}},\bm{v}\ped{r})  \geq 0.
\end{split}
\end{align}
Recalling Eq.~\eqref{eq:slidingS} and rearranging the above inequality also gives
\begin{align}\label{eq:dissipationRateRigid}
-\bm{f}(\dot{\bm{z}},\bm{z})^{\mathrm{T}}\bm{v}\ped{r} = - \bm{z}^{\mathrm{T}}\mathbf{\Sigma}_0\bm{v}\ped{r} - \dot{\bm{z}}^{\mathrm{T}}\mathbf{\Sigma}_1\bm{v}\ped{r} \geq \bm{z}^{\mathrm{T}}\mathbf{\Sigma}_0\dot{\bm{z}} + \dot{\bm{z}}^{\mathrm{T}}\mathbf{\Sigma}_1\dot{\bm{z}}.
\end{align}
Since $\mathbf{\Sigma}_1 \succeq \mathbf{0}$ by assumption, the last term appearing in Eq.~\eqref{eq:dissipationRateRigid} is always nonnegative. However, the sign of $\bm{z}^{\mathrm{T}}\mathbf{\Sigma}_0\dot{\bm{z}}$ is generally undetermined. Therefore, Eq.~\eqref{eq:dissipationRateRigid} is eloquent in telling that, if $\bm{v}\ped{r}$ and $\bm{f}(\dot{\bm{z}},\bm{z})$ are regarded as the input and output of the model, the \emph{rigid dissipation rate} $-\bm{f}(\dot{\bm{z}},\bm{z})^{\mathrm{T}}\bm{v}\ped{r}$ need not to be nonnegative. In other words, when $\bm{v}\ped{r}$ is regarded as an external input, passivity does not automatically hold. However, the dynamic FrBD formulation described by Eqs.~\eqref{eq:ODEModel} and~\eqref{eq:MG} discards Eq.~\eqref{eq:fdbddsOrg}. Consequently, the friction force is fully determined by the rheological model~\eqref{eq:rheol1}. In this setting, the notion of sliding velocity is eliminated entirely, and passivity and dissipativity must instead be assessed with the rigid velocity $\bm{v}\ped{r}$ interpreted as an input. As shown in Sect.~\ref{sect:diss}, the proposed formulation maintains passivity under virtually all circumstances. This contrasts with the LuGre formulation, which requires introducing velocity-dependent damping coefficients to enforce passivity. Nonetheless, within the dynamical FrBD framework, these properties seem to be mathematical consequences of the sufficiently accurate approximation of Eq.~\eqref{eq:H} obtained via Theorem~\ref{thm:Theorem1}, rather than strict physical requirements demanding strenuous elucubration.

Finally, it may be noted that, by disregarding the damping contribution in Eq.~\eqref{eq:fxs}, the FrBD model reduces to the two-dimensional formulation of the LuGre model presented in \cite{Tsiotras3}. For sliding systems, the term $\mathbf{\Sigma}_1\norm{\mathbf{M}(\bm{v}\ped{r})\bm{v}\ped{r}}_{2,\varepsilon} $ may not be negligible in general, whereas, in the context of rolling contact phenomena, the matrix $\mathbf{\Sigma}_1$ is often observed to be small \cite{TsiotrasConf,Tsiotras1,Tsiotras3,Deur0,Deur1,Deur2}.

\section{Distributed rolling contact models}\label{sect:models}
The present section introduces three rolling contact models with different degrees of complexity that can be derived from Eq.~\eqref{eq:ODEModel}. Specifically, the next Sect.~\ref{sect:gener} discusses some generalities, whereas the three formulations are detailed in Sects.~\ref{sect:modelRollSimple},~\ref{sect:semilinear}, and~\ref{sect:LinearLargeSpin}.


\subsection{Generalities}\label{sect:gener}
Before generalising the novel FrBD model to two-dimensional rolling contact, three important aspects that deserve a proper discussion concern the formulation of Eq.~\eqref{eq:ODEModel} as a PDE, following the Eulerian approach, the prescription of appropriate \emph{boundary conditions} (BCs), and the calculation of the tangential forces and vertical moment. These are addressed respectively in Sects.~\ref{sect:eulerian},~\ref{sect:BC}, and~\ref{ect:forces}.

\subsubsection{Eulerian approach and transport velocity}\label{sect:eulerian}
As illustrated in Fig.~\ref{fig:RollingBodies}, the rolling contact problem between two bodies is typically studied in a contact-fixed reference frame $(O;x,y,z)$, with the $x$-axis (longitudinal) oriented along the main rolling direction, the $z$-axis (vertical) pointing into one of the two bodies (often the lower one), and the $y$-axis (lateral) defined to complete a right-handed coordinate system. The origin $O$ coincides with the centroid of the (apparent), possibly time-varying, contact area $\mathscr{C}(t)$, which is often supposed to be independent of the tangential (longitudinal and lateral) interactions between the two bodies, and solely determined by the normal contact configuration \cite{KalkerBook}. Both the situations of reciprocal rolling, as depicted in Fig.~\ref{fig:RollingBodies}(a), and translational and rolling contact, as shown in Fig.~\ref{fig:RollingBodies}(b), may be considered: in the first case, the shape of the contact area may vary over time, but the origin $O$ is typically fixed; in the second scenario, the reference frame moves together with one of the two bodies (usually, the upper one).
\begin{figure}
\centering
\includegraphics[width=1\linewidth]{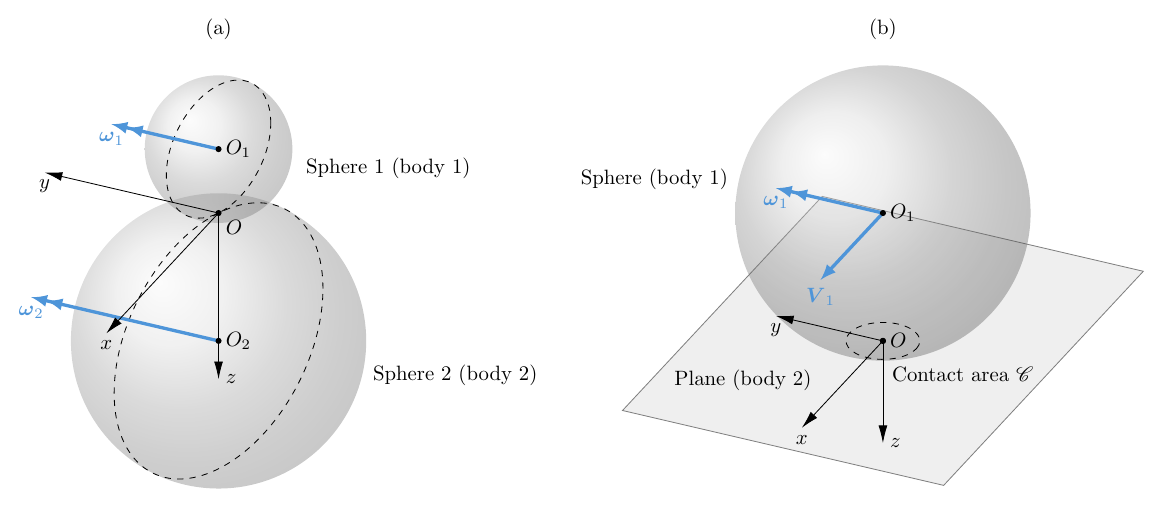} 
\caption{Rolling contact problem between: (a) two spheres with angular velocities $\bm{\omega}_1, \bm{\omega}_2 \in \mathbb{R}^3$; (b) a sphere translating and rolling over a stationary plane, where $\bm{V}_1\in \mathbb{R}^3$ denotes the translational velocity of its centre, and $\bm{\omega}_1\in \mathbb{R}^3$ its rolling velocity.}
\label{fig:RollingBodies}
\end{figure}
In this paper, the contact area is assumed to be a regular closed, compact subset of $\mathbb{R}^2$, that is, $\mathscr{C}(t)\subset \mathbb{R}^2$, with boundary $\partial \mathscr{C}(t)$ and interior $\mathring{\mathscr{C}}(t)$. Inside the contact area, the displacement of a bristle element will depend on its position, that is, $\bm{z}(t) = \bm{z}(\bm{x},t) = [z_x(\bm{x},t)\; z_y(\bm{x},t)]^{\mathrm{T}}$, with $\bm{x} \in \mathscr{C}(t)$. According to the Eulerian approach, the total time derivative appearing in Eq.~\eqref{eq:ODEModel} becomes
\begin{align}\label{eq:Eulerian}
\dot{\bm{z}}(\bm{x},t) = \dod{\bm{z}(\bm{x},t)}{t} = \dpd{\bm{z}(\bm{x},t)}{t} + \bigl(\bm{V}(\bm{x},t)\cdot\nabla_{\bm{x}}\bigr)\bm{z}(\bm{x},t),
\end{align}
where $\mathbb{R}^2 \ni \bm{V}(\bm{x},t) = [V_x(\bm{x},t) \; V_y(\bm{x},t)]^{\mathrm{T}}$ denotes the transport velocity, and $\mathbb{R}^2 \ni \nabla_{\bm{x}} \triangleq [\pd{}{x}\; \pd{}{y}]^{\mathrm{T}}$ is the tangential gradient. Moreover, considering the \emph{rolling speed} $[V\ped{min}, V\ped{max}] \ni V\ped{r}(t) \triangleq \norm{\bm{V}(\bm{0},t)}_2$, with $0 < V\ped{min} \leq V\ped{max}$, it is customary to define the \emph{travelled distance} $\mathbb{R}_{\geq 0} \ni s \triangleq \int_0^t V\ped{r}(t^\prime) \dif t^\prime$, so that Eq.~\eqref{eq:ODEModel} may be recast as
\begin{subequations}\label{eq:PDenoMOdel09}
\begin{align}\label{eq:PDenoMOdel}
& \dpd{\bm{z}(\bm{x},s)}{s} + \bigl(\bar{\bm{V}}(\bm{x},s)\cdot\nabla_{\bm{x}}\bigr)\bm{z}(\bm{x},s) = \mathbf{\Sigma}\bigl(\bar{\bm{v}}\ped{r}(\bm{x},s),s\bigr)\bm{z}(\bm{x},s) + \bm{h}\bigl(\bar{\bm{v}}\ped{r}(\bm{x},s),s\bigr), \quad \bm{x} \in \mathring{\mathscr{C}}(s), \; s \in (0,S), \\
& \bm{z}(\bm{x},0) = \bm{z}_0(\bm{x}), \quad \bm{x}\in \mathring{\mathscr{C}}_0,\label{eq:PDenoMOdelIC}
\end{align}
\end{subequations}
where $\mathring{\mathscr{C}}_0 \triangleq \mathring{\mathscr{C}}(0)$, $\mathbb{R}_{>0}\ni S \triangleq \int_0^T V\ped{r}(t)\dif t$, $\mathbb{R}^2 \ni \bar{\bm{V}}(\bm{x},s) = [\bar{V}_x(\bm{x},s)\; \bar{V}_y(\bm{x},s)]^{\mathrm{T}} \triangleq \bm{V}(\bm{x},s)/V\ped{r}(s)$, $\mathbb{R}^2 \ni \bar{\bm{v}}\ped{r}(\bm{x},s) = [\bar{v}_{\textnormal{r}x}(\bm{x},s)\; \bar{v}_{\textnormal{r}y}(\bm{x},s)]^{\mathrm{T}} \triangleq \bm{v}\ped{r}(\bm{x},s)/V\ped{r}(s)$, and the functions $\mathbf{\Sigma} : \mathbb{R}^2\times\mathbb{R}_{\geq 0} \to \mathbf{M}_2(\mathbb{R})$ and $\bm{h}: \mathbb{R}^2\times\mathbb{R}_{\geq 0} \to \mathbb{R}^2$ are given by
\begin{subequations}\label{eq:Hfunct}
\begin{align}
\mathbf{\Sigma}(\bar{\bm{v}}\ped{r},s) & \triangleq -\dfrac{1}{V\ped{r}(s)}\mathbf{G}^{-1}\bigl(V\ped{r}(s)\bar{\bm{v}}\ped{r}\bigr)\mathbf{\Sigma}_0\norm{\mathbf{M}\bigl(V\ped{r}(s)\bar{\bm{v}}\ped{r}\bigr)V\ped{r}(s)\bar{\bm{v}}\ped{r}}_{2,\varepsilon}, \\
\bm{h}(\bar{\bm{v}}\ped{r},s) & \triangleq \mathbf{H}(\bar{\bm{v}}\ped{r},s)\bar{\bm{v}}\ped{r},
\end{align}
\end{subequations}
with $\mathbf{H} : \mathbb{R}^2 \times \mathbb{R}_{\geq 0} \to \mathbf{M}_2(\mathbb{R})$ reading 
\begin{align}\label{eq:matH}
\mathbf{H}(\bar{\bm{v}}\ped{r},s) \triangleq - \mathbf{G}^{-1}\bigl(V\ped{r}(s)\bar{\bm{v}}\ped{r}\bigr)\mathbf{M}^2\bigl(V\ped{r}(s)\bar{\bm{v}}\ped{r}\bigr).
\end{align}
Equation~\eqref{eq:PDenoMOdel} is now a PDE, and more specifically a \emph{transport equation}. In addition to the \emph{initial condition} (IC)~\eqref{eq:PDenoMOdelIC}, it needs to be supplemented with a BC, as discussed next in Sect.~\ref{sect:BC}.

\subsubsection{Boundary condition (BC)}\label{sect:BC}
In order to formulate an appropriate BC for the PDE~\eqref{eq:PDenoMOdel}, the following sets are introduced:
\begin{subequations}\label{eq:Inflow0Outflow}
\begin{align}
\mathscr{L}(s) &\triangleq \partial \mathscr{C}_{-}(s)= \Bigl\{\bm{x}\in \partial \mathscr{C}(s) \mathrel{\Big|}\bigl[\bar{\bm{V}}(\bm{x},s)-\bar{\bm{V}}_{\partial \mathscr{C}}(\bm{x},s)\bigr] \cdot \hat{\bm{n}}_{\partial \mathscr{C}}(\bm{x},s) < 0\Bigr\}, \\
\mathscr{N}(s) &\triangleq \partial \mathscr{C}_{0}(s) = \Bigl\{\bm{x}\in \partial \mathscr{C}(s) \mathrel{\Big|} \bigl[\bar{\bm{V}}(\bm{x},s)-\bar{\bm{V}}_{\partial \mathscr{C}}(\bm{x},s)\bigr] \cdot \hat{\bm{n}}_{\partial \mathscr{C}}(\bm{x},s) = 0\Bigr\}, \\
\mathscr{T}(s) &\triangleq \partial \mathscr{C}_{+}(s)= \Bigl\{\bm{x}\in \partial \mathscr{C}(s) \mathrel{\Big|} \bigl[\bar{\bm{V}}(\bm{x},s)-\bar{\bm{V}}_{\partial \mathscr{C}}(\bm{x},s)\bigr] \cdot \hat{\bm{n}}_{\partial \mathscr{C}}(\bm{x},s) > 0\Bigr\}, 
\end{align}
\end{subequations}
being $\hat{\bm{n}}_{\partial \mathscr{C}}(\bm{x},s)\in \mathbb{R}^2$ the outward unit normal to $\partial \mathscr{C}(s)$, and $\bar{\bm{V}}_{\partial \mathscr{C}}(\bm{x},s) \in \mathbb{R}^2$ its nondimensional velocity. In the literature, the portions $\mathscr{L}(s)$, $\mathscr{N}(s)$, and $\mathscr{T}(s)$ of the boundary are traditionally referred to as the \emph{leading}, \emph{neutral}, and \emph{trailing edge}, respectively.

If the contact area is fixed, that is $\mathscr{C}(s) = \mathscr{C}\equiv \mathscr{C}_0 $ and $\bar{\bm{V}}_{\partial \mathscr{C}}(\bm{x},s) = \bm{0}$, the mathematical definitions in~\eqref{eq:Inflow0Outflow} simplify to
\begin{subequations}\label{eq:Inflow0Outflow0}
\begin{align}
\mathscr{L}(s) &\triangleq \partial \mathscr{C}_{-}(s)= \bigl\{\bm{x}\in \partial \mathscr{C} \mathrel{\big|} \bar{\bm{V}}(\bm{x},s) \cdot \hat{\bm{n}}_{\partial \mathscr{C}}(\bm{x}) < 0\bigr\}, \\
\mathscr{N}(s) &\triangleq \partial \mathscr{C}_{0}(s)= \bigl\{\bm{x}\in \partial \mathscr{C} \mathrel{\big|} \bar{\bm{V}}(\bm{x},s) \cdot \hat{\bm{n}}_{\partial \mathscr{C}}(\bm{x}) = 0\bigr\},  \\
\mathscr{T}(s) &\triangleq \partial \mathscr{C}_{+}(s)= \bigl\{\bm{x}\in \partial \mathscr{C} \mathrel{\big|} \bar{\bm{V}}(\bm{x},s) \cdot \hat{\bm{n}}_{\partial \mathscr{C}}(\bm{x}) > 0\bigr\}, 
\end{align}
\end{subequations}
For a transport equation like Eq.~\eqref{eq:PDenoMOdel}, the natural BC should be prescribed at the inflow boundary, that is, on the leading edge $\mathscr{L}(s)$, for all $s \in (0,S)$. In particular, enforcing $\bm{z}(\bm{x},s) = \bm{0}$, ensures that the bristles enter the contact area undeformed. For purely elastic materials ($\mathbf{\Sigma}_1 = \mathbf{0}$ in Eq.~\eqref{eq:rheol1}), this condition arises naturally from the continuity of the stresses when transitioning from the free portions of the bodies to the contact region. Under the assumption of purely elastic behaviour, continuity of stress is indeed equivalent to continuity of deformation, which justifies the BC $\bm{z}(\bm{x},s) = \bm{0}$. For viscoelastic materials, however, this equivalence no longer holds. Using, for example, Eq.~\eqref{eq:rheol1}, the appropriate boundary condition would be $\bm{f}(\bm{x},s) = \mathbf{\Sigma}_0\bm{z}(\bm{x},s) + V\ped{r}(s)\mathbf{\Sigma}_1\od{\bm{z}(\bm{x},s)}{s} = \bm{0}$ on $\mathscr{L}(s)$ for all $s \in (0,S)$.
Despite this, the LuGre literature has consistently adopted the simpler condition $\bm{z}(\bm{x},s) = \bm{0}$. The same approach is adopted here, both because it leads to a significantly simpler formulation and because the coefficients of $\mathbf{\Sigma}_1$ are typically small in rolling contact \cite{TsiotrasConf,Tsiotras1,Tsiotras3,Deur0,Deur1,Deur2}. 

\subsubsection{Tangential forces and vertical moment}\label{ect:forces}
To proceed with the determination of the tangential forces and vertical moment generated by the frictional rolling contact process, with a little abuse of notation, it is convenient to restate Eq.~\eqref{eq:rheol1} as
\begin{align}\label{eq:fxt}
\bm{f}(\bm{x},t) = \mathbf{\Sigma}_0\bm{z}(\bm{x},t) + \mathbf{\Sigma}_1\dod{\bm{z}(\bm{x},t)}{t}, 
\end{align}
or equivalently, using the travelled distance as independent time-like variable,
 \begin{align}\label{eq:fxs}
\bm{f}(\bm{x},s) = \mathbf{\Sigma}_0\bm{z}(\bm{x},s) + V\ped{r}(s)\mathbf{\Sigma}_1\dod{\bm{z}(\bm{x},s)}{s}. 
\end{align}
Equations~\eqref{eq:fxt} and~\eqref{eq:fxs} describe the bristle froce per unit of vertical load.
Accordingly, the tangential forces $\mathbb{R}\ni \bm{F}_{\bm{x}}(s) = [F_x(s)\; F_y(s)]^{\mathrm{T}}$ and the vertical moment $M_z(s)\in \mathbb{R}$ may be computed as
\begin{subequations}\label{eq:FandM}
\begin{align}
\bm{F}_{\bm{x}}(s) & = \iint_{\mathscr{C}(s)}p(\bm{x},s)\bm{f}(\bm{x},s) \dif \bm{x}, \label{eq:Fundef}\\
M_z(s) & = \iint_{\mathscr{C}(s)}p(\bm{x},s)\bigl[ xf_y(\bm{x},s)-yf_x(\bm{x},s)\bigr] \dif \bm{x}, \quad s \in [0,S],\label{eq:Mzunderfr}
\end{align}
\end{subequations}
where $p \in C^0(\mathscr{C}\times[0,S];\mathbb{R}_{\geq 0})$ indicates the pressure distribution inside the contact area. A realistic contact pressure may be expected to be compactly supported (that is, vanishing on the boundary $\partial \mathscr{C}$), but, in many practical cases, non compactly supported distributions are also adopted, including the constant or exponentially decreasing ones. These modelling approaches are mainly motivated by relatively satisfactory agreements with measured quantities \cite{TsiotrasConf,Tsiotras1,Tsiotras3,Deur0,Deur1,Deur2}.

In any case, it is important to emphasise that the computation of the vertical moment $M_z(s)$ in Eq.~\eqref{eq:FandM} is performed with respect to the undeformed configuration. This is appropriate when the deformation $\bm{z}(\bm{x},s)$ is sufficiently small or when determining the actual deformed configuration is impractical, as in the second scenario illustrated in Fig.~\ref{fig:LumpModel}. However, when at least one of the bodies in rolling contact is sufficiently rigid, as in Fig.~\ref{fig:LumpModel}(a), the vertical moment can also be evaluated using the deformed configuration. In that case, Eq.~\eqref{eq:Mzunderfr} should be modified as follows:
\begin{align}\label{eq:Malt}
M_z(s) & = \iint_{\mathscr{C}(s)}p(\bm{x},s)\Bigl[\bigl(x+z_x(\bm{x},s)\bigr)f_y(\bm{x},s)-\bigl(y+z_y(\bm{x},s)\bigr)f_x(\bm{x},s)\Bigr] \dif \bm{x}, \quad s \in [0,S],
\end{align}
whereas~\eqref{eq:Fundef} remains formally unchanged.

Section~\ref{sect:modelRollSimple} below is dedicated to the simplest rolling contact model that may be deduced from Eq.~\eqref{eq:PDenoMOdel09}.

\subsection{Standard linear rolling contact model}\label{sect:modelRollSimple}
Rolling contact models with different orders of complexity may be derived from Eq.~\eqref{eq:PDenoMOdel09} by appropriately specifying expressions for the (nondimensional) transport and rigid relative velocities $\bar{\bm{V}}(\bm{x},s)$ and $\bar{\bm{v}}\ped{r}(\bm{x},s)$. 
Amongst these, the simplest variant may be obtained by observing that most rolling contact processes evolve along a main rolling direction. For instance, in normal driving conditions, tyres experience limited camber angles and turning speeds. Similarly, in rail-wheel interactions, geometrical and effective spin effects are extremely low. In these cases, the transport velocity appearing in Eq.~\eqref{eq:Eulerian} may be fairly approximately as
\begin{align}
\bm{V}(\bm{x},t) \approx -\begin{bmatrix}V\ped{r}(t)\\ 0\end{bmatrix},
\end{align}
and the corresponding nondimensional transport velocity becomes
\begin{align}\label{eq:VsStandard}
\bar{\bm{V}}(\bm{x},s) = \bar{\bm{V}} = -\begin{bmatrix}1\\ 0\end{bmatrix}.
\end{align}
Additionally, the nondimensional rigid relative velocity $\bar{\bm{v}}\ped{r}(\bm{x},s)$ coincides with the rigid slip velocity $\mathbb{R}^2 \ni \bar{\bm{v}}(\bm{x},s) = [\bar{v}_x(\bm{x},s) \; \bar{v}_y(\bm{x},s)]^{\mathrm{T}}$ given by
\begin{align}\label{eq:v0}
\bar{\bm{v}}(\bm{x},s) = -\bm{\sigma}(s) - \mathbf{A}_\varphi(s)\bm{x},
\end{align}
where $\mathbb{R}^2 \ni \bm{\sigma}(s) = [\sigma_x(s) \; \sigma_y(s)]^{\mathrm{T}}$ denotes the \emph{translational slip}, and the matrix $\mathbf{A}_{\varphi}(s) \in \mathbf{M}_2(\mathbb{R})$ reads
\begin{align}\label{eq:Aphi}
\mathbf{A}_\varphi (s)&  \triangleq \begin{bmatrix} 0 & -\varphi (s)\\
\varphi(s) & 0 \end{bmatrix},
\end{align}
being $\varphi(s) \in \mathbb{R}$ the \emph{total spin slip}, which is the sum of the \emph{geometrical spin} $\varphi_\gamma (s) \in \mathbb{R}$, and the \emph{effective spin} $\varphi_\psi(s)$:
\begin{align}
\varphi(s) = \varphi_\gamma(s) + \varphi_\psi(s).
\end{align}

Combining the general PDE~\eqref{eq:PDenoMOdel09} with~\eqref{eq:VsStandard}-\eqref{eq:Aphi} yields the \emph{standard linear rolling contact model} formalised below (\modref{stdmodel}).
\begin{modelbox}[stdmodel]{Standard linear model}
\begin{subequations}\label{eq:standard0}
\begin{align}
\begin{split}
& \dpd{\bm{z}(\bm{x},s)}{s} -\dpd{\bm{z}(\bm{x},s)}{x} =\mathbf{\Sigma}\bigl(\bar{\bm{v}}(\bm{x},s),s\bigr)\bm{z}(\bm{x},s) + \bm{h}\bigl(\bar{\bm{v}}(\bm{x},s),s\bigr), \quad \bm{x} \in \mathring{\mathscr{C}}(s), \; s \in (0,S),
\end{split}\\
& \bm{z}(\bm{x},s) = \bm{0}, \quad \bm{x}\in \mathscr{L}(s), \;  s \in (0,S), \\
& \bm{z}(\bm{x},0) = \bm{z}_0(\bm{x}), \quad \bm{x}\in \mathring{\mathscr{C}}_0.
\end{align}
\end{subequations}
\end{modelbox}
\modref{stdmodel} represents the FrBD counterpart of the standard brush models known from the literature, which are also equivalent to Kalker's simplified theory of rolling contact. In particular, \modref{stdmodel} is linear in the variable $\bm{z}(\bm{x},s)$, and can be solved numerically with standard methods for virtually any combination of slip velocities.
For the sequel, it is profitable to introduce the following functions:
\begin{subequations}\label{eq:Sigmaf0}
\begin{align}
\tilde{\mathbf{H}}(\bm{x},s) & \triangleq \mathbf{H}\bigl(\bar{\bm{v}}(\bm{x},s),s\bigr), \\
\tilde{\bm{h}}(\bm{x},s) & \triangleq \tilde{\mathbf{H}}(\bm{x},s)\bar{\bm{v}}(\bm{x},s) = \bm{h}\bigl(\bar{\bm{v}}(\bm{x},s),s\bigr), \\
\tilde{\mathbf{\Sigma}}(\bm{x},s) & \triangleq \mathbf{\Sigma}\bigl(\bar{\bm{v}}(\bm{x},s),s\bigr),
\end{align}
\end{subequations}
so that Eq.~\eqref{eq:standard0} may be restated as
\begin{subequations}\label{eq:standard1}
\begin{align}
\begin{split}
& \dpd{\bm{z}(\bm{x},s)}{s} -\dpd{\bm{z}(\bm{x},s)}{x} =\tilde{\mathbf{\Sigma}}(\bm{x},s)\bm{z}(\bm{x},s) + \tilde{\bm{h}}(\bm{x},s), \quad \bm{x} \in \mathring{\mathscr{C}}(s), \; s \in (0,S),
\end{split}\\
& \bm{z}(\bm{x},s) = \bm{0}, \quad \bm{x}\in \mathscr{L}(s), \;  s \in (0,S), \\
& \bm{z}(\bm{x},0) = \bm{z}_0(\bm{x}), \quad \bm{x}\in \mathring{\mathscr{C}}_0.
\end{align}
\end{subequations}
The form~\eqref{eq:standard1} is more compact and amenable to analysis. In fact, due to the term $\tilde{\mathbf{\Sigma}}(\bm{x},s)$ being highly nonlinear in $\bm{x}$, it is not possible to derive explicit analytical solutions to the PDE~\eqref{eq:standard1}, which makes the adoption of numerical techniques necessary. With these premises, Sect.~\ref{sect:wellPosStandard} investigates well-posedness of the PDE~\eqref{eq:standard1} considering a fixed contact area.

\subsubsection{Well-posedness}\label{sect:wellPosStandard}
The present section delivers well-posedness results for a simplified version of Eq.~\eqref{eq:standard1}, obtained by assuming a time-invariant contact area. The more general case of a time-varying contact area may be addressed as explained in \cite{Tribology}, after mapping the variable domain into a fixed one. In this paper, the effect of time-varying vertical forces and contact areas is studied qualitatively in Sect.~\ref{sect:normVertVar}, limited to the line contact case. 
Proceeding instead with the well-posedness analysis, for a contact area fixed over time, or equivalently, travelled distance, the PDE~\eqref{eq:standard1} reduces to
\begin{subequations}\label{eq:standard2}
\begin{align}
\begin{split}
& \dpd{\bm{z}(\bm{x},s)}{s} -\dpd{\bm{z}(\bm{x},s)}{x} =\tilde{\mathbf{\Sigma}}(\bm{x},s)\bm{z}(\bm{x},s) + \tilde{\bm{h}}(\bm{x},s), \quad (\bm{x},s) \in \mathring{\mathscr{C}}\times (0,S),\label{eq:standard2PDE}
\end{split}\\
& \bm{z}(\bm{x},s) = \bm{0}, \quad (\bm{x},s) \in \mathscr{L}\times (0,S), \label{eq:BCquasiStandard}\\
& \bm{z}(\bm{x},0) = \bm{z}_0(\bm{x}), \quad \bm{x}\in \mathring{\mathscr{C}}.
\end{align}
\end{subequations}
Well-posedness results for the simplified Eq.~\eqref{eq:standard2} are asserted below by Theorem~\ref{thm:ex1}. 
\begin{theorem}[Existence and uniqueness of solutions]\label{thm:ex1}
Suppose that $\mathring{\mathscr{C}}\subset \mathbb{R}^2$ is bounded, with boundary $\partial \mathscr{C}$ piecewise $C^1$. Then, for all $\tilde{\mathbf{\Sigma}} \in C^0(\mathscr{C}\times[0,S];\mathbf{M}_2(\mathbb{R}))$ and $\tilde{\bm{h}} \in C^0([0,S];L^2(\mathring{\mathscr{C}};\mathbb{R}^2))$ as in Eq.~\eqref{eq:Sigmaf0}, and ICs $\bm{z}_0 \in L^2(\mathring{\mathscr{C}};\mathbb{R}^2)$, the PDE~\eqref{eq:standard2} admits a unique \emph{mild solution} $\bm{z} \in C^0([0,S];L^2(\mathring{\mathscr{C}};\mathbb{R}^2))$. Additionally, if $\tilde{\mathbf{\Sigma}} \in C^1(\mathscr{C}\times[0,S];\mathbf{M}_2(\mathbb{R}))$, $\tilde{\bm{h}} \in C^1([0,S];L^2(\mathring{\mathscr{C}};\mathbb{R}^2))$, and the IC $\bm{z}_0 \in \mathscr{D}(\mathscr{A})$, with $\mathscr{D}(\mathscr{A}) \triangleq \{\bm{\zeta} \in L^2(\mathring{\mathscr{C}};\mathbb{R}^2) \mathrel{|} \pd{\bm{\zeta}}{x} \in  L^2(\mathring{\mathscr{C}};\mathbb{R}^2), \; \eval[0]{\bm{\zeta}}_{\mathscr{L}} = \bm{0}\} $, the solution is \emph{classical}, that is, $\bm{z} \in C^1([0,S];L^2(\mathring{\mathscr{C}};\mathbb{R}^2)) \cap C^0([0,S];\mathscr{D}(\mathscr{A}))$.
\begin{proof}[Proof]
The result follows along the same lines as the proof of Theorem~\ref{thm:ex2}, which is given in Appendix~\ref{app:Proof1}.
\end{proof}
\end{theorem}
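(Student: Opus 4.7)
The plan is to recast the PDE system~\eqref{eq:standard2} as an abstract inhomogeneous Cauchy problem on the Hilbert space $X \triangleq L^2(\mathring{\mathscr{C}};\mathbb{R}^2)$ and then to invoke standard semigroup-theoretic results. Concretely, I would introduce the unbounded linear transport operator
\[
A\bm{z} \triangleq \dpd{\bm{z}}{x}, \qquad D(A) = \bigl\{\bm{z} \in X \mathrel{\big|} \partial_x \bm{z} \in X,\; \bm{z}\vert_{\mathscr{L}} = \bm{0}\bigr\},
\]
together with the time-dependent bounded multiplication operator $B(s)\bm{z} \triangleq \tilde{\mathbf{\Sigma}}(\cdot,s)\bm{z}$ and the forcing $\tilde{\bm{h}}(\cdot,s) \in X$. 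The piecewise $C^1$ regularity of $\partial\mathscr{C}$ assumed in the statement is exactly what makes the inflow trace $\bm{z}\vert_{\mathscr{L}}$ well-defined. Equation~\eqref{eq:standard2} then reads $\dot{\bm{z}}(s) = A\bm{z}(s) + B(s)\bm{z}(s) + \tilde{\bm{h}}(\cdot,s)$, $\bm{z}(0) = \bm{z}_0$.

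The first and most delicate step is to show that $A$ generates a $C_0$-semigroup $\{T(s)\}_{s\geq 0}$ on $X$. I would appeal to the Lumer-Phillips theorem: density of $D(A)$ in $X$ is standard, and dissipativity follows from
\[
\langle A\bm{z},\bm{z}\rangle_X = \dfrac{1}{2}\iint_{\mathring{\mathscr{C}}} \dpd{}{x}\norm{\bm{z}}_2^2 \dif \bm{x} = \dfrac{1}{2}\int_{\partial \mathscr{C}} \norm{\bm{z}}_2^2 \hat{n}_x \dif \ell \leq 0,
\]
since $\bm{z}\vert_{\mathscr{L}} = \bm{0}$, $\hat{n}_x = 0$ on $\mathscr{N}$, and $\hat{n}_x < 0$ on $\mathscr{T}$ by the definitions in~\eqref{eq:Inflow0Outflow0}. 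Surjectivity of $\lambda\mathbf{I} - A$ for some $\lambda > 0$ can be checked by integrating the first-order ODE $\lambda \bm{z} - \partial_x \bm{z} = \bm{f}$ along the horizontal characteristics, the zero BC on $\mathscr{L}$ supplying a unique inflow value on each $y$-slice and the boundedness of $\mathring{\mathscr{C}}$ together with the exponential weight guaranteeing that the resulting $\bm{z}$ lies in $X$ with $\partial_x \bm{z} \in X$. Closedness of $A$ is then immediate from its graph characterisation. This step is where I expect the bulk of the technical effort, because careful trace and slicing arguments are needed on the two-dimensional contact region with piecewise $C^1$ boundary.

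Once $A$ is identified as a generator, the perturbation $s \mapsto B(s)$ is strongly continuous and uniformly bounded by $\norm{\tilde{\mathbf{\Sigma}}}_\infty$, so standard perturbation results (e.g., Pazy, Ch.~5) supply a strongly continuous evolution family $U(s,\tau)$ associated with $A + B(\cdot)$. The mild solution then reads
\[
\bm{z}(s) = U(s,0)\bm{z}_0 + \int_0^s U(s,\tau)\tilde{\bm{h}}(\cdot,\tau)\dif \tau,
\]
with uniqueness delivered by Gronwall's inequality applied to the difference of any two candidates, and continuity in $C^0([0,S];X)$ inherited from the strong continuity of $U(\cdot,\cdot)$ together with $\tilde{\bm{h}} \in C^0([0,S];X)$. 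For the classical statement, the additional $C^1$ regularity in $s$ of $\tilde{\mathbf{\Sigma}}$ and $\tilde{\bm{h}}$, combined with $\bm{z}_0 \in H^1(\mathring{\mathscr{C}};\mathbb{R}^2)$ satisfying the BC~\eqref{eq:BCquasiStandard} -- so that $\bm{z}_0 \in D(A)$ -- allows differentiation of the Duhamel formula and invocation of standard regularity results for abstract evolution equations to conclude $\bm{z}\in C^1([0,S];X)\cap C^0([0,S];H^1(\mathring{\mathscr{C}};\mathbb{R}^2))$, with the BC preserved along the flow.
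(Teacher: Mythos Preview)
Your proposal is correct and follows essentially the same strategy as the paper: cast~\eqref{eq:standard2} as an abstract Cauchy problem on $L^2(\mathring{\mathscr{C}};\mathbb{R}^2)$, identify the transport operator (with inflow BC) as the generator of a $C_0$-semigroup, treat $\tilde{\mathbf{\Sigma}}(\cdot,s)$ as a bounded, time-continuous multiplicative perturbation, and invoke Pazy's Theorems~6.1.2 and~6.1.5 for mild and classical solutions respectively. The only cosmetic differences are that the paper defines the generator domain as $H^1(\mathring{\mathscr{C}};\mathbb{R}^2)$ with the inflow BC (rather than your anisotropic space with only $\partial_x\bm{z}\in X$) and, instead of running the Lumer--Phillips verification you outline, simply cites Bardos~\cite{Bardos} for the generation result; your direct argument is a valid substitute for that citation.
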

Some comments about the regularity of the original matrices and inputs are as follows: $\mathbf{M}\in C^0(\mathbb{R}^2; \mathbf{Sym}_2(\mathbb{R}))$ implies $\mathbf{G}\in C^0(\mathbb{R}^2; \mathbf{Sym}_2(\mathbb{R}))$ for all $\varepsilon \in \mathbb{R}_{\geq 0}$, ensuring the existence of mild solutions for $\bar{\bm{v}} \in C^0(\mathscr{C}\times[0,S];\mathbb{R}^2)$ and $V\ped{r} \in C^0([0,S]; [V\ped{min}, V\ped{max}])$. Additionally, $\mathbf{M}\in C^1(\mathbb{R}^2; \mathbf{Sym}_2(\mathbb{R}))$ implies $\mathbf{G}\in C^1(\mathbb{R}^2; \mathbf{Sym}_2(\mathbb{R}))$ for all $\varepsilon \in \mathbb{R}_{>0}$. Therefore, by composition of continuously differentiable functions, it follows that $\tilde{\mathbf{\Sigma}} \in C^1(\mathscr{C}\times[0,S];\mathbf{M}_2(\mathbb{R}))$ and $\tilde{\bm{h}} \in C^1([0,S];L^2(\mathring{\mathscr{C}};\mathbb{R}^2))$ for all $\bar{\bm{v}} \in C^1(\mathscr{C}\times[0,S];\mathbb{R}^2)$ and $V\ped{r} \in C^1([0,S]; [V\ped{min}, V\ped{max}])$. Essentially, the regularised rolling contact~\modref{stdmodel} (with $\varepsilon \in \mathbb{R}_{>0}$) admits classical solutions for sufficiently smooth slip and rolling velocities. 



\subsection{Semilinear rolling contact model for large spin slips}\label{sect:semilinear}
Large spin slips give rise to geometric effects that are not accurately captured by~\modref{stdmodel}.
The classes of models introduced below describe the dynamics of deformable bodies rolling on a rigid, flat substrate in the presence of substantial spin slip. Their primary applications include tyre-road interaction and spherical robots.
The same formulations also apply to other elastic rolling contact pairs, such as wheel-rail systems, but these typically experience only modest spin slip, making the full theory developed here less relevant in practice.

In particular, a first spin component distorts the trajectories of the bristles within the contact area. In the most general case of sideways rolling, this gives
\begin{align}\label{eq:generalV}
\bar{\bm{V}}(\bm{x},s) = -\begin{bmatrix} \varepsilon_y(s) \\ -\varepsilon_x(s) \end{bmatrix} + \mathbf{A}_{\varphi_1}(s)\bm{x},
\end{align}
where $\mathbb{R}^2 \ni \bm{\varepsilon}(s) = [\varepsilon_x(s)\; \varepsilon_y(s)]^{\mathrm{T}}$, with $\norm{\bm{\varepsilon}(s)}_2 = 1$, and $\mathbf{A}_{\varphi_1}(s) \in \mathbf{M}_2(\mathbb{R})$ reads
\begin{align}\label{eq:geometricSPinTensor}
\mathbf{A}_{\varphi_1}(s) \triangleq  \begin{bmatrix}0 & \varphi_1(s) \\ -\varphi_1(s) & 0 \end{bmatrix}.
\end{align}
In particular, for tyres and railway wheels, there is no sideways rolling, and therefore $\varepsilon_y(s) =1$ and $\varepsilon_x(s) = 0$ in Eq.~\eqref{eq:generalV}. The nature of the spin term $\varphi_1(s) \in \mathbb{R}$ appearing in Eq.~\eqref{eq:generalV} depends heavily on the reciprocal deformative behaviour between the contacting bodies. In the case where the second body is infinitely rigid compared to the first one, as it happens in tyre-road interactions, $\varphi_1(s) \equiv \varphi_\gamma(s)$, and the matrix $\mathbf{A}_{\varphi_1}(s)$ may be fairly referred to as the \emph{geometric spin tensor}. In contrast, when the two contacting bodies undergo comparable deformations, as is the case for rail and wheel, $\varphi_1(s) \equiv \frac{\varphi_\gamma(s)-\varphi_\psi(s)}{2}$. It is clear that, in the latter situation, $\varphi_1(s)$ cannot be interpreted as a pure geometrical spin.

The presence of large spin slips also affects the expression for the nondimensional rigid relative velocity, which becomes a function of the bristle deformation according to
\begin{align}\label{eq:RigidVelSpin}
\bar{\bm{v}}\ped{r}\bigl(\bm{z}(\bm{x},s),\bm{x},s\bigr) =\bar{\bm{v}}(\bm{x},s)- \mathbf{A}_{\varphi_2}(s)\bm{z}(\bm{x},s),
\end{align}
where $\bar{\bm{v}}(\bm{x},s)$ reads again as in~\eqref{eq:v0}, and $\mathbf{A}_{\varphi_2}(s) \in \mathbf{M}_2(\mathbb{R})$ writes
\begin{align}\label{eq:turnSpinMatrix}
\mathbf{A}_{\varphi_2}(s) & \triangleq \begin{bmatrix} 0 & -\varphi_2(s) \\ \varphi_2(s) & 0\end{bmatrix}.
\end{align}
Again, the specific definition of the spin term $\varphi_2(s) \in \mathbb{R}$ in Eq.~\eqref{eq:turnSpinMatrix} is contingent on the reciprocal deformative behaviour of the contacting bodies. When one of the two bodies may be regarded as rigid, then $\varphi_2(s) \equiv \varphi_\psi(s)$, and the matrix $\mathbf{A}_{\varphi_2}(s)$ appearing in~\eqref{eq:RigidVelSpin} may be referred to as the \emph{effective spin tensor}\footnote{In Vehicle Dynamics, it is also called \emph{turning tensor} or \emph{turn spin tensor}.}; \emph{vice versa}, in a situation of similarity, $\varphi_2(s) \equiv -\varphi_1(s)$.
For notational convenience, the spins $\varphi_1(s)$ and $\varphi_2(s)$ may be collected into the vector\footnote{This definition of $\bm{\varphi}(s)$ is simply convenient for the mathematical analysis of Sect.~\ref{sect:math}. For a deformable body rolling over a rigid substrate, $\bm{\varphi}(s) = [\varphi_1(s) \; \varphi_2(s)]^{\mathrm{T}} \equiv [\varphi_\gamma(s) \; \varphi_\psi(s)]^{\mathrm{T}}$, so that $\bm{\varphi}(s)$ is indeed sufficient to completely characterise spin conditions. Conversely, when the two bodies in rolling contact exhibit a similar behaviour, the definition $\bm{\varphi}(s) \triangleq [\varphi_\gamma(s) \; \varphi_\psi(s)]^{\mathrm{T}}$ should be used (since also in this case $\varphi(s) = \varphi_\gamma(s) + \varphi_\psi (s)$).} $\mathbb{R}^2 \ni \bm{\varphi}(s) = [\varphi_1(s) \; \varphi_2(s)]^{\mathrm{T}}$.

According to Eqs.~\eqref{eq:generalV}-\eqref{eq:turnSpinMatrix}, the \emph{semilinear model for large spin slips} (\modref{semilinmodel}) is formulated below.
\begin{modelbox}[semilinmodel]{Semilinear model for large spin slips}
\begin{subequations}
\begin{align}\label{eq:semilinear}
\begin{split}
& \dpd{\bm{z}(\bm{x},s)}{s} + \bigl(\bar{\bm{V}}(\bm{x},s)\cdot \nabla_{\bm{x}}\bigr)\bm{z}(\bm{x},s) =\mathbf{\Sigma}\Bigl(\bar{\bm{v}}\ped{r}\bigl(\bm{z}(\bm{x},s),\bm{x},s\bigr),s\Bigr)\bm{z}(\bm{x},s)\\
& \qquad \qquad \qquad \qquad \qquad \qquad \qquad\qquad + \bm{h}\Bigl(\bar{\bm{v}}\ped{r}\bigl(\bm{z}(\bm{x},s),\bm{x},s\bigr),s\Bigr), \quad \bm{x} \in \mathring{\mathscr{C}}(s), \; s \in (0,S),
\end{split}\\
& \bm{z}(\bm{x},s) = \bm{0}, \quad \bm{x}\in \mathscr{L}(s), \;  s \in (0,S), \\
& \bm{z}(\bm{x},0) = \bm{z}_0(\bm{x}), \quad \bm{x}\in \mathring{\mathscr{C}}_0.
\end{align}
\end{subequations}
\end{modelbox}
The above Eq.~\eqref{eq:semilinear} represents a full semilinear PDE. Its complex structure does not allow for invoking simple well-posedness results, and a rigorous analysis is beyond the scope of the present paper. A rich variety of numerical results is, however, reported in Sect.~\ref{sect:numer}, where the predictions of Eq.~\eqref{eq:semilinear} are compared to those obtained using the linear model introduced next in Sect.~\ref{sect:LinearLargeSpin}.

Before moving to Sect.~\ref{sect:LinearLargeSpin}, however, it is first beneficial to draw some considerations about Eqs.~\eqref{eq:generalV} and~\eqref{eq:RigidVelSpin}. These may be interpreted by introducing the \textit{instantaneous tilting centre} $C_{\varphi_1}(s)$ with coordinates
\begin{equation}\label{eq:xC1}
\bm{x}_{C_{\varphi_1}}(s) = \begin{bmatrix} x_{C_{\varphi_1}}(s) & y_{\varphi_1}(s)\end{bmatrix}^\mathrm{T}\triangleq  \dfrac{\bm{\varepsilon}(s)}{\varphi_1(s)}.
\end{equation}
Utilising~\eqref{eq:xC1}, Eq.~\eqref{eq:generalV} may be rewritten as
\begin{align}
\bar{\bm{V}}(\bm{x},s) = \mathbf{A}_{\varphi_1}(s)\bigl(\bm{x}-\bm{x}_{C_{\varphi_1}}(s) \bigr),
\end{align}
which states that the trajectories of the bristles inside the contact patch are instantaneously centred at $C_{\varphi_1}(s)$. For a tyre rolling over a rigid road, the situation is schematically illustrated in Fig.~\ref{fig:tireSchematic}, where $C_{\varphi_1}(s) \equiv C_\gamma(s)$ (since $\varphi_1(s) = \varphi_\gamma (s)$). From geometrical considerations, it is clear that, for a deformable wheel rolling over a rigid substrate, the point $C_{\varphi_1}(s)$ should always lie outside of the contact area. In this case, the second spin component represents an additional contribution to the rigid relative velocity due to differences in the elastic behaviour of the contact pairs, and should be regarded as an effective spin component, since deformable bristles only protrude from the wheel ($\varphi_2(s) = \varphi_\psi(s)$). For omnidirectional bodies, such as spherical robots, this geometrical constraint may be violated, but then it becomes unclear how to distinguish between the tilting and spinning spin components. It seems, therefore, natural to assume that $C_{\varphi_1}(s)$ should lie outside $\mathscr{C}$.
These considerations were instrumental in deriving explicit solutions within the simpler theoretical setting offered by the linear brush models in \cite{Meccanica2,SphericalWheel,LuGreSpin}, and will also prove crucial in Sect.~\ref{sect:InSD0D}.

In contrast, when the contacting bodies have comparable elastic properties, as in the case of wheel and rail, the instantaneous tilting centre $C_{\varphi_1}(s)$ becomes a difference between two signed curvatures, and, in general, there is no guarantee that it would lie outside $\mathscr{C}$. However, as already mentioned, railway wheels tend to experience extremely small spin slips, with the consequence that often $C_{\varphi_1}(s)$ lies indeed outside the contact area. Moreover, the fact that $\varphi_2(s) = -\varphi_1(s)$ may be explained by recalling that, conceptually, deformable bristles should be attached to both bodies (wheel and rail). In such a situation, there exist no real geometric and effective spins.

\begin{figure}
\centering
\includegraphics[width=0.6\linewidth]{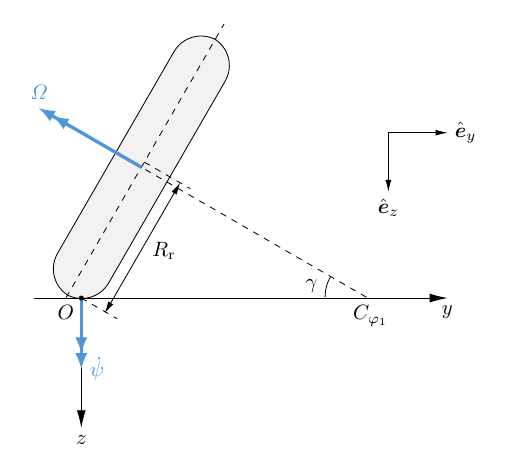} 
\caption{Schematic of a tyre rolling over a rigid road. For a rigid wheel, the spin components would be defined as $\varphi_1 \triangleq \frac{\sin \gamma}{R_\delta}$ and $\varphi_2 \triangleq -\frac{\dot{\psi}}{\Omega R_\delta}$, where $\gamma \in \mathbb{R}$ denotes the camber angle, $\dot{\psi} \in \mathbb{R}$ the turning speed, $\Omega \in \mathbb{R}_{>0}$ the angular velocity, and $R_\delta \in \mathbb{R}_{>0}$ the deformed radius. From simple trigonometrical considerations, it follows that $C_{\varphi_1}$ always lies outside $\mathscr{C}$. In reality, for a deformable tyre, the spin slips are often defined as $\varphi_1 = \varphi_\gamma \triangleq \frac{(1-\varepsilon_\gamma)\sin\gamma}{R\ped{r}}$ and $\varphi_2 = \varphi_\psi \triangleq -\frac{\dot{\psi}}{V\ped{r}}$, where $\varepsilon_\gamma \in [0,1)$ denotes the \emph{camber reduction factor}, $R\ped{r} \geq R_\delta$ the \emph{effective rolling radius}, and $V\ped{r} \in \mathbb{R}_{>0}$ is the rolling speed. Also in this case, $C_{\varphi_1}$ must necessarily lie outside $\mathscr{C}$.}
\label{fig:tireSchematic}
\end{figure}


\subsection{Linear rolling contact model for large spin slips}\label{sect:LinearLargeSpin}
An approximation to the semilinear PDE~\eqref{eq:semilinear} may be recovered by conveniently replacing $\bar{\bm{v}}\ped{r}(\bm{z}(\bm{x},s),\bm{x},s)$ by $\bar{\bm{v}}(\bm{x},s)$ in the nonlinear functions $\mathbf{\Sigma}(\cdot,\cdot)$ and $\mathbf{H}(\cdot,\cdot)$, yielding the following \emph{linear model for large spin slips}.
\begin{modelbox}[linmodel2]{Linear model for large spin slips}
\begin{subequations}\label{eq:linwasdir}
\begin{align}\label{eq:linwasdir2}
\begin{split}
& \dpd{\bm{z}(\bm{x},s)}{s} + \bigl(\bar{\bm{V}}(\bm{x},s)\cdot \nabla_{\bm{x}}\bigr)\bm{z}(\bm{x},s) =\mathbf{\Sigma}\bigl(\bar{\bm{v}}(\bm{x},s),s\bigr)\bm{z}(\bm{x},s) \\
&\qquad \qquad \qquad \qquad \qquad \qquad \qquad\qquad  + \mathbf{H}\bigl(\bar{\bm{v}}(\bm{x},s),s\bigr)\bar{\bm{v}}\ped{r}\bigl(\bm{z}(\bm{x},s),\bm{x},s\bigr), \quad \bm{x} \in \mathring{\mathscr{C}}(s), \; s \in (0,S),
\end{split}\\
& \bm{z}(\bm{x},s) = \bm{0}, \quad \bm{x}\in \mathscr{L}(s), \;  s \in (0,S), \\
& \bm{z}(\bm{x},0) = \bm{z}_0(\bm{x}), \quad \bm{x}\in \mathring{\mathscr{C}}_0.
\end{align}
\end{subequations}
\end{modelbox}
As opposed to~\eqref{eq:semilinear}, Eq.~\eqref{eq:linwasdir2} is linear in the variable $\bm{z}(\bm{x},s)$, which permits establishing key results regarding the well-posedness and stability of~\modref{linmodel2}. In fact, for $\varepsilon = 0$, linearising Eq.~\eqref{eq:linwasdir2} around the equilibrium $(\bm{z}^\star(\bm{x}), \bar{\bm{v}}^\star(\bm{x})) = (\bm{0},\bm{0}) \in \mathbb{R}^4$ yields the same models analysed in \cite{Meccanica2,SphericalWheel}, which admit closed-form solutions for certain simple contact shapes both in steady-state and transient conditions.
Moreover, the simplified PDE~\eqref{eq:linwasdir2} may serve as a foundation for fixed-point algorithms designed to iteratively approximate the solution of~\eqref{eq:semilinear}.

For what follows, and in the same spirit of Sect.~\ref{sect:modelRollSimple}, the following functions are introduced: 
\begin{subequations}\label{eq:functionsH}
\begin{align}
\tilde{\mathbf{H}}(\bm{x},s) & \triangleq \mathbf{H}\bigl(\bar{\bm{v}}(\bm{x},s),s\bigr), \\
\tilde{\bm{h}}(\bm{x},s) & \triangleq \tilde{\mathbf{H}}(\bm{x},s)\bar{\bm{v}}(\bm{x},s) =  \bm{h}\bigl(\bar{\bm{v}}(\bm{x},s),s\bigr), \\
\tilde{\mathbf{\Sigma}}_{\varphi}(\bm{x},s) & \triangleq \mathbf{\Sigma}\bigl(\bar{\bm{v}}(\bm{x},s),s\bigr)- \tilde{\mathbf{H}}(\bm{x},s)\mathbf{A}_{\varphi_2}(s).
\end{align}
\end{subequations}
Utilising~\eqref{eq:functionsH}, Eq.~\eqref{eq:linwasdir} may be recast more conveniently as
\begin{align}\label{eq:linearLargeSpinform2}
\begin{split}
& \dpd{\bm{z}(\bm{x},s)}{s} + \bigl(\bar{\bm{V}}(\bm{x},s)\cdot \nabla_{\bm{x}}\bigr)\bm{z}(\bm{x},s) =\tilde{\mathbf{\Sigma}}_{\varphi}(\bm{x},s)\bm{z}(\bm{x},s) +\tilde{\bm{h}}(\bm{x},s),\quad \bm{x} \in \mathring{\mathscr{C}}(s),  \; s \in (0,S),
\end{split}\\
& \bm{z}(\bm{x},s) = \bm{0}, \quad \bm{x}\in \mathscr{L}(s), \;  s \in (0,S).
\end{align}
The form~\eqref{eq:linearLargeSpinform2} is more amenable to mathematical analysis. The next Sect.~\ref{sect:wllPops2} is dedicated to inferring well-posedness and stability properties for~\eqref{eq:linearLargeSpinform2} under additional assumptions on the domain $\mathring{\mathscr{C}}(s)$ the transport velocity $\bar{\bm{V}}(\bm{x},s)$.

Before moving to Sect.~\ref{sect:wllPops2}, it is worth commenting on the predictive capabilities of~\modref{linmodel2} in relation to those of~\ref{stdmodel} and~\ref{semilinmodel}. In this context, the first observation concerns the equivalence between the PDEs~\eqref{eq:semilinear} and~\eqref{eq:linwasdir2} for $\varphi_2(s) \approx 0$, even in the presence of large components $\varphi_1(s)$. Essentially, in the context of tyre-road interactions,~\modref{linmodel2} can be employed to study the effect of large camber angles and moderate turn spins. Of course, since the bristle deformation $\bm{z}(\bm{x},s)$ may be expected to be small, Eq.~\eqref{eq:linwasdir2} should constitute a sufficiently good approximation to~\eqref{eq:semilinear} even when the second spin component is non-negligible ($\varphi_2(s) = \varphi_\psi(s)$ for tyres and spherical robots). For wheel-rail systems, small values of $\varphi_2(s)$ automatically imply small $\varphi_1(s)$, and hence the predictive capabilities of~\modref{linmodel2} coincide with those of~\ref{stdmodel} for $\varphi_2(s) \approx 0$. In fact, \modref{stdmodel} is typically sufficient when it comes to describing wheel-rail rolling contact phenomena.

\subsubsection{Well-posedness}\label{sect:wllPops2}
Existence and uniqueness results for~\modref{linmodel2} are stated under the assumption that the contact area and the transport velocity are time-independent. Accordingly, the PDE~\eqref{eq:linearLargeSpinform2} is reformualted as
\begin{subequations}\label{eq:quasiStationary0}
\begin{align}
\begin{split}
& \dpd{\bm{z}(\bm{x},s)}{s} + \bigl(\bar{\bm{V}}(\bm{x})\cdot \nabla_{\bm{x}}\bigr)\bm{z}(\bm{x},s) =\tilde{\mathbf{\Sigma}}_{\varphi}(\bm{x},s)\bm{z}(\bm{x},s) +\tilde{\bm{h}}(\bm{x},s),\quad (\bm{x},s) \in\mathring{\mathscr{C}} \times (0,S),\label{eq:quasiStationary0dy}
\end{split}\\
& \bm{z}(\bm{x},s) = \bm{0}, \quad (\bm{x},s) \in \mathscr{L}\times (0,S),\label{eq:BCquasistesd} \\
& \bm{z}(\bm{x},0) = \bm{z}_0(\bm{x}), \quad \bm{x} \in \mathring{\mathscr{C}},
\end{align}
\end{subequations}
with the nondimensional transport velocity $\bar{\bm{V}}(\bm{x})$ given, in the most general case, by
\begin{align}\label{eq:Vss}
\bar{\bm{V}}(\bm{x}) = -\begin{bmatrix} \varepsilon_y \\ - \varepsilon_x\end{bmatrix} + \mathbf{A}_{\varphi_1}\bm{x}.
\end{align}
Accordingly, well-posedness results are enounced in Theorem~\ref{thm:ex2} below.
\begin{theorem}[Existence and uniqueness of solutions]\label{thm:ex2}
Suppose that $\bar{\bm{V}} \in C^1(\mathscr{C};\mathbb{R}^2)$ reads as in Eq.~\eqref{eq:Vss}, and that $\mathring{\mathscr{C}}\subset \mathbb{R}^2$ is bounded, with boundary $\partial \mathscr{C}$ piecewise $C^1$. Then, for all $\tilde{\mathbf{\Sigma}}_{\varphi} \in C^0(\mathscr{C}\times[0,S];\mathbf{M}_2(\mathbb{R}))$ and $\tilde{\bm{h}} \in C^0([0,S];L^2(\mathring{\mathscr{C}};\mathbb{R}^2))$ as in Eq.~\eqref{eq:functionsH}, and ICs $\bm{z}_0 \in L^2(\mathring{\mathscr{C}};\mathbb{R}^2)$, the PDE~\eqref{eq:quasiStationary0} admits a unique mild solution $\bm{z} \in C^0([0,S];L^2(\mathring{\mathscr{C}};\mathbb{R}^2))$. Additionally, if $\tilde{\mathbf{\Sigma}}_{\varphi} \in C^1(\mathscr{C}\times[0,S];\mathbf{M}_2(\mathbb{R}))$, $\tilde{\bm{h}} \in C^1([0,S];L^2(\mathring{\mathscr{C}};\mathbb{R}^2))$, and the IC $\bm{z}_0 \in \mathscr{D}(\mathscr{A})$, with $\mathscr{D}(\mathscr{A}) \triangleq \{\bm{\zeta} \in L^2(\mathring{\mathscr{C}};\mathbb{R}^2) \mathrel{|} (\bar{\bm{V}}\cdot\nabla_{\bm{x}})\bm{\zeta} \in L^2(\mathring{\mathscr{C}};\mathbb{R}^2), \;  \eval[0]{\bm{\zeta}}_{\mathscr{L}} = \bm{0}\}$, the solution is classical, that is, $\bm{z} \in C^1([0,S];L^2(\mathring{\mathscr{C}};\mathbb{R}^2)) \cap C^0([0,S];\mathscr{D}(\mathscr{A}))$.
\begin{proof}
See Appendix~\ref{app:Proof1}.
\end{proof}
\end{theorem}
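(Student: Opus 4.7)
My plan is to cast~\eqref{eq:quasiStationary0} as an abstract non-autonomous Cauchy problem on the Hilbert space $\mathcal{H} \triangleq L^2(\mathring{\mathscr{C}};\mathbb{R}^2)$, namely
\begin{align*}
\dot{\bm{z}}(s) = A\bm{z}(s) + B(s)\bm{z}(s) + \tilde{\bm{h}}(\cdot,s), \quad s \in (0,S), \qquad \bm{z}(0) = \bm{z}_0,
\end{align*}
where $A\bm{z} \triangleq -(\bar{\bm{V}}\cdot \nabla_{\bm{x}})\bm{z}$ with domain
\begin{align*}
D(A) \triangleq \bigl\{\bm{z} \in H^1(\mathring{\mathscr{C}};\mathbb{R}^2) : \bm{z}|_{\mathscr{L}} = \bm{0} \text{ in the trace sense}\bigr\},
\end{align*}
and $B(s)$ is the bounded multiplication operator $\bm{z} \mapsto \tilde{\mathbf{\Sigma}}_\varphi(\cdot,s)\bm{z}$ on $\mathcal{H}$. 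The overall strategy is to show that $A$ generates a $C_0$-semigroup $(T(s))_{s\geq 0}$ on $\mathcal{H}$ and then to treat $B(s)$ and $\tilde{\bm{h}}(\cdot,s)$ as a strongly continuous bounded perturbation plus inhomogeneity, so that the unique mild and classical solutions can be extracted from the variation-of-constants formula.

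The most substantive step is to verify the generator property of $A$ via the Lumer--Phillips theorem. Crucially, the velocity field~\eqref{eq:Vss} is affine with skew-symmetric linear part $\mathbf{A}_{\varphi_1}$, and is therefore divergence-free: $\nabla_{\bm{x}}\cdot \bar{\bm{V}} = 0$. Integration by parts for $\bm{z} \in D(A)$, together with the inflow BC, then yields
\begin{align*}
\langle A\bm{z}, \bm{z}\rangle_{\mathcal{H}} = -\tfrac{1}{2}\int_{\mathscr{T}}\bigl(\bar{\bm{V}}\cdot \hat{\bm{n}}_{\partial \mathscr{C}}\bigr)\norm{\bm{z}}_2^2 \dif \ell \leq 0,
\end{align*}
which establishes dissipativity. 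Density of $D(A)$ in $\mathcal{H}$ is standard, and the range condition $\mathrm{ran}(\lambda I - A) = \mathcal{H}$ for some $\lambda>0$ follows by solving the stationary transport equation $(\lambda I - A)\bm{z} = \bm{g}$ along characteristics: since the flow of $\bar{\bm{V}}$ reduces to a rigid rotation about the instantaneous tilting centre $C_{\varphi_1}$ (or to a uniform translation when $\varphi_1 = 0$), every backward characteristic issued from a point in $\mathring{\mathscr{C}}$ exits through $\mathscr{L}$ in finite, uniformly bounded time, and the resulting explicit Duhamel-type representation belongs to $\mathcal{H}$.

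The second step exploits that $\mathscr{C} \times [0,S]$ is compact and $\tilde{\mathbf{\Sigma}}_\varphi$ is continuous, so $\{B(s)\}_{s\in[0,S]}$ is uniformly bounded on $\mathcal{H}$ and $s \mapsto B(s)$ is strongly continuous. Classical non-autonomous perturbation theory (e.g., Pazy, Ch.~5) then delivers a unique mild solution $\bm{z} \in C^0([0,S];\mathcal{H})$ to
\begin{align*}
\bm{z}(s) = T(s)\bm{z}_0 + \int_0^s T(s-\tau)\bigl[B(\tau)\bm{z}(\tau) + \tilde{\bm{h}}(\cdot,\tau)\bigr]\dif \tau,
\end{align*}
obtained as the fixed point of a Picard--Banach contraction on $C^0([0,S'];\mathcal{H})$ for small $S' > 0$ and extended to $[0,S]$ by a Gronwall estimate. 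For the classical statement, the additional hypotheses $\tilde{\mathbf{\Sigma}}_\varphi \in C^1$, $\tilde{\bm{h}} \in C^1([0,S];\mathcal{H})$, and $\bm{z}_0 \in D(A)$ (which encodes the compatibility $\bm{z}_0|_{\mathscr{L}}=\bm{0}$) ensure that the Duhamel integrand is continuously differentiable in $s$ with values in $D(A)$; the standard regularity theorem for inhomogeneous evolution equations then yields $\bm{z} \in C^1([0,S];\mathcal{H}) \cap C^0([0,S];D(A))$, which is exactly the claimed regularity after identifying $D(A)$ with $\{\bm{z}\in H^1(\mathring{\mathscr{C}};\mathbb{R}^2) : \bm{z}|_{\mathscr{L}}=\bm{0}\}$.

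The principal technical obstacle is the rigorous treatment of the inflow boundary: when $\mathscr{N}$ is nonempty, characteristics become tangent to $\partial \mathscr{C}$, and the usual $H^{1/2}$ trace on $\partial \mathscr{C}$ is insufficient to control $\bm{z}$ on $\mathscr{L}$ alone. This is customarily circumvented by working with the anisotropic trace space $L^2\bigl(\mathscr{L}, \abs{\bar{\bm{V}}\cdot\hat{\bm{n}}_{\partial \mathscr{C}}}\dif \ell\bigr)$, whose norm degenerates precisely on $\mathscr{N}$. For the velocity fields~\eqref{eq:Vss} and the typical smooth contact geometries of interest (elliptic, rectangular, and the like), $\mathscr{N}$ reduces to a finite set of isolated points and therefore does not interfere either with the trace arguments or with the characteristic construction of the resolvent, so the argument proceeds without further modification.
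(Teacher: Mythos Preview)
Your approach coincides with the paper's: cast~\eqref{eq:quasiStationary0} as an abstract Cauchy problem on $L^2(\mathring{\mathscr{C}};\mathbb{R}^2)$, verify that the transport operator $A = -(\bar{\bm{V}}\cdot\nabla_{\bm{x}})$ with inflow boundary condition generates a $C_0$-semigroup, and then invoke Pazy's perturbation results (Theorems~6.1.2 and~6.1.5) for the bounded, time-dependent multiplication operator plus inhomogeneity. The paper simply defers the generator property to Bardos~\cite{Bardos}, whereas you attempt to supply a Lumer--Phillips argument; your discussion of the weighted trace space on $\mathscr{L}$ is a nice addition that the paper omits.

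There is, however, a genuine gap in your range-condition step. You assert that ``every backward characteristic issued from a point in $\mathring{\mathscr{C}}$ exits through $\mathscr{L}$ in finite, uniformly bounded time'', but this is \emph{not} guaranteed by the hypotheses of Theorem~\ref{thm:ex2}: the separating condition $C_{\varphi_1}\notin\mathscr{C}$ appears only later, in Lemma~\ref{lemma:boundedness}, and is not assumed here. If $C_{\varphi_1}\in\mathring{\mathscr{C}}$, the flow of $\bar{\bm{V}}$ has closed circular orbits about $C_{\varphi_1}$ that never reach $\partial\mathscr{C}$, and your exit-time construction of the resolvent collapses on those orbits. The range condition for $\lambda>0$ still holds (along a closed orbit one solves a periodic ODE, which has a unique periodic solution since $\lambda>0$), but the argument must be split accordingly; alternatively, one appeals directly to Bardos' theorem, which covers divergence-free $C^1$ transport fields on bounded domains without any assumption on the location of stagnation points. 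Once this is repaired, your proof and the paper's are the same.
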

Compared to the smoothness requirements for the well-posedness of~\modref{stdmodel}, it may be realised that Theorem~\ref{thm:ex2} demands additionally $\bm{\varphi} \in \mathbb{R}\times C^0([0,S];\mathbb{R})$ for mild solutions, and $\bm{\varphi} \in \mathbb{R}\times C^1([0,S];\mathbb{R})$ (with regularisation parameter $\varepsilon \in \mathbb{R}_{>0}$) for classical solutions. 
This concludes the presentation of the three rolling contact models with FrBD friction. The next Sect.~\ref{sect:math} is devoted to investigating the stability and passivity of~\modref{stdmodel} and~\ref{linmodel2} (or, more precisely, Eqs.~\eqref{eq:standard2} and~\eqref{eq:quasiStationary0}).

\section{Mathematical properties}\label{sect:math}
The present section studies some important mathematical features of Eqs.~\eqref{eq:standard2} and~\eqref{eq:quasiStationary0}. More specifically, stability is discussed in Sect.~\ref{sect:stability}, whereas Sect.~\ref{sect:diss} focuses on dissipativity and passivity. The analysis is conducted with particular attention to the physical interpretation of these mathematical properties.

\subsection{Stability}\label{sect:stability}
Stability is discussed concerning both the input-to-state and input-to-output behaviours. For the systems described by Eqs.~\eqref{eq:standard2} and~\eqref{eq:quasiStationary0} with output~\eqref{eq:FandM} (resp.~\eqref{eq:Fundef} and~\eqref{eq:Malt}), the notions of input-to-state and {input-to-output stability are formalised according to Definitions~\ref{def:ISS} and~\ref{def:IOS} below, which are formulated for classical solutions of~\eqref{eq:standard2} and~\eqref{eq:quasiStationary0}, and regarding the rigid slip velocity $\bar{\bm{v}}(\bm{x},s)$ as an input.

\begin{definition}[Input-to-state-stability (ISS)]\label{def:ISS}
The PDE~\eqref{eq:standard2} (resp.~\eqref{eq:quasiStationary0}) is called (uniformly) \emph{input-to-state stable} (ISS) in the spatial $L^2$-norm if, for all inputs $\bar{\bm{v}} \in C^1(\mathscr{C}\times\mathbb{R}_{\geq 0};\mathbb{R}^2) \cap L^\infty(\mathscr{C}\times\mathbb{R}_{\geq 0};\mathbb{R}^2)$, rolling speeds $V\ped{r} \in C^1(\mathbb{R}_{\geq 0};[V\ped{min},V\ped{max}])$, spins $\bm{\varphi} \in \mathbb{R}\times C^1(\mathbb{R}_{\geq 0};\mathbb{R})\cap \mathbb{R}\times L^\infty(\mathbb{R}_{\geq 0};\mathbb{R})$, and ICs $\bm{z}_0 \in \mathscr{D}(\mathscr{A})$, there exist functions $\beta \in \mathcal{KL}$ and $\gamma \in \mathcal{K}_\infty$ such that
\begin{align}\label{eq:ISSbetaGamma}
\begin{split}
\norm{\bm{z}(\cdot,s)}_{L^2(\mathring{\mathscr{C}};\mathbb{R}^2)} & \leq \beta\Bigl(\norm{\bm{z}_0(\cdot)}_{L^2(\mathring{\mathscr{C}};\mathbb{R}^2)}, s\Bigr) + \gamma\Bigl( \norm{\bar{\bm{v}}(\cdot,\cdot)}_\infty\Bigr), \quad s \in \mathbb{R}_{\geq 0}.
\end{split}
\end{align}
\end{definition}

\begin{definition}[Input-to-output stability (IOS)]\label{def:IOS}
The PDE~\eqref{eq:standard2} (resp.~\eqref{eq:quasiStationary0}) with output~\eqref{eq:FandM} (resp.~\eqref{eq:Fundef} and~\eqref{eq:Malt}) is called \emph{input-to-output stable} (IOS) if, for all inputs $\bar{\bm{v}} \in C^1(\mathscr{C}\times\mathbb{R}_{\geq 0};\mathbb{R}^2) \cap L^\infty(\mathscr{C}\times\mathbb{R}_{\geq 0};\mathbb{R}^2)$, rolling speeds $V\ped{r} \in C^1(\mathbb{R}_{\geq 0};[V\ped{min},V\ped{max}])$, spins $\bm{\varphi} \in \mathbb{R}\times C^1(\mathbb{R}_{\geq 0};\mathbb{R})\cap \mathbb{R}\times L^\infty(\mathbb{R}_{\geq 0};\mathbb{R})$, and ICs $\bm{z}_0 \in \mathscr{D}(\mathscr{A})$, there exist functions $\beta \in \mathcal{KL}$ and $\alpha \in \mathcal{K}$ such that 
\begin{align}\label{eq:FBBBBBBSS}
\begin{split}
\norm{\bigl(\bm{F}_{\bm{x}}(s),M_z(s)\bigr)}_2 & \leq \beta\Bigl(\norm{\bm{z}_0(\cdot)}_{L^2(\mathring{\mathscr{C}};\mathbb{R}^2)}, s\Bigr)  + \alpha\Bigl(\norm{\bar{\bm{v}}(\cdot,\cdot)}_\infty\Bigr), \quad s\in \mathbb{R}_{\geq 0}.
\end{split}
\end{align}
\end{definition}
The above definitions, inspired from \cite{Khalil,Prieur}, are contingent on the specific systems considered in the paper; a modern, exhaustive introduction to the notions of ISS and \emph{integral input-to-state stability} (iISS) for infinite-dimensional systems may be instead found in \cite{Prieur}. It is also worth stressing that Definitions~\ref{def:ISS} and~\ref{def:IOS} have been enounced concerning the simplified PDEs~\eqref{eq:standard2} and~\eqref{eq:quasiStationary0}, which, in contrast to~\modref{stdmodel} and~\ref{linmodel2}, are postulated on fixed spatial domains and consider a constant transport velocity. Indeed, for these equations, well-posedness was established by Theorems~\ref{thm:ex1} and~\ref{thm:ex2}, which cover both classical and mild solutions. Some of the results advocated in the sequel may, however, be extended to the case of time-varying domains and transport velocities, albeit not without additional difficulties. In this context, it should be clarified that effects associated with variable normal forces and contact areas are not primarily related to the notions of stability and dissipativity for the (rolling) friction model; this is another reason why such dynamics are disregarded in the following analyses. 

Finally, it is worth commenting on the choice of identifying $\bar{\bm{v}}(\bm{x},s)$ as the main input to the PDEs~\eqref{eq:standard2} and~\eqref{eq:quasiStationary0}. Concerning specifically Eq.~\eqref{eq:standard2}, it is clear that, for a constant rolling velocity, the rigid relative velocity constitutes indeed the only input, and incorporates all the information about the translational slips $\bm{\sigma}(s)$ and the total spin $\varphi(s)$. For Eq.~\eqref{eq:quasiStationary0}, this is not true anymore, since the spin input $\bm{\varphi}(s)$ enters the PDE nonlinearly, causing additional terms to appear in the transport velocity. Additionally, $\bar{\bm{v}}\ped{r}(\bm{z}(\bm{x},s),\bm{x},s) \not = \bar{\bm{v}}(\bm{x},s)$, and, more generally, the tuple $(\bm{\sigma}(s),\bm{\varphi}(s), V\ped{r}(s))\in \mathbb{R}^5$ should be regarded as the real input to Eq.~\eqref{eq:quasiStationary0}. However, since there exists a constant $\chi \in \mathbb{R}_{>0}$ such that $\norm{\bar{\bm{v}}(\cdot,\cdot)}_\infty \leq \chi \norm{(\bm{\sigma}(\cdot),\bm{\varphi}(\cdot), V\ped{r}(\cdot))}_\infty$, the ISS and IOS inequalities of the type~\eqref{eq:ISSbetaGamma} and~\eqref{eq:FBBBBBBSS} may always be reformualted in terms of $(\bm{\sigma}(s),\bm{\varphi}(s), V\ped{r}(s))$.

With these premises, ISS and IOS are enounced respectively in Sects.~\ref{sect:InSD0D} and~\ref{sect:IOS}.
\subsubsection{Input-to-state stability}\label{sect:InSD0D}

Concerning both the PDEs~\eqref{eq:standard2} and~\eqref{eq:quasiStationary0}, ISS results are asserted by Lemma~\ref{lemma:boundedness} below.


\begin{lemma}[Input-to-state stability (ISS)]\label{lemma:boundedness}
Suppose that $\mathscr{C} \subset \mathbb{R}^2$ is nonempty and compact. Then, the PDE~\eqref{eq:standard2} is (uniformly) ISS in the spatial $L^2$-norm. For the PDE~\eqref{eq:quasiStationary0}, assume additionally that the point $C_{\varphi_1}$ with coordinate $\bm{x}_{C_{\varphi_1}} = \bm{\varepsilon}/\varphi_1$ lies outside $\mathscr{C}$, and that $\mathscr{C}$ is strictly separable from $C_{\varphi_1}$ by a line, i.e., there exists a vector $\mathbb{R}^2 \ni \bar{\bm{\nu}} = [\bar{\nu}_x\; -\bar{\nu}_y]^{\mathrm{T}}\not = \bm{0}$ and a constant $\bar{\eta} \in \mathbb{R}_{>0}$ such that
\begin{align}\label{eq:N}
\bar{N}(\bm{x}) \triangleq \bar{\bm{\nu}}\cdot \bigl(\bm{x}-\bm{x}_{C_{\varphi_1}}\bigr) < -\bar{\eta}, \quad \bm{x} \in \mathscr{C}.
\end{align}
Then, the PDE~\eqref{eq:quasiStationary0} is also (uniformly) ISS in the spatial $L^2$-norm. 
\begin{proof}
See Appendix~\ref{app:Proof2}.
\end{proof}
\end{lemma}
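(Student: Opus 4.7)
The plan is to establish~\eqref{eq:ISSbetaGamma} for both PDEs by constructing a weighted $L^2$-Lyapunov functional
\begin{align*}
V(s) \triangleq \tfrac{1}{2}\int_{\mathring{\mathscr{C}}} e^{\lambda w(\bm{x})}\norm{\bm{z}(\bm{x},s)}_2^2 \dif\bm{x},
\end{align*}
with $w \in C^1(\mathscr{C};\mathbb{R})$ a spatial weight tailored to the transport direction and $\lambda>0$ a free parameter, and to derive from it a differential inequality of the form $\dod{V}{s}\le -\alpha V + \chi\bigl(\norm{\bar{\bm{v}}(\cdot,\cdot)}_\infty\bigr)$. A Gronwall-type comparison then delivers the desired $\mathcal{KL}$-$\mathcal{K}_\infty$ estimate. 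I would first work under the classical-solution regularity of Theorems~\ref{thm:ex1} and~\ref{thm:ex2}, so that the PDE may be substituted into $\dod{V}{s}$, and extend to mild solutions by a density argument, since the final bound involves only the $L^2$-norm of the IC and the $L^\infty$-norm of the inputs.

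Differentiating $V$ along the PDE produces three contributions: the convection $-\int e^{\lambda w}\bm{z}^{\mathrm{T}}(\bar{\bm{V}}\cdot\nabla_{\bm{x}})\bm{z}\dif\bm{x}$, the reactive term (involving $\tilde{\mathbf{\Sigma}}$ in the standard case and $\tilde{\mathbf{\Sigma}}_\varphi$ in the large-spin case), and the forcing $\int e^{\lambda w}\bm{z}^{\mathrm{T}}\tilde{\bm{h}}\dif\bm{x}$. For the convection, the identity $\bm{z}^{\mathrm{T}}(\bar{\bm{V}}\cdot\nabla)\bm{z}=\tfrac{1}{2}(\bar{\bm{V}}\cdot\nabla)\norm{\bm{z}}_2^2$ combined with $\nabla\cdot\bar{\bm{V}}=0$ (holding trivially in~\eqref{eq:standard2}, and by skew-symmetry of $\mathbf{A}_{\varphi_1}$ in~\eqref{eq:quasiStationary0}) and the divergence theorem yield
\begin{align*}
-\int_{\mathring{\mathscr{C}}} e^{\lambda w}\bm{z}^{\mathrm{T}}(\bar{\bm{V}}\cdot\nabla)\bm{z}\dif\bm{x} = -\tfrac{1}{2}\int_{\partial\mathscr{C}} e^{\lambda w}\norm{\bm{z}}_2^2(\bar{\bm{V}}\cdot\hat{\bm{n}})\dif l + \tfrac{\lambda}{2}\int_{\mathring{\mathscr{C}}} e^{\lambda w}\norm{\bm{z}}_2^2(\bar{\bm{V}}\cdot\nabla w)\dif\bm{x}.
\end{align*}
The boundary integral is non-positive: it vanishes on $\mathscr{L}$ by the BC, vanishes on $\mathscr{N}$ by definition, and is non-positive on $\mathscr{T}$ owing to the outflow sign. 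The volume term is exploited by selecting $w$ so that $\bar{\bm{V}}\cdot\nabla w\le -c$ uniformly on $\mathscr{C}$ for some $c>0$. For~\eqref{eq:standard2}, $w(\bm{x})=x$ gives $c=1$ trivially. For~\eqref{eq:quasiStationary0}, the orbits of $\bar{\bm{V}}$ are circles centred at $C_{\varphi_1}$, and the separation condition~\eqref{eq:N} guarantees that $\mathscr{C}$ sits in an open half-plane bounded by a line through $C_{\varphi_1}$; hence the polar angle $\theta(\bm{x})$ about $C_{\varphi_1}$ admits a single-valued $C^1$ branch on $\mathscr{C}$ confined to an interval of length strictly less than $\pi$, and the choice $w(\bm{x})=\sgn(\varphi_1)\theta(\bm{x})$ produces $\bar{\bm{V}}\cdot\nabla w=-\abs{\varphi_1}$, i.e.\ $c=\abs{\varphi_1}>0$.

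The reactive and forcing terms are handled via uniform pointwise bounds. Continuity of $\mathbf{M}$ and $\mathbf{G}$, positive-definiteness of $\mathbf{G}$ (inherited from $\mathbf{M}\succ\bm{0}$ and $\mathbf{\Sigma}_1\succeq\bm{0}$), and compactness of the closure of the range of $\bar{\bm{v}}$ whenever $\norm{\bar{\bm{v}}(\cdot,\cdot)}_\infty<\infty$, together deliver bounds $\norm{\tilde{\mathbf{\Sigma}}_\varphi(\bm{x},s)}\le C_1$ and $\norm{\tilde{\bm{h}}(\bm{x},s)}_2\le\kappa\bigl(\norm{\bar{\bm{v}}(\cdot,\cdot)}_\infty\bigr)$ for some constant $C_1$ and some $\kappa\in\mathcal{K}$. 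Cauchy--Schwarz on the reactive term and Young's inequality with parameter $\delta>0$ on the forcing term then give
\begin{align*}
\dod{V}{s} \le (-\lambda c + 2C_1 + \delta)\,V + \Theta\,\kappa^2\bigl(\norm{\bar{\bm{v}}(\cdot,\cdot)}_\infty\bigr),
\end{align*}
where $\Theta$ is a positive constant depending only on $\mathscr{C}$, $\lambda$ and $\delta$. Taking $\lambda$ large enough that $-\lambda c + 2C_1 + \delta =:-\alpha<0$, the usual comparison lemma yields an exponential bound on $V$ modulo the input term; translating back from $V$ to $\norm{\bm{z}(\cdot,s)}_{L^2}^2$ via the uniform bounds on $e^{\pm\lambda w}$ then produces $\beta\in\mathcal{KL}$ and $\gamma\in\mathcal{K}_\infty$ satisfying~\eqref{eq:ISSbetaGamma}. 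The main obstacle is the construction of the weight $w$ in the large-spin case: the polar angle $\theta$ is intrinsically multi-valued, and~\eqref{eq:N} is precisely what turns it into a single-valued, bounded $C^1$ function on $\mathscr{C}$. A secondary technical point is the uniform invertibility of $\mathbf{G}(\bar{\bm{v}})$ underlying the above matrix bounds, which follows from the standing positive-definiteness of $\mathbf{M}$ together with compactness of the closure of the range of $\bar{\bm{v}}$.
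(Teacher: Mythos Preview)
Your proposal is correct and follows the same overall strategy as the paper's proof: a weighted $L^2$-Lyapunov functional $\tfrac12\int e^{\lambda w}\norm{\bm z}_2^2$, integration by parts using $\nabla_{\bm x}\cdot\bar{\bm V}=0$ and the inflow BC to discard the boundary contribution, uniform bounds on $\tilde{\mathbf\Sigma}_\varphi$ and $\tilde{\bm h}$, Young's inequality on the forcing, and a Gr\"onwall comparison. The one substantive difference is the choice of weight for~\eqref{eq:quasiStationary0}: you take $w(\bm x)=\sgn(\varphi_1)\,\theta(\bm x)$, the polar angle about $C_{\varphi_1}$, which gives the \emph{constant} directional derivative $\bar{\bm V}\cdot\nabla w=-\abs{\varphi_1}$, whereas the paper uses the simpler \emph{linear} weight $w(\bm x)=\bm\nu\cdot\bm x$, for which $\bar{\bm V}\cdot\nabla w$ is an affine function of $\bm x$ bounded above by a negative constant directly via the separation hypothesis~\eqref{eq:N}. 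Both constructions work; yours is geometrically natural and produces a uniform decay coefficient, but requires the extra step of selecting a single-valued $C^1$ branch of $\theta$ on $\mathscr C$ (which, as you note, is precisely what~\eqref{eq:N} guarantees), while the paper's linear weight is more elementary and reads the needed negativity straight off the hypothesis. For~\eqref{eq:standard2} the two choices coincide ($w=x$).
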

The separation condition expressed by Eq.~\eqref{eq:N} in Lemma~\ref{lemma:boundedness} is reminiscent of the discussion already initiated in Sect.~\ref{sect:semilinear}. As noted earlier, the instantaneous tilting centre $C_{\varphi_1}$ (which in this case is fixed) must lie outside the contact area. If this holds, and if one can draw a line through $C_{\varphi_1}$ that does not intersect $\mathscr{C}$, then the PDE~\eqref{eq:quasiStationary0} can be shown to be ISS. In essence, this condition ensures that spinning effects do not dominate the rolling dynamics. Of course, the PDE~\eqref{eq:standard2} implicitly assume small spin slips, and therefore the separation condition of Eq.~\eqref{eq:N} is not formally needed.

Lemma~\ref{lemma:boundedness} is also instrumental in showing the IOS of the PDEs~\eqref{eq:standard2} and~\eqref{eq:quasiStationary0}\footnote{In this context, it should be observed that, whilst this paper is mainly concerned with ISS and IOS, suitable iISS estimates \cite{Prieur} may also be derived .}, which is addressed next in Sect.~\ref{sect:IOS}.

\subsubsection{Input-to-output stability}\label{sect:IOS}
Intuitively, bounded bristle deformations should also produce bounded tangential forces and vertical moments. This is the interpretation of the IOS bound~\eqref{eq:FBBBBBBSS}. For coherence, in what follows, it is assumed that $p(\bm{x},s) = p(\bm{x})$, with $p \in C^0(\mathscr{C};\mathbb{R}_{\geq 0})$ (the contact area is automatically supposed to be fixed over time, or equivalently, travelled distance, as required by the PDEs~\eqref{eq:standard2} and~\eqref{eq:quasiStationary0}).
Accordingly, from Eq.~\eqref{eq:Fundef} and the conditions $\bar{\bm{v}} \in C^0(\mathscr{C}\times\mathbb{R}_{\geq 0};\mathbb{R}^2) \cap L^\infty(\mathscr{C}\times\mathbb{R}_{\geq 0};\mathbb{R}^2)$, $V\ped{r} \in C^0(\mathbb{R}_{\geq 0};[V\ped{min},V\ped{max}])$, and $\bm{\varphi} \in \mathbb{R}\times C^0(\mathbb{R}_{\geq 0};\mathbb{R})\cap \mathbb{R}\times L^\infty(\mathbb{R}_{\geq 0};\mathbb{R})$, it is possible to infer the existence of constants $\beta_1,\alpha_1 \in \mathbb{R}_{>0}$ such that
\begin{align}\label{eq:Fbound}
\norm{\bm{F}_{\bm{x}}(s)}_2 \leq \beta_1\norm{\bm{z}(\cdot,s)}_{L^2(\mathring{\mathscr{C}};\mathbb{R}^2)} + \alpha_1\norm{\bar{\bm{v}}(\cdot,\cdot)}_\infty, \quad s\in \mathbb{R}_{\geq 0}.
\end{align}
Similarly, from Eqs.~\eqref{eq:Mzunderfr} and~\eqref{eq:Malt}, it follows the existence of $\beta_2,\alpha_2 \in \mathbb{R}_{>0}$ and $\beta_3,\alpha_3 \in \mathbb{R}_{\geq0}$ satisfying
\begin{align}\label{eq:Mbound}
\abs{M_z(s)} & \leq \beta_2\norm{\bm{z}(\cdot,s)}_{L^2(\mathring{\mathscr{C}};\mathbb{R}^2)} + \beta_3\norm{\bm{z}(\cdot,s)}_{L^2(\mathring{\mathscr{C}};\mathbb{R}^2)}^2 +\alpha_2\norm{\bar{\bm{v}}(\cdot)}_\infty+\alpha_3\norm{\bar{\bm{v}}(\cdot,\cdot)}_\infty^2, \quad s \in \mathbb{R}_{\geq 0},
\end{align}
with $\beta_3 = \alpha_3 = 0$ for~\eqref{eq:Mzunderfr}.

Combining Eqs.~\eqref{eq:Fbound} and~\eqref{eq:Mbound} with~\eqref{eq:ISSbetaGamma}, and exploiting the fact that all the norms are equivalent in finite-dimensional spaces, gives IOS for all $\bm{z}(\cdot,s) \in L^2(\mathring{\mathscr{C}};\mathbb{R}^2)$. This simple result is formalised in Corollary~\ref{corollary:IOS} below.
\begin{corollary}[Input-to-output stability (IOS)]\label{corollary:IOS}
Suppose that $p(\bm{x},s) = p(\bm{x})$, with $p \in C^0(\mathscr{C};\mathbb{R}_{\geq 0})$. Then, under the corresponding assumptions as Lemma~\ref{lemma:boundedness}, the PDE~\eqref{eq:standard2} (resp.~\eqref{eq:quasiStationary0}) with output~\eqref{eq:FandM} (resp.~\eqref{eq:Fundef} and~\eqref{eq:Malt}) is IOS.
\end{corollary}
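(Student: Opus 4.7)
The plan is to directly combine the already-established pointwise-in-$s$ output bounds in Eqs.~\eqref{eq:Fbound}--\eqref{eq:Mbound} with the ISS estimate delivered by Lemma~\ref{lemma:boundedness}, which gives control of $\norm{\bm{z}(\cdot,s)}_{L^2(\mathring{\mathscr{C}};\mathbb{R}^2)}$ in terms of the initial condition and the input, and then to absorb all $s$-dependent versus $s$-independent contributions into a single $\mathcal{KL}$ function and a single $\mathcal{K}$ function respectively. Because the joint norm $\norm{(\bm{F}_{\bm{x}}(s),M_z(s))}_2$ on $\mathbb{R}^3$ is equivalent to $\norm{\bm{F}_{\bm{x}}(s)}_2 + \abs{M_z(s)}$, it suffices to bound each term separately and then add them.

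Concretely, I would first apply the bounds~\eqref{eq:Fbound} and~\eqref{eq:Mbound} to write
\begin{align*}
\norm{\bigl(\bm{F}_{\bm{x}}(s),M_z(s)\bigr)}_2
&\leq (\beta_1+\beta_2)\norm{\bm{z}(\cdot,s)}_{L^2(\mathring{\mathscr{C}};\mathbb{R}^2)} + \beta_3\norm{\bm{z}(\cdot,s)}_{L^2(\mathring{\mathscr{C}};\mathbb{R}^2)}^2 \\
&\quad + (\alpha_1+\alpha_2)\norm{\bar{\bm{v}}(\cdot,\cdot)}_\infty + \alpha_3\norm{\bar{\bm{v}}(\cdot,\cdot)}_\infty^2,
\end{align*}
and then substitute the ISS estimate $\norm{\bm{z}(\cdot,s)}_{L^2(\mathring{\mathscr{C}};\mathbb{R}^2)} \leq \beta(\norm{\bm{z}_0(\cdot)}_{L^2(\mathring{\mathscr{C}};\mathbb{R}^2)},s) + \gamma(\norm{\bar{\bm{v}}(\cdot,\cdot)}_\infty)$ furnished by Lemma~\ref{lemma:boundedness}. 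For the linear terms this substitution is immediate; for the quadratic term I would apply the elementary inequality $(a+b)^2 \leq 2a^2 + 2b^2$ to decouple the contribution of the initial data from that of the input.

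Next I would group terms by their dependence. Define
\[
\tilde{\beta}(r,s) \triangleq (\beta_1+\beta_2)\beta(r,s) + 2\beta_3\,\beta(r,s)^2,
\qquad
\tilde{\alpha}(v) \triangleq (\beta_1+\beta_2)\gamma(v) + 2\beta_3\gamma(v)^2 + (\alpha_1+\alpha_2)v + \alpha_3 v^2.
\]
Since the square of a $\mathcal{KL}$-function is again $\mathcal{KL}$, and any nonnegative linear combination of $\mathcal{KL}$-functions is $\mathcal{KL}$, one checks that $\tilde{\beta}\in \mathcal{KL}$; analogously, $\tilde{\alpha}\in\mathcal{K}$ since $\gamma \in \mathcal{K}_\infty$ and the remaining summands are $\mathcal{K}$-functions of $v\geq 0$. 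Collecting the estimates yields
\[
\norm{\bigl(\bm{F}_{\bm{x}}(s),M_z(s)\bigr)}_2 \leq \tilde{\beta}\bigl(\norm{\bm{z}_0(\cdot)}_{L^2(\mathring{\mathscr{C}};\mathbb{R}^2)},s\bigr) + \tilde{\alpha}\bigl(\norm{\bar{\bm{v}}(\cdot,\cdot)}_\infty\bigr), \quad s\in\mathbb{R}_{\geq 0},
\]
which is exactly the IOS estimate~\eqref{eq:FBBBBBBSS}.

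There is no genuine obstacle in this argument: the only subtlety is the quadratic term arising from the deformed-configuration moment~\eqref{eq:Malt}, which prevents a direct linear substitution but is handled cleanly by $(a+b)^2 \leq 2a^2 + 2b^2$ together with the closure properties of the classes $\mathcal{K}$ and $\mathcal{KL}$ under squaring and addition. For the undeformed-configuration output~\eqref{eq:Mzunderfr} the proof further simplifies since $\beta_3 = \alpha_3 = 0$ and the estimate becomes linear in the bounds of Lemma~\ref{lemma:boundedness}.
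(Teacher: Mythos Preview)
Your proposal is correct and follows essentially the same approach as the paper, which simply states that combining Eqs.~\eqref{eq:Fbound}--\eqref{eq:Mbound} with the ISS estimate~\eqref{eq:ISSbetaGamma} and norm equivalence in finite dimensions yields IOS. You have merely made explicit the handling of the quadratic term via $(a+b)^2\leq 2a^2+2b^2$ and the closure properties of $\mathcal{K}$ and $\mathcal{KL}$, which the paper leaves implicit.
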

\begin{remark}\label{Remark:oododo}
As explained in detail in \cite{DistrLuGre}, replacing the total derivatives appearing in Eqs.~\eqref{eq:fxt} and~\eqref{eq:fxs} with the partial one would preclude a similar result to that asserted by Corollary~\ref{corollary:IOS}. 
\end{remark}
Remark~\ref{Remark:oododo} concludes the stability analysis of the systems described by Eqs.~\eqref{eq:standard2} and~\eqref{eq:quasiStationary0}. The following Sect.~\ref{sect:diss} examines their dissipativity and passivity properties.

\subsection{Dissipativity and passivity}\label{sect:diss}

The notion of passivity relates to important physical properties of a friction model. Before stating its mathematical definition in the context of this paper, it is beneficial to clarify certain aspects regarding the power dissipation at the interface between the contacting bodies. This is done with respect to~\modref{stdmodel} and~\ref{linmodel2} separately. For simplicity, in the sequel, the contact area is supposed to be fixed, although the conclusions of the following analysis are formally independent of this assumption. 
First, \modref{stdmodel} with output~\eqref{eq:FandM} is considered. Accordingly, the dissipated power per unit of speed may be calculated as
\begin{align}\label{eq:Ppass}
\begin{split}
-\bigl\langle p(\bm{x})\bm{f}(\cdot,s), \bar{\bm{v}}\ped{r}(\cdot,s)\bigr\rangle_{L^2(\mathring{\mathscr{C}};\mathbb{R}^2)} & = -\bigl\langle p(\bm{x})\bm{f}(\cdot,s), \bar{\bm{v}}(\cdot,s)\bigr\rangle_{L^2(\mathring{\mathscr{C}};\mathbb{R}^2)} =- \iint_{\mathscr{C}} p(\bm{x})\bm{f}^{\mathrm{T}}(\bm{x},s)\bar{\bm{v}}(\bm{x},s) \dif \bm{x} \\
& = \iint_{\mathscr{C}}p(\bm{x})\bm{f}^{\mathrm{T}}(\bm{x},s)\bm{\sigma}(s) \dif \bm{x} + \iint_{\mathscr{C}}p(\bm{x})\bigl[xf_y(\bm{x},s)-yf_x(\bm{x},s)\bigr]\varphi(s) \dif \bm{x} \\
& = \bm{F}_{\bm{x}}^{\mathrm{T}}(s)\bm{\sigma}(s) + M_z(s)\varphi(s), \quad s \in [0,S].
\end{split}
\end{align}
Equation~\eqref{eq:Ppass} yields the expected result, consistent with previous literature: the total power dissipated is given by the product between the tangential forces and vertical moment, $(\bm{F}_{\bm{x}}(s), M_z(s)) \in \mathbb{R}^3$, and the translational and total spin slips, $(\bm{\sigma}(s), \varphi(s)) \in \mathbb{R}^3$. This outcome is unsurprising, as under the assumption of sufficiently small spins, the slip and spin variables indeed act as the dual quantities of the tangential forces and vertical moment. It is worth emphasising that the final equivalence in Eq.~\eqref{eq:Ppass} was obtained by evaluating the vertical equilibrium in the undeformed configuration. Conversely, when the bristle deformation $\bm{z}(\bm{x},s)$ is relatively large, and the deformed configuration can be identified uniquely, the vertical equilibrium may be evaluated using~\eqref{eq:Malt}. Considering again~\modref{stdmodel}, this leads to
\begin{align}\label{eq:Ppass0}
\begin{split}
-\bigl\langle p(\bm{x})\bm{f}(\cdot,s), \bar{\bm{v}}\ped{r}(\cdot,s)\bigr\rangle_{L^2(\mathring{\mathscr{C}};\mathbb{R}^2)} & =  \bm{F}_{\bm{x}}^{\mathrm{T}}(s)\bm{\sigma}(s) + M_z(s)\varphi(s) \\
& \quad - \varphi(s)\iint_{\mathscr{C}}p(\bm{x})\bigl[z_x(\bm{x},s)f_y(\bm{x},s)-z_y(\bm{x},s)f_x(\bm{x},s)\bigr] \dif \bm{x}, \quad s \in [0,S],
\end{split}
\end{align}
which removes a dissipation term proportional to the total spin $\varphi(s)$ (additional spin contributions felt by the bristle's tip have been neglected).

Employing instead~\modref{linmodel2}, and considering situations in which one of the two bodies is rigid, such as in the case of tyre-road interaction ($\varphi_1(s) = \varphi_\gamma(s)$ and $\varphi_2(s) = \varphi_\psi(s)$), the vertical equilibrium may be evaluated using either Eq.~\eqref{eq:FandM} or~\eqref{eq:Fundef} and~\eqref{eq:Malt}. Specifically, considering~\modref{linmodel2} with output~\eqref{eq:FandM}, similar calculations as previously yield
\begin{align}\label{eq:Ppass2}
\begin{split}
& -\bigl\langle p(\bm{x})\bm{f}(\cdot,s), \bar{\bm{v}}\ped{r}(\bm{z}(\cdot,s),\cdot,s)\bigr\rangle_{L^2(\mathring{\mathscr{C}};\mathbb{R}^2)} 
= \bm{F}_{\bm{x}}^{\mathrm{T}}(s)\bm{\sigma}(s) + M_z(s)\varphi(s) \\
& \qquad \qquad \quad + \varphi_\psi(s)\iint_{\mathscr{C}}p(\bm{x})\bigl[z_x(\bm{x},s)f_y(\bm{x},s)-z_y(\bm{x},s)f_x(\bm{x},s)\bigr] \dif \bm{x}, \quad s \in [0,S].
\end{split}
\end{align}
On the other hand, when the vertical equilibrium is computed according to Eqs.~\eqref{eq:Fundef} and~\eqref{eq:Malt}, the following expression is obtained:
\begin{align}\label{eq:Ppass3}
\begin{split}
&- \bigl\langle p(\bm{x})\bm{f}(\cdot,s), \bar{\bm{v}}\ped{r}(\bm{z}(\cdot,s),\cdot,s)\bigr\rangle_{L^2(\mathring{\mathscr{C}};\mathbb{R}^2)} 
= \bm{F}_{\bm{x}}^{\mathrm{T}}(s)\bm{\sigma}(s) + M_z(s)\varphi(s) \\
& \qquad \qquad \quad - \varphi_\gamma(s)\iint_{\mathscr{C}}p(\bm{x})\bigl[z_x(\bm{x},s)f_y(\bm{x},s)-z_y(\bm{x},s)f_x(\bm{x},s)\bigr] \dif \bm{x}, \quad s \in [0,S].
\end{split}
\end{align}
Equations~\eqref{eq:Ppass2} and~\eqref{eq:Ppass3}, which apply when one of the bodies in rolling contact is rigid, differ in how they treat the geometric and effective spin terms. In essence, Eq.~\eqref{eq:Ppass2} includes an additional contribution from the turn spin $\varphi_\psi(s)$, arising from the fact that the bristle tips experience a relative velocity component associated with an effective rotation about the vertical axis.
In contrast, Eq.~\eqref{eq:Ppass3} states that the geometric spin $\varphi_\gamma(s)$ does not dissipate power in the undeformed configuration, since it affects only the trajectory of the bristles within the contact patch and is therefore related to the velocity of their roots rather than their tips (which only feel the effect of $\varphi_\psi(s)$).
Both Eq.~\eqref{eq:Ppass2} and~\eqref{eq:Ppass3} are simultaneously valid; the difference lies solely in the manner in which the vertical moment is evaluated. It is also worth remarking that, for~\modref{stdmodel}, the equivalence $\bar{\bm{v}}\ped{r}(\bm{z}(\bm{x},s),\bm{x},s) = \bar{\bm{v}}(\bm{x},s)$ only holds when $\varphi_2(s) = 0$. Since, from a physical viewpoint, passivity should be regarded as a property of the friction model in isolation, it appears reasonable to consider the case $\varphi_2(s) = 0$ when analysing~\modref{stdmodel}. Indeed, when $\varphi_2(s) \not = 0$, \modref{stdmodel} may be interpreted as a feedback system where the rigid relative velocity depends upon the state through a feedback term $\mathbf{A}_{\varphi_2}(s)\bm{z}(\bm{x},s)$. In this context, a question that naturally arises is whether all the nondimensional velocities $\bar{\bm{v}}(\bm{x},s)$, rolling speeds $V\ped{r}(s)$, and spins $\bm{\varphi}(s)$ (with $\varphi_2(s) = 0$ in Eq.~\eqref{eq:quasiStationary0}) ensuring the well-posedness of the PDEs~\eqref{eq:standard2} and~\eqref{eq:quasiStationary0}, as asserted by Theorems~\ref{thm:ex1} and~\ref{thm:ex2}, imply automatically $-\langle p(\bm{x})\bm{f}(\cdot,s), \bar{\bm{v}}\ped{r}(\cdot,s)\rangle_{L^2(\mathring{\mathscr{C}};\mathbb{R}^2)} \geq 0$ for all $s \in [0,S]$ in~\eqref{eq:Ppass}.  In particular, restricted to classical solutions of the equations under consideration, the concepts of dissipativity and passivity are enounced according to Definition~\ref{def:idss2} below. A more general introduction to these notions, limited to linear systems, is offered in \cite{Zwart}.

\begin{definition}[Dissipativity and passivity]\label{def:idss2}
The system described by the PDE~\eqref{eq:standard2} (resp.~\eqref{eq:quasiStationary0}) with output~\eqref{eq:Fundef} (resp.~\eqref{eq:Fundef} and~\eqref{eq:Malt}) is called \emph{dissipative} if, for all inputs $\bar{\bm{v}} \in C^1(\mathscr{C}\times\mathbb{R}_{\geq 0};\mathbb{R}^2) \cap L^\infty(\mathscr{C}\times\mathbb{R}_{\geq 0};\mathbb{R}^2)$, rolling speeds $V\ped{r} \in [V\ped{min},V\ped{max}]$, spins $\bm{\varphi} \in \mathbb{R}\times \{0\}$, and ICs $\bm{z}_0 \in \mathscr{D}(\mathscr{A})$, there exist a supply rate $w : L^2(\mathring{\mathscr{C}};\mathbb{R}^2)\times L^2(\mathring{\mathscr{C}};\mathbb{R}^2) \to \mathbb{R}$ and a storage function $W : L^2(\mathring{\mathscr{C}};\mathbb{R}^2) \to \mathbb{R}_{\geq 0}$ such that 
\begin{align}\label{eq:Fvres}
\begin{split}
&\int_0^s w\bigl(\bm{f}(\cdot,s^\prime),-\bar{\bm{v}}\ped{r}(\cdot,s^\prime)\bigr) \dif s^\prime = \int_0^s w\bigl(\bm{f}(\cdot,s^\prime),-\bar{\bm{v}}(\cdot,s^\prime)\bigr) \dif s^\prime \geq W\bigl(\bm{z}(\cdot,s)\bigr)-W\bigl(\bm{z}_0(\cdot)\bigr), \quad s \in [0,S].
\end{split}
\end{align}
It is called \emph{passive} if $w(\bm{f}(\cdot,s),-\bar{\bm{v}}(\cdot,s)) =-\langle p(\cdot)\bm{f}(\cdot,s), \bar{\bm{v}}\ped{r}(\cdot,s)\rangle_{L^2(\mathring{\mathscr{C}};\mathbb{R}^2)}= -\langle p(\cdot)\bm{f}(\cdot,s), \bar{\bm{v}}(\cdot,s)\rangle_{L^2(\mathring{\mathscr{C}};\mathbb{R}^2)}$.
\end{definition}
It is important to clarify that the notion of passivity, as formalised in Definition~\ref{def:idss2}, pertains to the rolling contact process rather than the friction model in isolation. Nonetheless, since rolling contact dynamics are directly governed by frictional interactions, the system should be expected to dissipate energy. This observation provides a physical rationale for the passivity property: the friction-induced dissipation ensures that the rate of energy supplied to the system is bounded below by the time derivative of an appropriate storage function.
The main result in this sense is formalised in Lemma~\ref{lemma:DissF20} below.
\begin{lemma}[Passivity]\label{lemma:DissF20}
Suppose that $p \in C^1(\mathscr{C};\mathbb{R}_{\geq 0})$ satisfies
\begin{align}\label{eq:condP}
\nabla_{\bm{x}}\cdot p(\bm{x})\bar{\bm{V}}(\bm{x}) \leq 0, \quad \bm{x} \in \mathscr{C},
\end{align}
and, for all $\bm{y}\in \mathbb{R}^2$, $\mathbf{M}^2(\bm{y}) \mathbf{G}^{-1}(\bm{y})\mathbf{\Sigma}_1 \succeq \mathbf{0}$.
Then, the system described by the PDE~\eqref{eq:standard2} (resp.~\eqref{eq:quasiStationary0}) with output~\eqref{eq:Fundef} (resp.~\eqref{eq:Fundef} and~\eqref{eq:Malt}) is passive with storage function
\begin{align}\label{eq:VdissF2}
W\bigl(\bm{z}(\cdot,s)\bigr) \triangleq \dfrac{1}{2}\iint_{\mathscr{C}} p(\bm{x}) \bm{z}^{\mathrm{T}}(\bm{x},s)\mathbf{\Sigma}_0\bm{z}(\bm{x},s) \dif \bm{x}. 
\end{align}
\begin{proof}
See Appendix~\ref{app:Proof3}.
\end{proof}
\end{lemma}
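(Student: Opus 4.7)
The plan is to establish the pointwise-in-$s$ differential inequality $\frac{\dif}{\dif s}W(\bm{z}(\cdot,s)) \leq -\langle p(\cdot)\bm{f}(\cdot,s), \bar{\bm{v}}\ped{r}(\cdot,s)\rangle_{L^2(\mathring{\mathscr{C}};\mathbb{R}^2)}$ along classical solutions of the PDE, and then integrate in $s\in[0,S]$ to recover Eq.~\eqref{eq:Fvres}. Since the hypothesis $\bm{\varphi}\in\mathbb{R}\times\{0\}$ of Definition~\ref{def:idss2} forces $\mathbf{A}_{\varphi_2}=\mathbf{0}$, both~\modref{stdmodel} and~\modref{linmodel2} satisfy $\bar{\bm{v}}\ped{r}(\bm{x},s)\equiv \bar{\bm{v}}(\bm{x},s)$, so the two pairings appearing in the passivity condition coincide throughout the argument.

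First I would differentiate $W$ under the integral sign and substitute the Eulerian decomposition $\partial_s\bm{z} = \dif \bm{z}/\dif s - (\bar{\bm{V}}\cdot\nabla_{\bm{x}})\bm{z}$ coming from the PDE, splitting $\dif W/\dif s$ into an \emph{algebraic} term $\iint_{\mathscr{C}}p\,\bm{z}^{\mathrm{T}}\mathbf{\Sigma}_0\,\dif \bm{z}/\dif s\,\dif \bm{x}$ and a \emph{transport} term $-\iint_{\mathscr{C}}p\,\bm{z}^{\mathrm{T}}\mathbf{\Sigma}_0(\bar{\bm{V}}\cdot\nabla_{\bm{x}})\bm{z}\,\dif \bm{x}$. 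Using symmetry of $\mathbf{\Sigma}_0$, the integrand of the transport term equals $\tfrac{1}{2}(\bar{\bm{V}}\cdot\nabla_{\bm{x}})(\bm{z}^{\mathrm{T}}\mathbf{\Sigma}_0\bm{z})$, so Gauss's theorem produces a boundary integral---vanishing on $\mathscr{L}$ by the homogeneous BC and on $\mathscr{N}$ by construction, non-positive on $\mathscr{T}$ since $\bar{\bm{V}}\cdot\hat{\bm{n}}_{\partial\mathscr{C}}>0$ and $\bm{z}^{\mathrm{T}}\mathbf{\Sigma}_0\bm{z}\geq 0$ there---plus a volume term $\tfrac{1}{2}\iint_{\mathscr{C}}(\nabla_{\bm{x}}\cdot(p\bar{\bm{V}}))\bm{z}^{\mathrm{T}}\mathbf{\Sigma}_0\bm{z}\,\dif \bm{x}$ that is non-positive by hypothesis~\eqref{eq:condP}. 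The whole transport contribution is therefore $\leq 0$ and can be dropped in the final bound.

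The core of the proof is the pointwise algebraic identity for the remaining contribution. Substituting the FrBD closure~\eqref{eq:ODEModel} for $\dot{\bm{z}}$ into the rheological law $\bm{f}=\mathbf{\Sigma}_0\bm{z}+\mathbf{\Sigma}_1\dot{\bm{z}}$, rescaling via $\bm{v}\ped{r}=V\ped{r}\bar{\bm{v}}$ and $\dot{\bm{z}}=V\ped{r}\,\dif \bm{z}/\dif s$, and exploiting the identity $\mathbf{G}^{-1}\mathbf{\Sigma}_1\norm{\mathbf{M}(\bm{v}\ped{r})\bm{v}\ped{r}}_{2,\varepsilon}+\mathbf{G}^{-1}\mathbf{M}^2=\mathbf{I}_2$ built into the definition~\eqref{eq:MG} of $\mathbf{G}$, the cross terms coupling $\bm{z}$ and $\bm{v}\ped{r}$ should cancel, leaving
\begin{equation*}
\bm{z}^{\mathrm{T}}\mathbf{\Sigma}_0 \dfrac{\dif \bm{z}}{\dif s} + \bm{f}^{\mathrm{T}}\bar{\bm{v}} = -\dfrac{1}{V\ped{r}}\Bigl[\norm{\mathbf{M}(\bm{v}\ped{r})\bm{v}\ped{r}}_{2,\varepsilon}\,\bm{z}^{\mathrm{T}}\mathbf{\Sigma}_0\mathbf{G}^{-1}\mathbf{\Sigma}_0\bm{z} + \bm{v}\ped{r}^{\mathrm{T}}\mathbf{M}^2\mathbf{G}^{-1}\mathbf{\Sigma}_1\bm{v}\ped{r}\Bigr].
\end{equation*}
Both quadratic forms on the right are non-negative: $\mathbf{\Sigma}_0\mathbf{G}^{-1}\mathbf{\Sigma}_0 \succeq \mathbf{0}$ because $\mathbf{G}^{-1}\succ \mathbf{0}$ and $\mathbf{\Sigma}_0$ is symmetric, while $\bm{v}\ped{r}^{\mathrm{T}}\mathbf{M}^2\mathbf{G}^{-1}\mathbf{\Sigma}_1\bm{v}\ped{r}\geq 0$ by the standing hypothesis. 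Multiplying by $p\geq 0$ and integrating over $\mathscr{C}$ delivers the required pointwise-in-$s$ bound; integrating in $s$ and invoking continuity of the classical solution yields Eq.~\eqref{eq:Fvres}.

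I expect the main obstacle to be the algebraic identity in the display: getting the cross terms between $\bm{z}$ and $\bm{v}\ped{r}$ to cancel requires pushing $\dot{\bm{z}}$ through both occurrences of $\mathbf{\Sigma}_1$ in the expansion of $\bm{f}$ and systematically using the relation $\mathbf{I}_2-\mathbf{\Sigma}_1\mathbf{G}^{-1}\norm{\mathbf{M}(\bm{v}\ped{r})\bm{v}\ped{r}}_{2,\varepsilon}=\mathbf{M}^2\mathbf{G}^{-1}$. A secondary subtlety is that $\mathbf{M}^2\mathbf{G}^{-1}\mathbf{\Sigma}_1$ is in general not symmetric; the standing hypothesis must therefore be read as ensuring that its symmetric part is positive semi-definite, which is precisely what is probed by the quadratic form in the identity.
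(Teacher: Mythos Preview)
Your proposal is correct and follows essentially the same route as the paper's proof. Both arguments differentiate the storage function, separate off the transport contribution and dispose of it via the divergence theorem together with the homogeneous BC and hypothesis~\eqref{eq:condP}, and then reduce the remaining algebraic part to the two quadratic forms $\bm{z}^{\mathrm{T}}\mathbf{\Sigma}_0\mathbf{G}^{-1}\mathbf{\Sigma}_0\bm{z}$ and $\bm{v}\ped{r}^{\mathrm{T}}\mathbf{M}^2\mathbf{G}^{-1}\mathbf{\Sigma}_1\bm{v}\ped{r}$, with the cross terms cancelling via the identity $\mathbf{G}=\mathbf{\Sigma}_1\norm{\mathbf{M}(\bm{v}\ped{r})\bm{v}\ped{r}}_{2,\varepsilon}+\mathbf{M}^2$. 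The only organisational difference is that the paper computes $-\langle p\bm{f},\bar{\bm{v}}\rangle$ and $\dif W/\dif s$ separately before subtracting, whereas you isolate the pointwise identity for $\bm{z}^{\mathrm{T}}\mathbf{\Sigma}_0\,\dif\bm{z}/\dif s+\bm{f}^{\mathrm{T}}\bar{\bm{v}}$ directly; these are equivalent rearrangements of the same computation. Your closing remark about reading the hypothesis on $\mathbf{M}^2\mathbf{G}^{-1}\mathbf{\Sigma}_1$ as positive semidefiniteness of its symmetric part is exactly right and matches how the paper uses it.
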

The inequality in Eq.~\eqref{eq:condP} is similar to those derived for the one-dimensional variants of the distributed LuGre and FrBD models in \cite{DistrLuGre,FrBD}, confirming the importance of the vertical pressure distribution in ensuring passivity properties. For instance, restricting the attention to the line contact case, as well as a constant nondimensional transport velocity as in~\modref{stdmodel}, it may be easily concluded from~\eqref{eq:condP} that non-compactly-supported pressure distributions that are either constant or exponentially decreasing in the longitudinal direction qualify as valid choices, with the latter type even rendering the model \emph{strictly dissipative} (the inequality~\eqref{eq:Fvres} is satisfied in a strict sense). On the other hand, for a parabolic pressure distribution vanishing on the boundary of the contact area, it is possible to show that Eq.~\eqref{eq:Fvres} cannot be verified. Once again, it is essential to observe that the result advocated in Lemma~\ref{lemma:DissF20} requires adopting the total time-like derivative of the distributed state in Eqs.~\eqref{eq:fxt} and~\eqref{eq:fxs}. Before moving to the numerical experiments reported in Sect.~\ref{sect:numer}, some concluding remarks are formalised below. 

\begin{remark}
The condition $\mathbf{M}^2(\bm{y}) \mathbf{G}^{-1}(\bm{y})\mathbf{\Sigma}_1 \succeq \mathbf{0}$ for all $\bm{y}\in \mathbb{R}^2$ in Lemma~\ref{lemma:DissF20} is trivially verified when the matrices $\mathbf{\Sigma}_1$ and $\mathbf{M}(\bm{y})$ are diagonal or commute, as common in engineering practice. When the condition $\mathbf{M}^2(\bm{y}) \mathbf{G}^{-1}(\bm{y})\mathbf{\Sigma}_1 \succeq \mathbf{0}$ is violated, passivity might not hold, due to the spurious coupling effects introduced between the longitudinal and lateral deformations of the bristles and the produced frictional forces, similar to what happens in the presence of large spin slips $\varphi_2$. However, dissipativity may be more generally shown to hold for any combinations of $\mathbf{\Sigma}_1$ and $\mathbf{M}(\bm{y})$, albeit being less interesting from a physical perspective. The inequality~\eqref{eq:condP} is obviously satisfied when the pressure distribution is constant (for ~\modref{stdmodel}, distributions decreasing along the rolling direction also verify Eq.~\eqref{eq:condP} strictly). Whilst admittedly artificial, the use of spatially varying pressure profiles fulfilling~\eqref{eq:condP} seems necessary to reproduce the sign reversal of the steady-state vertical moment at large lateral slips\footnote{Asymmetries in the vertical pressure distribution can arise in viscoelastic rolling contact systems due to internal dissipation effects. This behavior is typically observed, for example, in rubber tyres.}. Conversely, the condition~\eqref{eq:condP} is typically violated by compactly supported pressure distributions such as parabolic and trapezoidal ones. In this context, it is however important to clarify that the friction model remains always passive: it is the rolling contact process, as a macroscopic phenomenon, that may become non-passive. A more detailed discussion is provided in Sect.~\ref{sect:miscellaneous} and in \cite{Tsiotras3,Deur1}. In the end, the requirements imposed by Lemma~\ref{lemma:DissF20} appear very mild. 
\end{remark}

\begin{remark}
For simplicity, inequalities of the type~\eqref{eq:ISSbetaGamma},~\eqref{eq:FBBBBBBSS}, and~\eqref{eq:Fvres} were proven only for classical solutions ($\varepsilon \in \mathbb{R}_{>0}$ in~\eqref{eq:standard2} and~\eqref{eq:quasiStationary0}); the extension to mild solutions that may be more generally obtained for $\varepsilon \in \mathbb{R}_{\geq0}$ may be established with the aid of more involved (and rather tedious) mathematical machinery, and is not pursued here. Besides, in the context of this paper, these are inessential technicalities that would obfuscate the main message of the manuscript rather than enhancing its readability.
\end{remark}

\section{Numerical simulations}\label{sect:numer}
In the following,~\modref{stdmodel},~\ref{semilinmodel}, and~\modref{linmodel2} are numerically simulated and compared concerning both their steady-state and transient behaviours. More specifically, Sect.~\ref{sect:SS} is dedicated to steady rolling contact, whereas unstationary phenomena are investigated in Sect.~\ref{sect:transient}. Implementational and computational details are reported in Appendix~\ref{sect:details}.

\subsection{Steady-state behaviour}\label{sect:SS}
A comprehensive understanding of stationary rolling contact is crucial for explaining how slip and spin affect steady forces and moments, whilst also providing the conceptual tools needed to untangle the subtle and often more convoluted effects observed during transients. The next Sects.~\ref{sect:slipSurf} and~\ref{sect:Action} discuss in detail the steady-state behaviours of the proposed rolling contact models relatively to the notions of \emph{force-slip surfaces} and \emph{action surfaces}, respectively. Finally, Sect.~\ref{sect:miscellaneous} concludes by presenting a miscellaneous assortment of numerical results.

For simplicity, in the sequel, the attention is restricted to the configuration depicted in Fig.~\ref{fig:LumpModel}(a), where the substrate is supposed to be infinitely rigid. In this case, $\bm{\varphi}(s) = [\varphi_1(s) \; \varphi_2(s)]^{\mathrm{T}}\equiv [\varphi_\gamma(s) \; \varphi_\psi(s)]^{\mathrm{T}}$, $\varphi(s) = \varphi_\gamma(s) + \varphi_\psi(s) \equiv\varphi_1(s) + \varphi_2(s)$, and therefore the vector $\bm{\varphi}(s)$ fully characterises the spin state. 

\subsubsection{Force-slip surfaces}\label{sect:slipSurf}
In steady-state conditions, the main interest typically consists in deducing analytical relationships between the tangential forces and vertical moment, $(\bm{F}_{\bm{x}},M_z) \in \mathbb{R}^3$, and the slip and spin inputs, $(\bm{\sigma},\bm{\varphi}) \in \mathbb{R}^4$, in the form
\begin{subequations}
\begin{align}
\bm{F}_{\bm{x}} & = \hat{\bm{F}}_{\bm{x}}(\bm{\sigma},\bm{\varphi}), \\
M_z & = \hat{M}_z(\bm{\sigma},\bm{\varphi}), 
\end{align}
\end{subequations}
where $(\hat{\bm{F}}_{\bm{x}},\hat{M}_z) : \mathbb{R}^4 \to \mathbb{R}^3$ may be generally referred to as the \emph{force-slip surfaces}. Indeed, for a given value of the rolling speed $V\ped{r}$, regarding the slip and spin inputs $(\bm{\sigma},\bm{\varphi})$ as independent variables, each steady-state characteristic, namely the longitudinal force $F_x$, the lateral force $F_y$, and the vertical moment $M_z$, may be interpreted as a four-dimensional surface living in a five-dimensional space\footnote{The mapping $(\bm{\sigma},\bm{\varphi}) \mapsto (\bm{F}_{\bm{x}},M_z)$ is from $\mathbb{R}^4$ into $\mathbb{R}^3$. Thus, considering, for instance, $F_x \in \mathbb{R}$, $\mathrm{graph}(\bm{\sigma},\bm{\varphi},F_x)\subset \mathbb{R}^5$ lives in a five-dimensional space. Regarding also $V\ped{r} \in \mathbb{R}$ as a variable, $\mathrm{graph}(\bm{\sigma},\bm{\varphi},V\ped{r},F_x)\subset \mathbb{R}^6$ lives instead in a six-dimensional space. The situation is analogous for the other characteristics.}. Typically, the force-slip surfaces are usually continuous, uniformly bounded, and at least locally $C^1$. In this context, it is worth observing that the rolling speed itself could be regarded as an input; in which case, the force-slip surfaces should be more properly interpreted as five-dimensional surfaces embedded in a six-dimensional space. However, most friction and rolling contact models assume a minor dependence upon $V\ped{r}$ compared to the slips and spins. 

In the absence of spin ($\bm{\varphi} = \bm{0}$), the force-slip surfaces generated for a rectangular and elliptical contact area with parabolic pressure distribution, modelled as explained in Appendix~\ref{app:Patch}, are illustrated in Fig.~\ref{fig:SlipSurfaces}. More specifically, the surfaces plotted in Fig.~\ref{fig:SlipSurfaces} were obtained by sampling slip values uniformly in $[-1,1]^2$, and computing numerically the corresponding tangential forces and vertical moment using~\modref{stdmodel} (all the models presented in the paper are equivalent for $\bm{\varphi} = \bm{0}$), with the parameter values listed in Table~\ref{tab:parameters}. As already known from a substantial body of literature on tyre dynamics, the tangential forces exhibit similar trends versus the respective spin, whereas the vertical moment is mainly affected by the lateral component $\sigma_y$. The surfaces plotted in Fig.~\ref{fig:SlipSurfaces} are similar to those obtained by employing other descriptions of steady rolling contact, such as the brush models with Coulomb-Amontons friction. In this context, it is worth observing that the force-slip surfaces in Fig.~\ref{fig:SlipSurfaces} were produced considering an isotropic rolling contact system, whereas the adoption of anisotropic formulations would introduce asymmetries in the trends of the tangential forces (for instance, tyres are much stiffer in the longitudinal direction). 

More detailed plots are reported in Sect.~\ref{sect:miscellaneous}.
\begin{figure}
\centering
\includegraphics[width=1\linewidth]{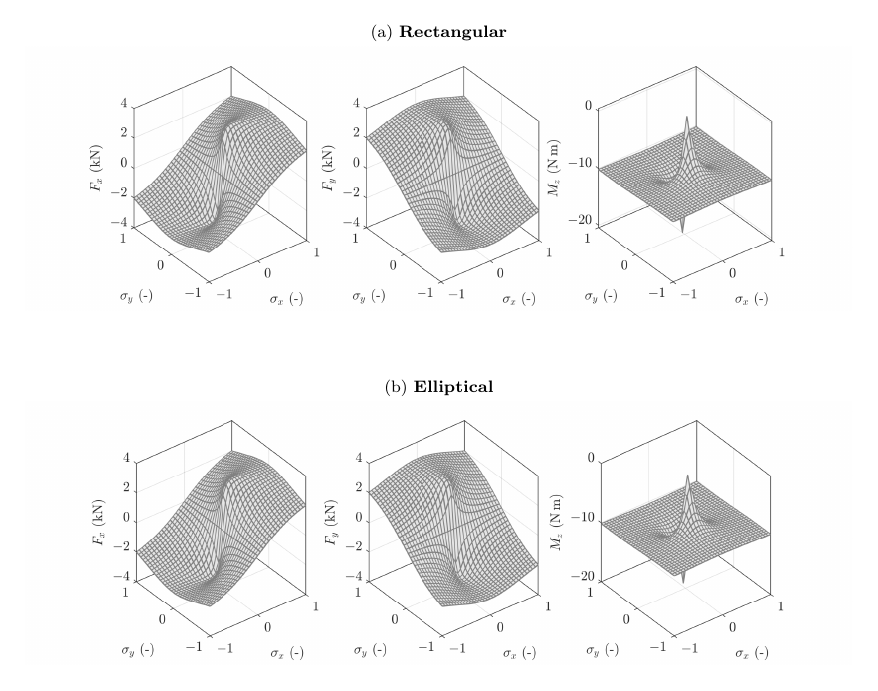} 
\caption{Force-slip surfaces in the absence of spin slips (parabolic pressure distribution): (a) rectangular contact area; (b) elliptical contact area. Model parameters as in Table~\ref{tab:parameters}.}
\label{fig:SlipSurfaces}
\end{figure}

\begin{table}[h!]\centering 
\caption{Model parameters. Matrices $\mathbf{\Sigma}_0$ and $\mathbf{\Sigma}_1$ as in Eq.~\eqref{eq:Sigmas2}; $\mathbf{M}(\bm{y}) = \mu(\bm{y})\mathbf{I}_2$ with $\mu(\bm{y})$ as in Eq.~\eqref{eq:muExample}}
{\begin{tabular}{|c|l|c|c|}
\hline
Parameter & Description & Unit & Value \\
\hline 
$a$ & Contact area semilength & m & 0.075\\
$b$ & Contact area semiwidth & m & 0.05\\
$\sigma_{0x}$ &Longitudinal micro-stiffness & $\textnormal{m}^{-1}$ & 240 \\
$\sigma_{0y}$ & Lateral micro-stiffness & $\textnormal{m}^{-1}$ & 240 \\
$\sigma_{1x}$ &Longitudinal micro-damping & $\textnormal{s}\,\textnormal{m}^{-1}$ & 0 \\
$\sigma_{1y}$ & Lateral micro-damping & $\textnormal{s}\,\textnormal{m}^{-1}$ & 0 \\
$\mu\ped{s}$ & Static friction coefficient & -& 1\\
$\mu\ped{d}$ & Dynamic friction coefficient & -& 0.7\\
$\mu\ped{v}$ & Viscous friction coefficient & -& 0\\
$v\ped{S}$ & Stribeck velocity & $\textnormal{m}\,\textnormal{s}^{-1}$ & 3.49 \\
$\delta\ped{S}$ & Stribeck exponent & - & 0.6 \\
$F_{z}$ & Vertical force & N & 4000 \\
$\varepsilon$ & Regularisation parameter & $\textnormal{m}^2\,\textnormal{s}^{-2}$ & $10^{-12}$ \\ 
\hline
\end{tabular} }
\label{tab:parameters}
\end{table}

\subsubsection{Action surfaces}\label{sect:Action}

The interaction between the steady-state characteristics may also be studied by considering the implicit surface
\begin{align}\label{eq:ActionSurf}
\hat{\bm{F}}(\bm{F}_{\bm{x}}, M_z) = 0,
\end{align} 
which describes the vertical moment as a function of the tangential forces. Inspired by the terminology adopted in \cite{Guiggiani} (Chap. 11), in this paper, $\hat{\bm{F}}(\bm{F}_{\bm{x}}, M_z) = 0$ is referred to as the \emph{action surface}. Mathematically, the notion of action surface appears strictly related to that of \emph{limit surface} theorised in \cite{Goyal1,Goyal2}; however, it is essential to emphasise that, within the FrBD and LuGre frameworks, there is no clear distinction between stick and slip regimes, as both formulations only attempt to approximate the average (sliding) dynamics of a bristle element. Nevertheless, Eq.~\eqref{eq:ActionSurf} yields an equivalent principle to investigate the coupling between the macroscopic forces and torque generated during steady rolling. 

The action surfaces obtained for a rectangular and elliptical contact area by simulating numerically~\modref{semilinmodel} are illustrated in Fig.~\ref{fig:ActionSurfaces} for different combinations of spin slips $\bm{\varphi} = \bm{0}$, $(3,0)$, and $(0,3)$, respectively. Inspection of Fig.~\ref{fig:ActionSurfaces} reveals that large spin slips may profoundly influence the shape of the action surface. In particular, the parameters $\varphi_1 = \varphi_\gamma$ and $\varphi_2 = \varphi_\psi$ seem to produce similar effects, with the latter causing a slightly more pronounced asymmetrisation. Generally speaking, the plots reported in Fig.~\ref{fig:ActionSurfaces} confirm the important coupling between the tangential forces and the vertical moment in rolling contact.

\begin{figure}
\centering
\includegraphics[width=1\linewidth]{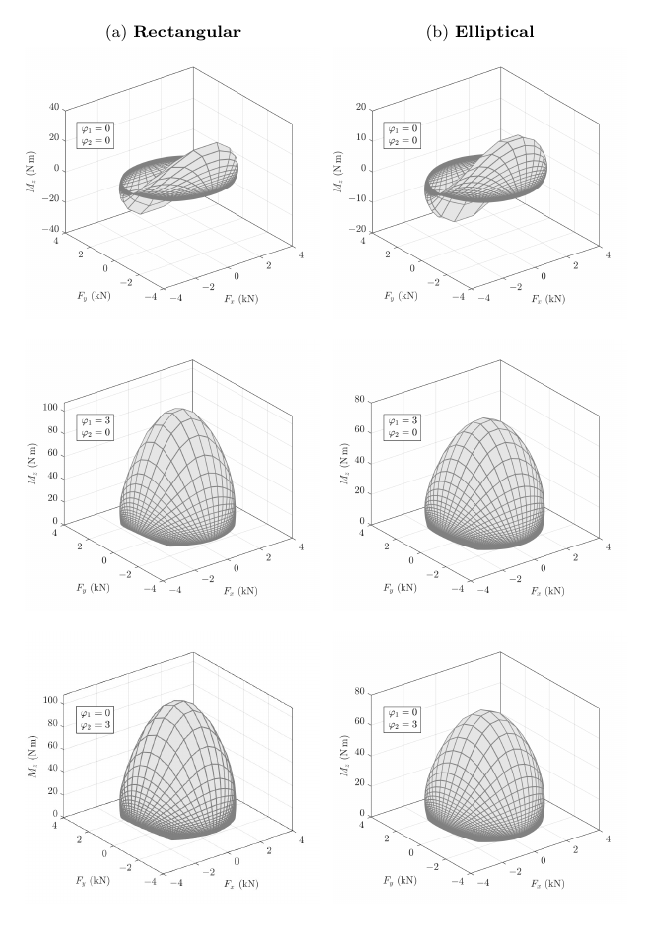} 
\caption{Action surfaces for different combinations of spin slips $\bm{\varphi} = (\varphi_1,\varphi_2)$ (parabolic pressure distribution): (a) rectangular contact area; (b) elliptical contact area. Model parameters as in Table~\ref{tab:parameters}.}
\label{fig:ActionSurfaces}
\end{figure}

In this context, of particular importance are the projections of the action surface on the plane $F_x-Fy$, which are depicted in Fig.~\ref{fig:Circles} for $\bm{\varphi} = \bm{0}$ (solid thick lines), $\bm{\varphi} = (3,0)$ (solid lines), and $\bm{\varphi} = (0,3)$ (dashed lines). In tyre dynamics, these projections are widely known as \emph{friction circles} \cite{Guiggiani}. The friction circles are typically obtained under isotropic assumptions, whereas in the presence of anisotropy (e.g., $\sigma_{0x} \not = \sigma_{0y}$ in Eq.~\eqref{eq:Sigmas02}) a \emph{friction ellipse} is more generally obtained \cite{Guiggiani}.

Representing steady-state characteristics through action surfaces is particularly valuable for rolling contact systems operating close to the limit of their capabilities. In vehicle dynamics, for instance, control algorithms are often designed to keep vehicles at the limits of handling, where tyres exploit the full friction potential available at the contact patch. Neglecting the contribution of the vertical moment, this corresponds to operating along the boundary of the friction circles, as illustrated in Fig.~\ref{fig:Circles},  by prescribing a desired acceleration direction. Such strategies are especially effective when the tyre-road friction coefficient is unknown, but maximum performance is required, as typically happens in emergency manoeuvres \cite{Fors1,Fors2,Fors3,Fors4}. These algorithms build on concepts closely related to those used in robotics, particularly the notion of limit surfaces. Recent work has extended this framework to non-Coulomb friction models, including LuGre and elastoplastic formulations \cite{2D}.
\begin{figure}
\centering
\includegraphics[width=1\linewidth]{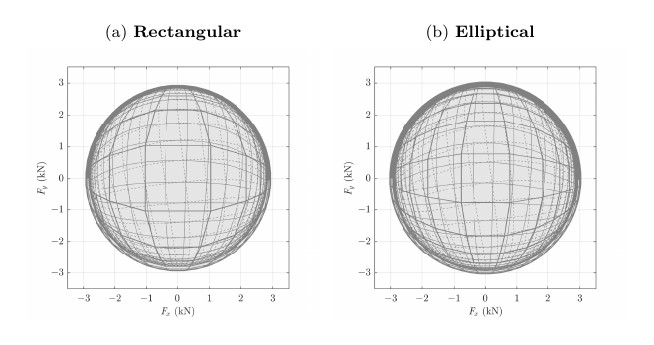} 
\caption{Friction circles (parabolic pressure distribution): (a) rectangular contact patch; (b) elliptical contact patch. Line styles: $\bm{\varphi} = \bm{0}$ (solid thick lines), $\bm{\varphi} = (3,0)$ (solid lines), $\bm{\varphi} = (0,3)$ (dashed lines). Model parameters as in Table~\ref{tab:parameters}.}
\label{fig:Circles}
\end{figure} 

\subsubsection{Miscellaneous results}\label{sect:miscellaneous}
A variety of composite plots is finally collected into Figs.~\ref{fig:GoughRect}-\ref{fig:EllSpins}, where the trend of the main steady-state characteristics is illustrated for different operating conditions, and considering again a rectangular and elliptical contact area. Starting with Figs.~\ref{fig:GoughRect} and~\ref{fig:GoughEll}, produced in the absence of spin ($\bm{\varphi} = \bm{0}$), it may be observed that the lateral force $F_y$ and vertical moment $M_z$ as a function of the lateral slip $\sigma_y$ are both symmetric with respect to the origin. It is worth noting that the vertical moment $M_z$ in Figs.~\ref{fig:GoughRect} and~\ref{fig:GoughEll} does not change sign as the lateral slip increases. This behaviour should be imputed to the adopted parabolic pressure distribution, which is symmetric with respect to the centre of the contact patch. A sign reversal would require an asymmetric pressure profile, as explained in \cite{Tsiotras3,Deur1}. Under combined translational slips, both $F_y$ and $M_z$ exhibit a reduction in magnitude. Overall, the trends remain qualitatively coherent with those typically reported in railway and tyre dynamics.

The friction circles are also depicted for different values of the lateral slip, along with the vertical moment as a function of $\sigma_x$, which is symmetric with respect to the vertical axis. For $\bm{\varphi} = \bm{0}$, \modref{stdmodel},~\ref{semilinmodel}, and~\ref{linmodel2} are all equivalent and predict the same behaviours.
\begin{figure}
\centering
\includegraphics[width=1\linewidth]{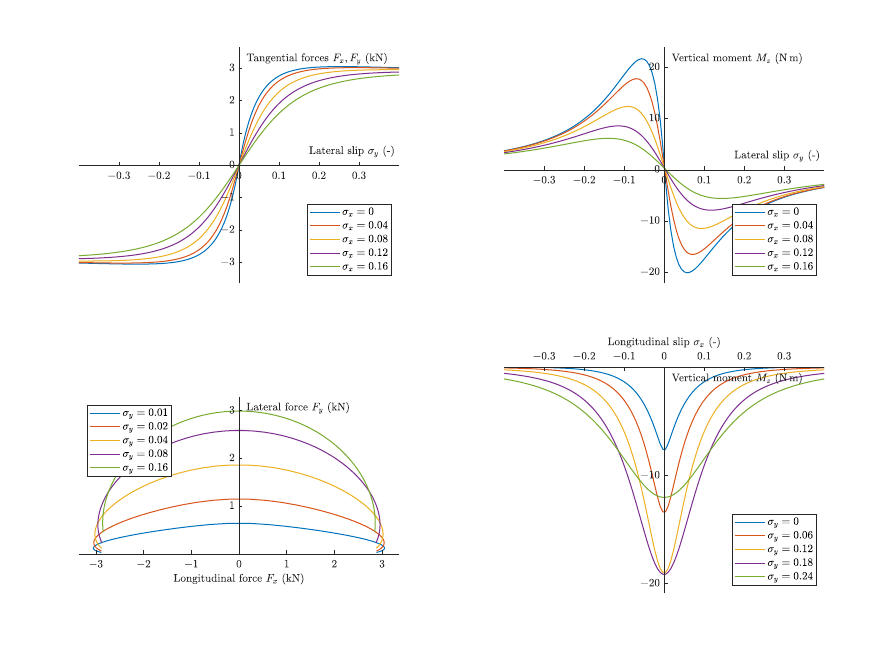} 
\caption{Steady-state characteristics in the absence of spin slips (rectangular contact area with parabolic pressure distribution). Line styles: \modref{semilinmodel} (solid thick lines), \modref{linmodel2} (solid lines), \modref{stdmodel} (dashed lines). Model parameters as in Table~\ref{tab:parameters}.}
\label{fig:GoughRect}
\end{figure} 

\begin{figure}
\centering
\includegraphics[width=1\linewidth]{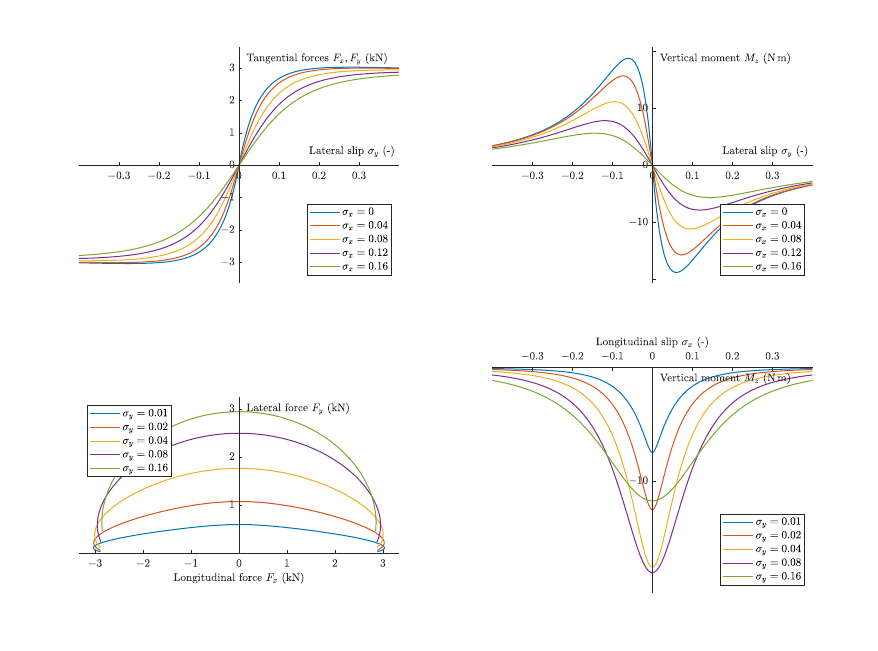} 
\caption{Steady-state characteristics in the absence of spin slips (elliptical contact area with parabolic pressure distribution). Line styles: \modref{semilinmodel} (solid thick lines), \modref{linmodel2} (solid lines), \modref{stdmodel} (dashed lines). Model parameters as in Table~\ref{tab:parameters}.}
\label{fig:GoughEll}
\end{figure} 
Some discrepancies emerge when comparing the three formulations under large spin slips, as illustrated in Figs.~\ref{fig:RectSpins} and~\ref{fig:EllSpins} for rectangular and elliptical contact areas, respectively. The disagreement is particularly evident in the friction circles associated with small lateral slips and $\bm{\varphi} = (0,3)$. In this case, although the linear model~\modref{linmodel2} partially captures the asymmetry induced by high spin values, consistent with the predictions of~\modref{semilinmodel}, the standard formulation~\modref{stdmodel} fails to reproduce this behaviour accurately. For $\bm{\varphi} = (3,0)$, the discrepancy is less pronounced. It is worth noting, however, that in tyre dynamics large values of $\varphi_1 = \varphi_\gamma$ are often associated with secondary effects -- such as distortions of the contact shape -- that are not explicitly modelled here.
\begin{figure}
\centering
\includegraphics[width=1\linewidth]{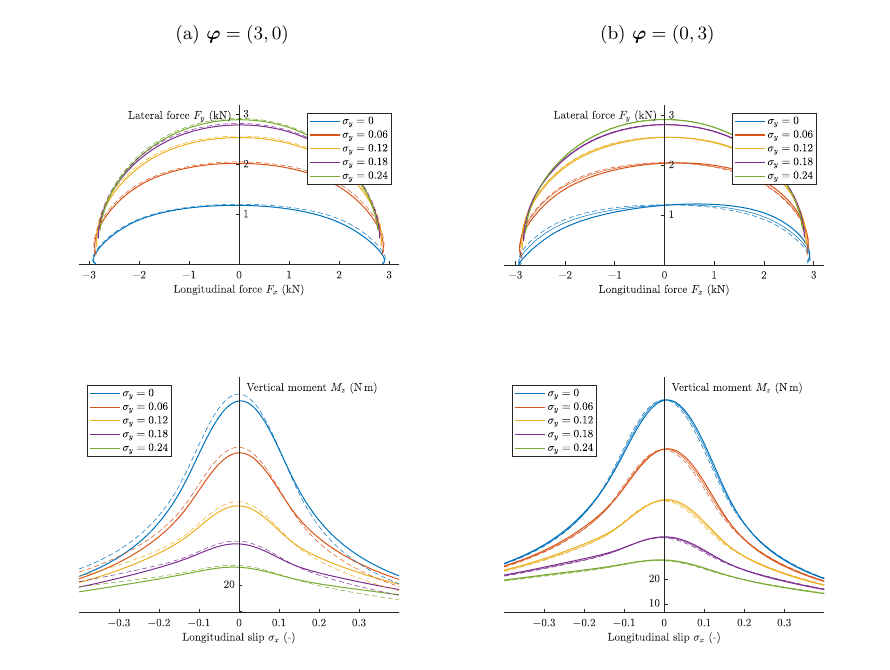} 
\caption{Steady-state characteristics in the presence of large spin slips (rectangular contact area with parabolic pressure distribution): (a) $\bm{\varphi} = (3,0)$; (b) $\bm{\varphi} = (0,3)$. Line styles: \modref{semilinmodel} (solid thick lines), \modref{linmodel2} (solid lines), \modref{stdmodel} (dashed lines). Model parameters as in Table~\ref{tab:parameters}.}
\label{fig:RectSpins}
\end{figure} 

\begin{figure}
\centering
\includegraphics[width=1\linewidth]{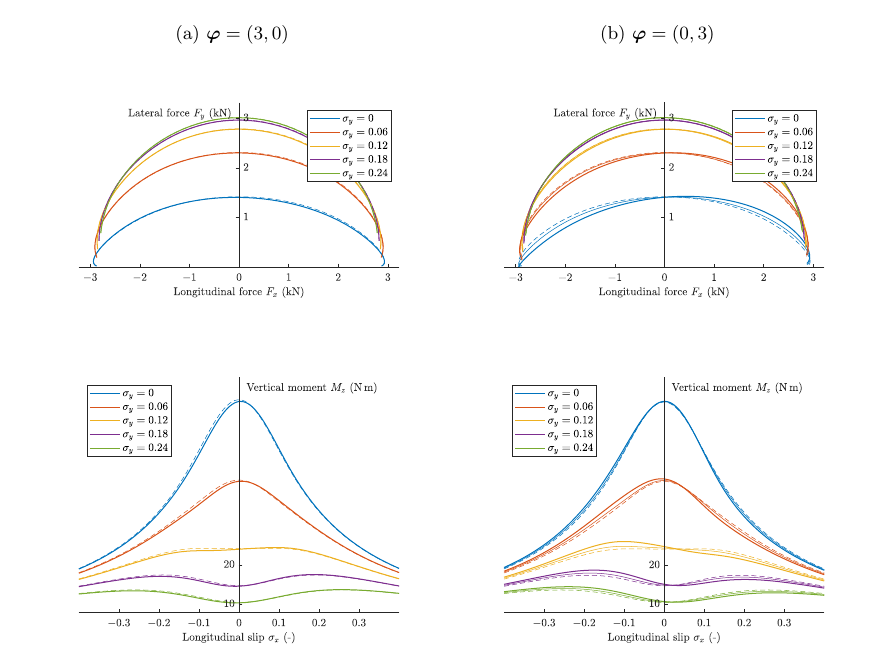} 
\caption{Steady-state characteristics in the presence of large spin slips (elliptical contact area with parabolic pressure distribution): (a) $\bm{\varphi} = (3,0)$; (b) $\bm{\varphi} = (0,3)$. Line styles: \modref{semilinmodel} (solid thick lines), \modref{linmodel2} (solid lines), \modref{stdmodel} (dashed lines). Model parameters as in Table~\ref{tab:parameters}.}
\label{fig:EllSpins}
\end{figure}

\subsection{Transient behaviour}\label{sect:transient}
Transient rolling contact is studied considering two main effects: \emph{relaxation dynamics}, and secondary phenomena connected with time-varying normal forces and contact areas. These are investigated respectively in Sect.~\ref{sect:relax} and~\ref{sect:normVertVar}.

\subsubsection{Relaxation phenomena in rolling and spinning contact}\label{sect:relax}
According to~\modref{stdmodel},~\ref{semilinmodel}, and~\ref{linmodel2}, rolling contact processes are governed by (possibly nonlinear) transport equations. Consequently, when a rolling contact system is subjected to a constant slip or spin input, it does not produce an immediate steady-state response, but instead evolves through a transient phase.

In first-order dynamical systems described by linear ODEs, a key notion is that of \emph{relaxation time}: the time required for the system to reach approximately 63\% of its steady-state response following a step input. This time is typically an integer multiple of a characteristic time constant. An analogous concept is widely used in tyre dynamics, where the \emph{relaxation length} denotes the distance a tyre must travel to develop about 63\% of its steady-state characteristics (in terms of tangential forces and/or vertical moment) \cite{Higuchi2,Rill1,PAC}. The relaxation length is typically influenced by factors other than friction, including the flexibility of structural components such as the tyre carcass and sidewall. However, relaxation phenomena are also well documented in relatively rigid contact pairs, like wheel and rail, where their duration is on the order of the contact patch length. In particular, under vanishing sliding conditions (that is, when the slip inputs are sufficiently small), standard brush models with Coulomb-Amontons friction predict that transients disappear exactly after travelling one full contact patch length. In contrast, when limited friction or local sliding occurs within the contact area, the transient duration becomes shorter and typically corresponds to the distance required to reach the so-called \emph{breakaway point}. An important insight that emerges from this discussion is that, due to the hyperbolic nature of~\modref{stdmodel},~\ref{semilinmodel}, and~\ref{linmodel2}, the response of a rolling contact system converges in finite time, and not exponentially, to a steady state (at least, in the absence of additional dynamics related to the deformation of compliant elements).

Similar conclusions, which may be reached by directly solving Eq.~\eqref{eq:standard2}\footnote{A formal solution to Eq.~\eqref{eq:standard2} may be recovered using the method of the characteristic lines; however, since the coefficients are highly nonlinear in the variables $x$ and $s$, such a representation formula would be not very neat.}, are supported by the simulation results plotted in Figs.~\ref{fig:trans1} and~\ref{fig:trans2}, where the transient characteristics predicted using~\modref{linmodel2} are plotted for different combinations of slip and spin inputs. Specifically, Figs.~\ref{fig:trans1}(a) and (b) illustrate the unsteady behaviours of the tangential forces and vertical moment for $\bm{\sigma} = (0.16,0.08)$ and $(0.06,0.18)$ in the absence of spin ($\bm{\varphi} = \bm{0}$), and starting from zero initial conditions. Interestingly, the system exhibits a dynamical response that is similar to that of a first-order linear ODE, although the concepts from ODE theory are not directly applicable to the case under consideration. In particular, it may be observed that convergence is reached approximately for $s = 0.15$ m, which is in accordance with the values reported in Table~\ref{tab:parameters}.
 
\begin{figure}
\centering
\includegraphics[width=1\linewidth]{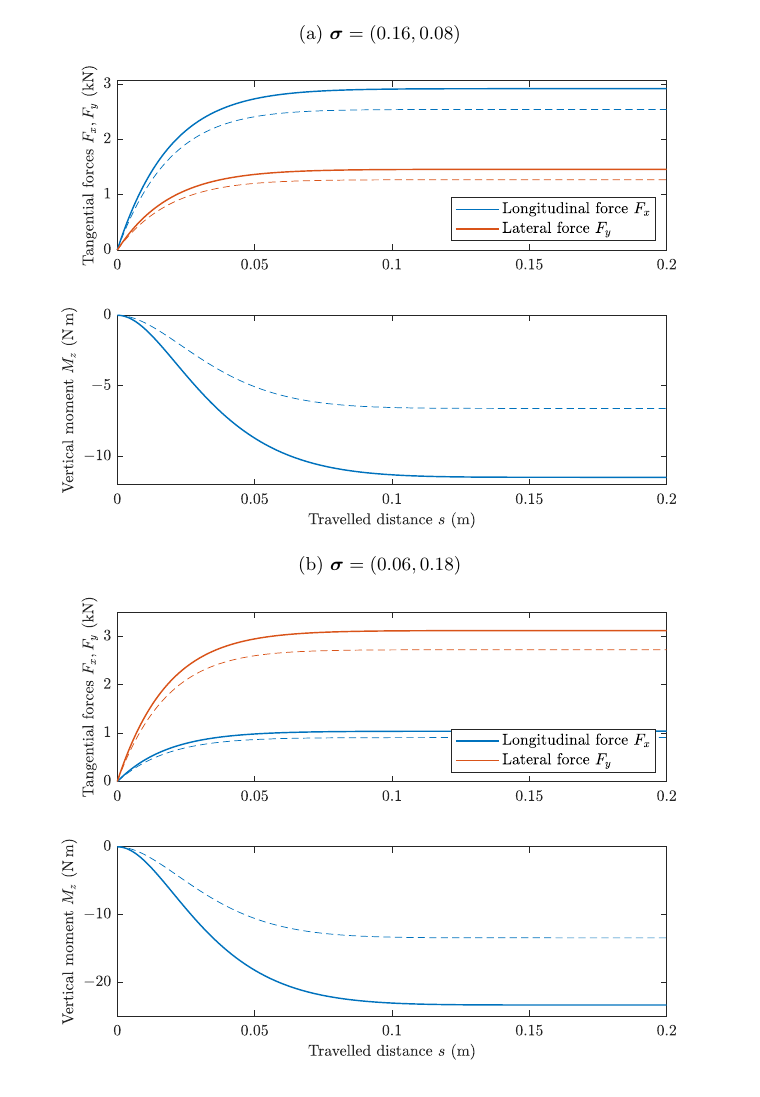} 
\caption{Transient forces predicted by~\modref{linmodel2} for step slip inputs in the absence of spin (parabolic pressure distribution). Line styles: rectangular contact area (solid thick line), elliptical contact area (dashed lines). Model parameters as in Table~\ref{tab:parameters}.}
\label{fig:trans1}
\end{figure}

According to both~\modref{semilinmodel} and~\ref{linmodel2}, the presence of large spins $\varphi_1(s) = \varphi_\gamma(s)$ modifies the transport velocity, distorting the trajectories of the bristles travelling within the contact area. For constant $\varphi_1(s) = \varphi_1$, these trajectories are circles centred at $C_{\varphi_1}$. This phenomenon has a minor influence on the duration of the transient phase, as revealed by inspection of Fig.~\ref{fig:trans2}(a), where the transient trends of $\bm{F}_{\bm{x}}(s)$ and $M_z(s)$ are illustrated for $\bm{\sigma} = (0.08,0.08)$. More interesting is instead the effect that large spin slips seem to exert on the dynamics of $M_z$, whose response is characterised by an initial overshoot in Fig.~\ref{fig:trans2}, as typically occurs for nonlinear systems.

In any case, it is important to emphasise that, unlike standard brush models with Coulomb-Amontons friction, the FrBD and LuGre formulations are inherently unable to predict transients shorter than the contact-patch length. This limitation arises because their governing PDEs are defined over the entire contact area. Although this constraint could, in principle, be alleviated by postulating \emph{ad-hoc} slip-dependent transport velocities, such a modification would lack a sound theoretical justification within the present framework and is therefore not pursued in this work.
\begin{figure}
\centering
\includegraphics[width=1\linewidth]{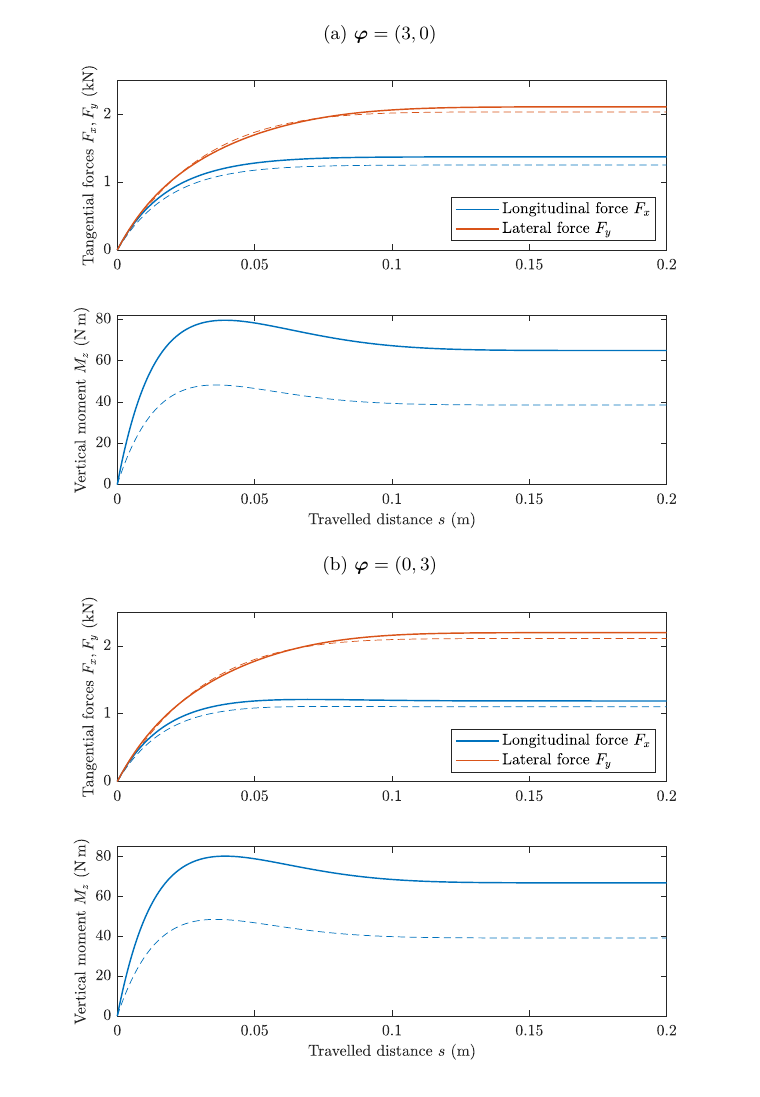} 
\caption{Transient forces predicted by~\modref{linmodel2} for step slip inputs in the presence of large spins (parabolic pressure distribution). Line styles: rectangular contact area (solid thick line), elliptical contact area (dashed lines). Model parameters as in Table~\ref{tab:parameters}.}
\label{fig:trans2}
\end{figure}

\subsubsection{Effect of time-varying normal forces}\label{sect:normVertVar}
Relaxation effects are amongst the most significant transient phenomena in tyre-road and wheel-rail interactions, as they may directly affect the stability of vehicular systems \cite{Takacs1,Takacs2,Takacs3,Takacs4,Takacs5,BicyclePDE,SemilinearV}. However, dynamical behaviours associated with oscillating normal forces and time-varying contact patches also play a fundamental role in shaping key phenomena such as slip losses, wheel polygonalisation, corrugation, and wear \cite{Ciavarella1,Ciavarella2,Ciavarella3}, whilst also influencing riding comfort in road vehicles \cite{Guiggiani}. In Sect.~\ref{sect:models}, \modref{stdmodel},~\ref{semilinmodel}, and~\ref{linmodel2} were postulated on a time-varying domain, providing an adequate framework to investigate the effect of oscillating vertical forces \cite{Colantonio} on the response of a rolling contact system. 

For simplicity, the following analysis is restricted to the case of line contact, with $\mathscr{C}(s) = \{ x \in \mathbb{R} \mathrel{|} -a(s) \leq x \leq a(s)\}$ for $a(s) \in [a\ped{min}, a\ped{max}]$, with $a(0) \triangleq a_0 \in [a\ped{min}, a\ped{max}]$. Then, \modref{stdmodel} may be recast explicitly as
\begin{subequations}\label{eq:standard0_tv}
\begin{align}
\begin{split}
& \dpd{\bm{z}(x,s)}{s} -\dpd{\bm{z}(x,s)}{x} =\tilde{\mathbf{\Sigma}}(x,s)\bm{z}(x,s) + \tilde{\bm{h}}(x,s), \quad \bm{x} \in (-a(s),a(s)), \; s \in (0,S),
\end{split}\\
& \bm{z}(a(s),s) = \bm{0}, \quad s \in (0,S), \\
& \bm{z}(x,0) = \bm{z}_0(x), \quad x \in (-a_0,a_0).
\end{align}
\end{subequations}
It is profitable to reformulate the PDE~\eqref{eq:standard0_tv} on a unit domain. To this end, the following transformation is introduced:
\begin{align}
\xi \triangleq \dfrac{a(s)-x}{2a(s)}, \quad x \in [-a(s),a(s)],
\end{align}
so that, with some abuse of notation, Eq.~\eqref{eq:standard0_tv} is converted into
\begin{subequations}\label{eq:standard0_tvFix}
\begin{align}
\begin{split}
& \dpd{\bm{z}(\xi,s)}{s} +\tilde{V}(\xi,s)\dpd{\bm{z}(\xi,s)}{\xi} =\tilde{\mathbf{\Theta}}(\xi,s)\bm{z}(\xi,s) + \tilde{\bm{g}}(x,s), \quad (\xi,s) \in (0,1) \times (0,S),
\end{split}\\
& \bm{z}(0,s) = \bm{0}, \quad s \in (0,S), \label{eq:Bcfixed}\\
& \bm{z}(\xi,0) = \bm{z}_0(\xi), \quad \xi \in (0,1),
\end{align}
\end{subequations}
being
\begin{align}\label{eq:tildeThetag}
\tilde{V}(\xi,s) & \triangleq \dfrac{1}{2a(s)}\biggl(1+(1-2\xi)\dod{a(s)}{s}\biggr), \\
\tilde{\mathbf{\Theta}}(\xi,s) & \triangleq \tilde{\mathbf{\Sigma}}\bigl(a(s)-2a(s)\xi,s\bigr), \\
 \tilde{\bm{g}}(\xi,s) & \triangleq \tilde{\bm{h}}\bigl(a(s)-2a(s)\xi,s\bigr).
\end{align}
Equation~\eqref{eq:standard0_tvFix} is now reformulated in a form that is more amenable to mathematical analysis. In this context, albeit not explicitly done in this paper, it is worth clarifying that~\modref{semilinmodel} and~\ref{linmodel2} may also be recast as PDEs evolving on a cylindrical, time-invariant domain, albeit relying on slightly more sophisticated arguments. For a general treatment of such problems in the context of rolling contact studies, the reader is redirected to \cite{Tribology}.

Theorem~\ref{thm:ex4} asserts the well-posedness of Eq.~\eqref{eq:standard0_tvFix}.
\begin{theorem}[Existence and uniqueness of solutions]\label{thm:ex4}
For all $\tilde{V}\in C^1([0,1]\times[0,S];[\tilde{V}\ped{min};\tilde{V}\ped{max}])$ with $\tilde{V}\ped{min} \in \mathbb{R}_{>0}$, $\tilde{\mathbf{\Theta}} \in C^1([0,1]\times[0,S];\mathbf{M}_2(\mathbb{R}))$, and $\tilde{\bm{g}} \in L^p((0,S);L^2((0,1);\mathbb{R}^2))$, $p \geq 1$, as in Eq.~\eqref{eq:tildeThetag}, and ICs $\bm{z}_0 \in L^2((0,1);\mathbb{R}^2)$, the PDE~\eqref{eq:standard0_tvFix} admits a unique mild solution $\bm{z} \in C^0([0,S];L^2((0,1);\mathbb{R}^2))$. Additionally, if $\tilde{\bm{g}} \in C^1([0,S];L^2((0,1);\mathbb{R}^2))$, and the IC $\bm{z}_0 \in H^1((0,1);\mathbb{R}^2)$ satisfies the BC~\eqref{eq:Bcfixed}, the solution is classical, that is, $\bm{z} \in C^1([0,S];L^2((0,1);\mathbb{R}^2)) \cap C^0([0,S];H^1((0,1);\mathbb{R}^2))$ and satisfies the BC~\eqref{eq:Bcfixed}.
\begin{proof}[Proof]
See Theorem 2.1 and Corollary 2.1 in \cite{MScthesis}.
\end{proof}
\end{theorem}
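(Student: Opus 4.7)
The plan is to treat the PDE as a non-autonomous linear transport equation on the fixed cylindrical domain $(0,1) \times (0,S)$, exploiting the fact that the effective transport velocity $\tilde{V}(\xi,s)$ is bounded below by $\tilde{V}\ped{min} > 0$. This strict positivity guarantees that all characteristics sweep from the inflow boundary $\xi = 0$ towards $\xi = 1$, which singles out the BC at $\xi = 0$ as the correct inflow condition and $s = 0$ as the initial surface. The approach I would take mirrors the one used in the proof of Theorem~\ref{thm:ex2}, combining an abstract Cauchy reformulation in $L^2((0,1);\mathbb{R}^2)$ with standard perturbation results for evolution families, and ultimately reducing the PDE to a variation-of-constants formula.

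Concretely, I would first define, for each $s \in [0,S]$, the unbounded operator $A(s) : D(A(s)) \subset L^2((0,1);\mathbb{R}^2) \mapsto L^2((0,1);\mathbb{R}^2)$ by $A(s)\bm{z} \triangleq -\tilde{V}(\cdot,s)\,\partial_\xi \bm{z}$ with domain $D(A(s)) = \{\bm{z} \in H^1((0,1);\mathbb{R}^2) \mid \bm{z}(0) = \bm{0}\}$, which is independent of $s$ thanks to the uniform lower bound on $\tilde{V}$. A direct integration-by-parts computation exploiting the BC at $\xi = 0$ would yield an $m$-dissipativity estimate in $L^2((0,1);\mathbb{R}^2)$ after an equivalent renorming weighted by $\tilde{V}^{-1}(\cdot,s)$, so that $A(s)$ generates a contractive $C_0$-semigroup for each frozen $s$. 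The $C^1$ regularity of $\tilde{V}$ in both arguments then allows one to verify Kato's stability and regularity hypotheses for the family $\{A(s)\}_{s \in [0,S]}$, producing a two-parameter evolution family $U(s,\tau)$ on $L^2((0,1);\mathbb{R}^2)$.

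Once this backbone is in place, the multiplicative zero-order term $\tilde{\mathbf{\Theta}}(\xi,s)\bm{z}$ enters as a bounded linear perturbation, since $\tilde{\mathbf{\Theta}} \in C^1([0,1]\times[0,S]; \mathbf{M}_2(\mathbb{R}))$ is uniformly bounded on its compact domain and can thus be absorbed into the generator without disturbing the evolution family's regularity. With $\tilde{\bm{g}} \in L^p((0,S); L^2((0,1);\mathbb{R}^2))$, the variation-of-constants formula
\begin{equation*}
\bm{z}(\cdot,s) = U(s,0)\bm{z}_0(\cdot) + \int_0^s U(s,\tau)\,\tilde{\bm{g}}(\cdot,\tau)\,\dif\tau
\end{equation*}
defines a unique mild solution in $C^0([0,S]; L^2((0,1);\mathbb{R}^2))$. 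For the classical statement, the hypothesis $\bm{z}_0 \in H^1 \cap D(A(0))$ together with $\tilde{\bm{g}} \in C^1([0,S]; L^2((0,1);\mathbb{R}^2))$ permits lifting the mild solution to a classical one via standard regularity results for inhomogeneous evolution equations; here the compatibility $\bm{z}_0(0) = \bm{0}$ is essential to ensure $U(s,0)\bm{z}_0 \in D(A(s))$ for all $s$, without which a jump singularity would propagate along the characteristic emanating from the corner $(\xi,s) = (0,0)$.

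The main obstacle I anticipate is the non-autonomous nature of the problem: verifying Kato's conditions for $\{A(s)\}$ with a general $C^1$ transport velocity requires some care, in particular justifying that $D(A(s))$ remains $s$-independent (which holds here only because $\tilde{V}$ is uniformly bounded away from zero) and constructing the evolution family with sufficient regularity to support the classical solution argument. A cleaner alternative, and likely the route adopted in \cite{MScthesis}, is to bypass semigroup theory altogether and represent the solution explicitly via the characteristic flow: denoting by $\xi(s;\xi_0,s_0)$ the characteristic through $(\xi_0,s_0)$, one obtains a closed-form integral representation of $\bm{z}$ in terms of the matrix-valued propagator associated with $\tilde{\mathbf{\Theta}}$ along each characteristic, from which continuity in $L^2$, as well as the $H^1$ and $C^1$ regularity in the classical case, can be read off directly from the smoothness of $\tilde{V}$, $\tilde{\mathbf{\Theta}}$, $\tilde{\bm{g}}$, and $\bm{z}_0$, with the compatibility condition ensuring continuity across the separating characteristic emanating from $(0,0)$.
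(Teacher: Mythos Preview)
Your approach is correct and well-articulated. Note, however, that the paper does not actually supply a proof of this theorem: it simply cites Theorem~2.1 and Corollary~2.1 of \cite{MScthesis}, so there is no detailed in-paper argument to compare against. Your evolution-family strategy is the natural non-autonomous extension of the semigroup argument the paper sketches for Theorem~\ref{thm:ex2} (Bardos for generation, Pazy for perturbation and regularity), and your identification of the key structural ingredients --- the $s$-independence of $D(A(s))$ owing to the uniform lower bound on $\tilde V$, the boundedness of the multiplicative perturbation $\tilde{\mathbf{\Theta}}$, and the compatibility condition at the corner $(0,0)$ for classical regularity --- is exactly right. The characteristic-flow alternative you mention at the end is also viable and, being one-dimensional in space with a strictly positive transport speed, arguably the more transparent route; either would suffice, and both are consistent with the paper's overall methodology.
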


Starting with the representation in Eq.~\eqref{eq:standard0_tvFix}, and again with some abuse of notation, the tangential forces and vertical moment may be determined as
\begin{subequations}\label{eq:FandM_var}
\begin{align}
\bm{F}_{\bm{x}}(s) & =2a(s)\int_0^1 p(\xi,s)\bm{f}(\xi,s) \dif \xi, \label{eq:Fundef_varq}\\
M_z(s) & = 2a^2(s)\int_{0}^1 p(\xi,s) (1-2\xi)f_y(\xi,s) \dif \xi, \quad s \in [0,S],\label{eq:Mzunderfr_varq}
\end{align}
\end{subequations}
or, alternatively, considering the deformed configuration,
\begin{align}\label{eq:Malt_var2}
M_z(s) & = 2a^2(s)\int_{0}^1 p(\xi,s) \biggl(1-2\xi+\dfrac{z_x(\xi,s)}{a(s)}\biggr)f_y(\xi,s) \dif \xi, \quad s \in [0,S].
\end{align}

In particular, Fig.~\ref{fig:trans3} was produced by modelling the vertical force, contact patch semilength and longitudinal slip as
\begin{subequations}\label{eq:varyignSLipsa}
\begin{align}
F_z(s) & = F_{z,0} + F_{z,\mathrm{I}}\sin(\omega_s s), \\
a(s) & = a_0 + a\ped{I}\sin(\omega_s s), \\
\sigma_x(s) & = \sigma_{x,0} + \sigma_{x,\mathrm{I}}\sin(\omega_s s), 
\end{align}
\end{subequations}
where $\omega_s \in \mathbb{R}_{\geq 0}$ has the same dimension of a curvature, and represents a spatial frequency accounting for road or rail irregularities (such as roughness). The conditions $ a\ped{I} \in [0, a_0)$ and $\omega_s a\ped{I} \in [0,1)$ imply $\tilde{V}\in C^1([0,1]\times[0,S];[\tilde{V}\ped{min};\tilde{V}\ped{max}])$, ensuring the regularised PDE~\eqref{eq:standard0_tvFix} to be well-posed. In particular, the parameters $F_{z,\mathrm{I}}, a\ped{I}, \sigma_{x,\mathrm{I}} \in \mathbb{R}$ in Eq.~\eqref{eq:varyignSLipsa} were specified respectively as $F_{z,\mathrm{I}} = 0.1F_{z,0}$, $a\ped{I} = 0.1a_0$, and $\sigma_{x,\mathrm{I}} = 0.1\sigma_{x,0}$, and a constant contact distribution was employed.

Figure~\ref{fig:trans3}(a) illustrates the trend of the longitudinal force for $\omega_s = 10$ $\mathrm{m}^{-1}$ and two different values of $\sigma_{x,0} = 0.02$ and 0.06. By inspection of Fig.~\ref{fig:trans3}(a), it may be promptly concluded that, after a rapid transient phase related to the initial conditions, the force follows a sinusoidal trend that is dictated by those of the slip and contact patch length. Analogous considerations hold for Fig.~\ref{fig:trans3}(b), where the effect of higher excitation frequencies is shown ($\omega_s = 20$ $\mathrm{m}^{-1}$).
\begin{figure}
\centering
\includegraphics[width=1\linewidth]{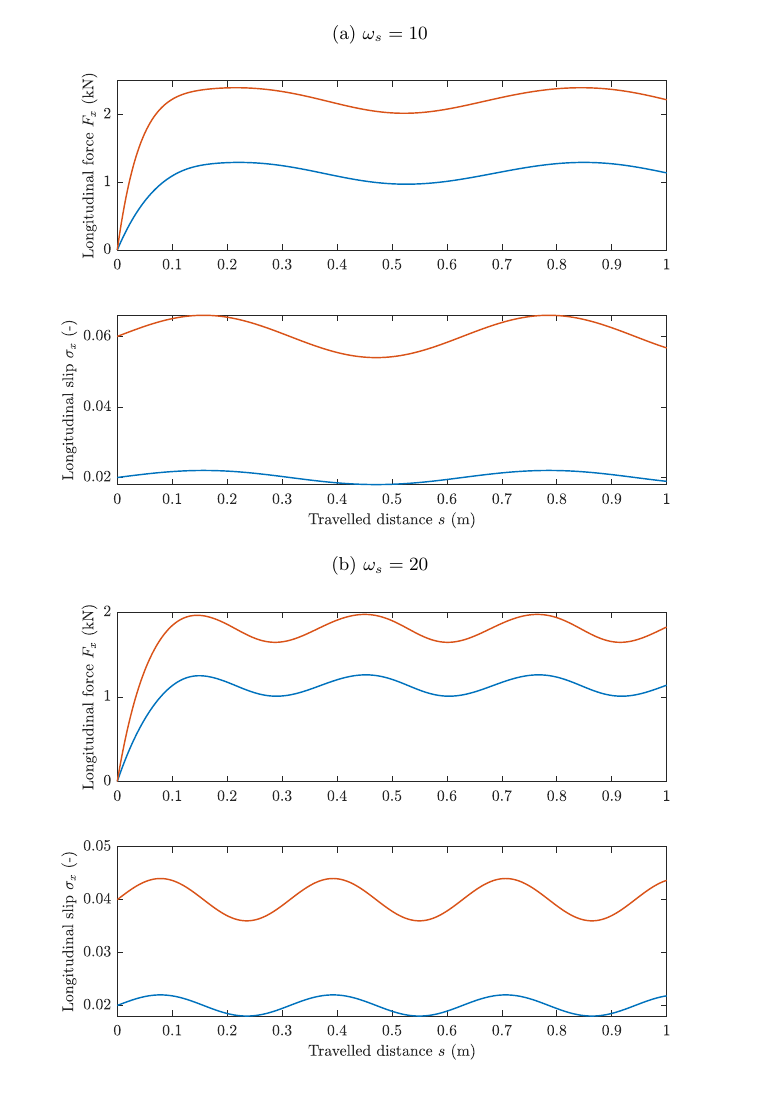} 
\caption{Transient longitudinal force in the presence of an oscillating normal load and a time-varying contact area. Line colours: (a) $\sigma_{x,0} = 0.02$ (blue lines), $\sigma_{x,0} = 0.06$ (orange lines); (b) $\sigma_{x,0} = 0.02$ (blue lines), $\sigma_{x,0} = 0.04$ (orange lines). Other parameters as in Table~\ref{tab:parameters}.}
\label{fig:trans3}
\end{figure}

\section{Conclusions}\label{sect:conclusion}
This work introduced a comprehensive two-dimensional extension of the FrBD friction framework, bridging the gap between physically grounded bristle-based friction modelling and the spatially distributed nature of rolling contact. Starting from a general rheological representation of a bristle-like element, combined with an analytical law for sliding friction, the proposed approach eliminates the notion of sliding velocity through a constructive use of the Implicit Function Theorem, yielding a dynamic friction model expressed solely in terms of the rigid relative velocity and bristle deformation. Compared to classic formulations equipped with Coulomb friction, this expedient permits avoiding the difficulties associated with separately modelling sliding and rolling regimes.

The distributed formulation developed in the paper enables frictional stresses to vary within the contact region according to local kinematics, transport effects, and spin. Specifically, three model variants of increasing complexity were derived, covering standard linear rolling contact, large-spin regimes, and semilinear dynamics. Concerning the linear descriptions, well-posedness was rigorously established, ensuring that the proposed PDE formulations are mathematically sound and suitable for the adoption of numerical schemes (at least in the context of the studies conducted in the manuscript). Stability analyses demonstrated \emph{input-to-state stability} (ISS) and \emph{input-to-output stability} (IOS), and clarified the roles of micro-stiffness and micro-damping in shaping the transient dynamics. In particular, coherently with previous findings, it emerged that passivity is almost always preserved if the damping term is modelled in a physically consistent manner.

Numerical results illustrated steady-state characteristics (including action and slip-forces surfaces) and transient behaviours connected with relaxation phenomena and oscillating normal forces. These simulations confirm that the two-dimensional FrBD framework can capture essential physical effects known from experiments and higher-fidelity models, whilst retaining a tractable and analytically transparent structure that is amenable to mathematical analysis and real-time implementations.

Beyond the specific formulations presented in the paper, the developed methodology, combining physical-oriented reasoning, implicit-function-based inversion, and distributed PDE representations, establishes a general template for future extensions. Potential directions include incorporating more detailed bristle mechanics, fully nonlinear damping laws, or coupling with flexible bodies. Indeed, the results of this work open the way for new physically-grounded, control-oriented rolling contact representations and provide a rigorous foundation for next-generation friction modelling in applications ranging from vehicle dynamics to tribology. In particular, a limitation of the specific model implementation presented in this paper relates to the simplicity of the assumed rheological representation of the bristle element, which may be appropriate for elastic or viscoelastic solids where relaxation effects are minor, but cannot accurately capture complex dissipative behaviours observed, for instance, in polymers and composite materials. Finally, starting with the formulations introduced in this manuscript, model-order-reduction strategies could be explored for potential integration with larger, system-level dynamic simulations of complex mechanical systems.

\section*{Funding declaration}
This research was financially supported by the project FASTEST (Reg. no. 2023-06511), funded by the Swedish Research Council. 

\section*{Acknowledgements}
The author thanks both reviewers for their valuable suggestions and comments, with particular appreciation to Reviewer 2 for especially helpful feedback that substantially improved the manuscript.

\section*{Compliance with Ethical Standards}

The authors declare that they have no conflict of interest.

\section*{Author Contribution declaration}
L.R. is the sole author and contributor to the manuscript.

\appendix

\section{Technical results}\label{app:1}
This appendix collects some technical results. Specifically, Appendix~\ref{app:Proof1} below sketches the proof of Theorem~\ref{thm:ex2}, whereas Appendices~\ref{app:Proof2} and~\ref{app:Proof3} contain the proofs of Lemmata~\ref{lemma:boundedness} and~\ref{lemma:DissF20}, respectively.

\subsection{Proof of Theorem~\ref{thm:ex2}}\label{app:Proof1}
The proof of Theorem~\ref{app:Proof1} is sketched below.
\begin{proof}[Sketch of the proof of Theorem~\ref{thm:ex2}]
Under the stated assumptions on $\partial \mathscr{C}$, the unbounded operator $(\mathscr{A},\mathscr{D}(\mathscr{A}))$, defined by
\begin{subequations}
\begin{align}
(\mathscr{A}\bm{\zeta})(\bm{x}) & \triangleq - \bigl(\bar{\bm{V}}(\bm{x})\cdot\nabla_{\bm{x}}\bigr)\bm{\zeta}(\bm{x}), \\
\mathscr{D}(\mathscr{A}) & \triangleq \Bigl\{\bm{\zeta} \in L^2(\mathring{\mathscr{C}};\mathbb{R}^2)\mathrel{\Big|} (\bar{\bm{V}}\cdot\nabla_{\bm{x}})\bm{\zeta} \in L^2(\mathring{\mathscr{C}};\mathbb{R}^2), \;  \eval[0]{\bm{\zeta}}_{\mathscr{L}} = \bm{0}\Bigr\},
\end{align}
\end{subequations}
generates a $C_0$-semigroup on $L^2(\mathring{\mathscr{C}};\mathbb{R}^2)$ (see \cite{Bardos}). The following perturbative term $\bm{f} : L^2(\mathring{\mathscr{C}};\mathbb{R}^2)\times [0,S] \to L^2(\mathring{\mathscr{C}};\mathbb{R}^2)$ is now considered:
\begin{align}
\bigl(\bm{f}(\bm{z},s)\bigr)(\bm{x}) \triangleq \tilde{\mathbf{\Sigma}}(\bm{x},s)\bm{z}(\bm{x}) + \tilde{\bm{h}}(\bm{x},s).
\end{align}
The conditions $\tilde{\mathbf{\Sigma}}_\varphi \in C^0(\mathscr{C}\times[0,S];\mathbf{M}_2(\mathbb{R}))$ and $\tilde{\bm{h}} \in C^0([0,S];L^2(\mathring{\mathscr{C}};\mathbb{R}^2))$ imply that $\bm{f} : L^2(\mathring{\mathscr{C}};\mathbb{R}^2)\times [0,S] \to L^2(\mathring{\mathscr{C}};\mathbb{R}^2)$ is continuous in $s$ on $[0,S]$, and uniformly Lipschitz continuous on $L^2(\mathring{\mathscr{C}};\mathbb{R}^2)$. Hence, all the hypotheses of Theorem 6.1.2 in \cite{Pazy} are verified, ensuring the existence and uniqueness of a mild solution $\bm{z} \in C^0([0,S];L^2(\mathring{\mathscr{C}};\mathbb{R}^2))$ for all ICs $\bm{z}_0 \in L^2(\mathring{\mathscr{C}};\mathbb{R}^2)$. Similarly, for $\tilde{\mathbf{\Sigma}}_\varphi \in C^1(\mathscr{C}\times[0,S];\mathbf{M}_2(\mathbb{R}))$ and $\tilde{\bm{h}} \in C^1([0,S];L^2(\mathring{\mathscr{C}};\mathbb{R}^2))$, all the hypotheses of Theorems 6.1.5 in \cite{Pazy} or 2.2 in \cite{MScthesis} are verified (see also \cite{Tanabe1,Tanabe}), ensuring the existence and uniqueness of a classical solution $\bm{z} \in C^1([0,S];L^2(\mathring{\mathscr{C}};\mathbb{R}^2)) \cap C^0([0,S];\mathscr{D}(\mathscr{A}))$ for all ICs $\bm{z}_0 \in \mathscr{D}(\mathscr{A})$.
\end{proof}

\subsection{Proof of Lemma~\ref{lemma:boundedness}}\label{app:Proof2}
The proof of Lemma~\ref{lemma:boundedness} is given below.
\begin{proof}[Proof of Lemma~\ref{lemma:boundedness}]
The proof is worked out for the PDE~\eqref{eq:quasiStationary0}; the result for~\eqref{eq:standard2} follows as a special case.
To start, it must be noted that Eq.~\eqref{eq:N} implies the existence of a vector $\mathbb{R}^2 \ni \bm{\nu} = [\nu_x\; -\nu_y]^{\mathrm{T}}\not = \bm{0}$ and a constant $\eta \in \mathbb{R}_{>0}$ such that
\begin{align}\label{eq:N2}
N(\bm{x}) \triangleq \varphi_1\bm{\nu}\cdot \bigl(\bm{x}-\bm{x}_{C_{\varphi_1}}\bigr) < -\eta, \quad \bm{x} \in \mathscr{C}.
\end{align}
Accordingly, the following Lyapunov function candidate is considered:
\begin{align}\label{eq:Lyapunov1}
W\bigl(\bm{z}(\cdot,s)\bigr) \triangleq \dfrac{1}{2}\iint_{\mathscr{C}} P(\bm{x}) \bm{z}^{\mathrm{T}}(\bm{x},s)\bm{z}(\bm{x},s) \dif \bm{x},
\end{align}
where $P(\bm{x}) = \exp(\theta \bm{\nu}\cdot\bm{x})$ with $\theta \in \mathbb{R}_{>0}$ to be appropriately selected. Taking the derivative of the above~\eqref{eq:Lyapunov1} along the dynamics~\eqref{eq:quasiStationary0dy} yields
\begin{align}
\begin{split}
\dod{W(s)}{s} & = -\dfrac{1}{2}\iint_{\mathscr{C}} P(\bm{x})\bar{\bm{V}}(\bm{x}) \cdot \nabla_{\bm{x}} \norm{\bm{z}(\bm{x},s)}_2^2 \dif \bm{x} + \iint_{\mathscr{C}}  P(\bm{x})\bm{z}^{\mathrm{T}}(\bm{x},s)\tilde{\mathbf{\Sigma}}_\varphi(\bm{x},s)\bm{z}(\bm{x},s) \dif \bm{x}  \\
& \quad + \iint_{\mathscr{C}} P(\bm{x})\bm{z}^{\mathrm{T}}(\bm{x},s)\tilde{\bm{h}}(\bm{x},s) \dif \bm{x}, \quad s \in \mathbb{R}_{>0}.
\end{split}
\end{align}
Integrating by parts the first term and recalling that $\bar{\bm{V}}(\bm{x})$ is solenoidal, namely $\nabla_{\bm{x}}\cdot \bar{\bm{V}}(\bm{x}) = 0$, gives
\begin{align}\label{eq:Lyap11}
\begin{split}
\dod{W(s)}{s} & \leq - \dfrac{1}{2}\oint_{\partial \mathscr{C}} P(\bm{x})\norm{\bm{z}(\bm{x},s)}_2^2 \bar{\bm{V}}(\bm{x})\cdot \hat{\bm{n}}_{\partial \mathscr{C}}(\bm{x}) \dif l +\dfrac{1}{2}\iint_{\mathscr{C}} \theta P(\bm{x})N(\bm{x}) \norm{\bm{z}(\bm{x},s)}_2^2\dif \bm{x} \\
& \quad + \iint_{\mathscr{C}}  P(\bm{x})\bm{z}^{\mathrm{T}}(\bm{x},s)\tilde{\mathbf{\Sigma}}_\varphi(\bm{x},s)\bm{z}(\bm{x},s) \dif \bm{x} + \iint_{\mathscr{C}} P(\bm{x})\bm{z}^{\mathrm{T}}(\bm{x},s)\tilde{\bm{h}}(\bm{x},s) \dif \bm{x},  \quad s\in \mathbb{R}_{>0},
\end{split}
\end{align}
with $N(\bm{x})$ as in Eq.~\eqref{eq:N}.
The BC~\eqref{eq:BCquasistesd} ensures the first term on the right-hand side of Eq.~\eqref{eq:Lyap11} to be nonpositive. Hence, utilising the inequality~\eqref{eq:N} gives
\begin{align}\label{eq:Lyap12}
\begin{split}
\dod{W(s)}{s} & \leq  -\dfrac{1}{2}\iint_{\mathscr{C}} \theta \eta P(\bm{x}) \norm{\bm{z}(\bm{x},s)}_2^2\dif \bm{x} + \iint_{\mathscr{C}}  P(\bm{x})\bm{z}^{\mathrm{T}}(\bm{x},s)\tilde{\mathbf{\Sigma}}_\varphi(\bm{x},s)\bm{z}(\bm{x},s) \dif \bm{x}  \\
& \quad + \iint_{\mathscr{C}} P(\bm{x})\bm{z}^{\mathrm{T}}(\bm{x},s)\tilde{\bm{h}}(\bm{x},s) \dif \bm{x},  \quad s\in \mathbb{R}_{>0}.
\end{split}
\end{align}
Moreover, $\bar{\bm{v}} \in C^0(\mathscr{C}\times\mathbb{R}_{\geq 0};\mathbb{R}^2) \cap L^\infty(\mathscr{C}\times\mathbb{R}_{\geq 0};\mathbb{R}^2)$, $V\ped{r} \in C^0(\mathbb{R}_{\geq 0};[V\ped{min},V\ped{max}])$, and $\bm{\varphi} \in \mathbb{R}\times C^0(\mathbb{R}_{\geq 0};\mathbb{R})\cap \mathbb{R}\times L^\infty(\mathbb{R}_{\geq 0};\mathbb{R})$ imply the existence of constants $M_{\tilde{\mathbf{\Sigma}}_\varphi}, M_{\mathbf{H}} \in \mathbb{R}_{\geq 0}$ such that
\begin{subequations}
\begin{align}
\norm{\tilde{\mathbf{\Sigma}}_\varphi(\bm{x},s)} & \leq M_{\tilde{\mathbf{\Sigma}}_\varphi}, \\
\norm{\tilde{\bm{h}}(\bm{x},s)}_2 & \leq M_{\mathbf{H}}\norm{\bar{\bm{v}}(\bm{x},s)}_2, 
\end{align}
\end{subequations}
for all $(\bm{x},s) \in \mathscr{C}\times \mathbb{R}_{\geq 0}$. Then, Applying Cauchy-Schwarz and subsequently Young's inequality for products to the last integral in Eq.~\eqref{eq:Lyap12} produces
\begin{align}\label{eq:Lyap13}
\begin{split}
\dod{W(s)}{s} & \leq  -\dfrac{1}{2}\iint_{\mathscr{C}} (\theta\eta-2M_{\tilde{\mathbf{\Sigma}}_\varphi}-1) P(\bm{x}) \norm{\bm{z}(\bm{x},s)}_2^2\dif \bm{x}   +\dfrac{1}{2}M_{\mathbf{H}}^2 \iint_{\mathscr{C}} P(\bm{x})\norm{\bar{\bm{v}}(\bm{x},s)}_2^2 \dif \bm{x},  \quad s\in \mathbb{R}_{>0}.
\end{split}
\end{align}
Thus, choosing $\theta > (2M_{\tilde{\mathbf{\Sigma}}_\varphi}+1)/\eta$ yields the existence of $\rho, \mu \in \mathbb{R}_{>0}$ such that
\begin{align}\label{eq:Lyap14}
\begin{split}
\dod{W(s)}{s} & \leq  -\rho W(s) +\mu\norm{\bar{\bm{v}}(\cdot,\cdot)}_\infty^2, \quad s \in \mathbb{R}_{>0}.
\end{split}
\end{align}
Finally, invoking the Grönwall-Bellman inequality provides
\begin{align}\label{eq:LyapGron}
\begin{split}
W\bigl(\bm{z}(\cdot,s)\bigr) & \leq  \exp(-\rho s)W\bigl(\bm{z}_0(\cdot)\bigr) + \dfrac{\mu}{\rho}\norm{\bar{\bm{v}}(\cdot,\cdot)}_\infty^2, \quad s \in \mathbb{R}_{\geq 0}.
\end{split}
\end{align}
Since the Lyapunov function~\eqref{eq:Lyapunov1} is equivalent to a squared norm on $L^2(\mathring{\mathscr{C}};\mathbb{R}^2)$, Eq.~\eqref{eq:LyapGron} implies~\eqref{eq:ISSbetaGamma}. 
Concerning more specifically the PDE~\eqref{eq:standard2}, the vector $\bm{\nu}$ in $P(\bm{x})$ may be specified as $\bm{\nu} = [1\; 0]^{\mathrm{T}}$ (the same is also true if $\bm{\varepsilon} = [0\; 1]^{\mathrm{T}}$ in Eq.~\eqref{eq:Vss}).
\end{proof}

\subsection{Proof of Lemma~\ref{lemma:DissF20}}\label{app:Proof3}
The proof of Lemma~\ref{lemma:DissF20} is given below.
\begin{proof}[Proof of Lemma~\ref{lemma:DissF20}]
Computing $-\langle p(\cdot)\bm{f}(\cdot,s),\bar{\bm{v}}(\cdot,s)\rangle_{L^2(\mathring{\mathscr{C}};\mathbb{R}^2)} $ provides
\begin{align}\label{eq:w0}
\begin{split}
 -\bigl\langle p(\cdot)\bm{f}(\cdot,s),\bar{\bm{v}}(\cdot,s)\bigr\rangle_{L^2(\mathring{\mathscr{C}};\mathbb{R}^2)}  & = -\iint_{\mathscr{C}} p(\bm{x})\bm{z}^{\mathrm{T}}(\bm{x},s)\mathbf{\Sigma}_0\bar{\bm{v}}(\bm{x},s)\dif \bm{x}  - \iint_{\mathscr{C}} p(\bm{x})\dod{\bm{z}^{\mathrm{T}}(\bm{x},s)}{s}V\ped{r}\mathbf{\Sigma}_1\bar{\bm{v}}(\bm{x},s)\dif \bm{x}\\
&= -\iint_{\mathscr{C}} p(\bm{x})\bm{z}^{\mathrm{T}}(\bm{x},s)\Bigl[\mathbf{\Sigma}_0 + V\ped{r}\tilde{\mathbf{\Sigma}}(\bm{x},s)^{\mathrm{T}}\mathbf{\Sigma}_1\Bigr]\bar{\bm{v}}(\bm{x},s) \dif \bm{x} \\
& \quad -  \iint_{\mathscr{C}} V\ped{r}p(\bm{x}) \bar{\bm{v}}^{\mathrm{T}}(\bm{x},s)\mathbf{H}\bigl(\bar{\bm{v}}(\bm{x},s),s\bigr)^{\mathrm{T}}\mathbf{\Sigma}_1\bar{\bm{v}}(\bm{x},s) \dif \bm{x}, \quad s \in (0,S).
\end{split}
\end{align}
Moreover, differentiating Eq.~\eqref{eq:VdissF2} along the dynamics~\eqref{eq:standard2PDE} or~\eqref{eq:quasiStationary0dy} gives
\begin{align}\label{eq:WdotPass}
\begin{split}
\dod{W\bigl(\bm{z}(\cdot,s)\bigr)}{s} & = \iint_{\mathscr{C}}p(\bm{x})\bm{z}^{\mathrm{T}}(\bm{x},s)\mathbf{\Sigma}_0\dpd{\bm{z}(\bm{x},s)}{s}\dif \bm{x} \\
& =  \iint_{\mathscr{C}}p(\bm{x})\bm{z}^{\mathrm{T}}(\bm{x},s)\mathbf{\Sigma}_0\Bigl( \tilde{\mathbf{\Sigma}}(\bm{x},s)\bm{z}(\bm{x},s) + \mathbf{H}\bigl(\bar{\bm{v}}(\bm{x},s),s\bigr)\bar{\bm{v}}(\bm{x},s)\Bigr) \dif \bm{x} \\
& \quad  -\dfrac{1}{2}\iint_{\mathscr{C}} p(\bm{x})\bar{\bm{V}}(\bm{x})\cdot\nabla_{\bm{x}}\Bigl(\bm{z}^{\mathrm{T}}(\bm{x},s)\mathbf{\Sigma}_0\bm{z}(\bm{x},s)\Bigr) \dif \bm{x}, \quad s \in (0,S).
\end{split}
\end{align}
Subtracting Eq.~\eqref{eq:WdotPass} from both sides of~\eqref{eq:w0} yields
\begin{align}\label{eq:w02}
\begin{split}
&  -\bigl\langle p(\cdot)\bm{f}(\cdot,s),\bar{\bm{v}}(\cdot,s)\bigr\rangle_{L^2(\mathring{\mathscr{C}};\mathbb{R}^2)} -\dod{W\bigl(\bm{z}(\cdot,s)\bigr)}{s} =  \iint_{\mathscr{C}} p(\bm{x})\bm{z}^{\mathrm{T}}(\bm{x},s)\mathbf{Q}_1(\bm{x},s)\bm{z}(\bm{x},s)\dif \bm{x} \\
& \qquad  +  \iint_{\mathscr{C}} V\ped{r}p(\bm{x}) \bar{\bm{v}}^{\mathrm{T}}(\bm{x},s)\mathbf{Q}_2(\bm{x},s)\bar{\bm{v}}(\bm{x},s) \dif \bm{x} \\
& \qquad  +\dfrac{1}{2}\iint_{\mathscr{C}} p(\bm{x})\bar{\bm{V}}(\bm{x})\cdot\nabla_{\bm{x}}\Bigl(\bm{z}^{\mathrm{T}}(\bm{x},s)\mathbf{\Sigma}_0\bm{z}(\bm{x},s)\Bigr) \dif \bm{x}, \quad s \in (0,S),
\end{split}
\end{align}
where
\begin{subequations}
\begin{align}
\mathbf{Q}_1(\bm{x},s) & \triangleq -\mathbf{\Sigma}_0\tilde{\mathbf{\Sigma}}(\bm{x},s) = \dfrac{1}{V\ped{r}}\norm{\mathbf{M}\bigl(V\ped{r}\bar{\bm{v}}(\bm{x},s)\bigr)V\ped{r}\bar{\bm{v}}(\bm{x},s)}_{2,\varepsilon}\mathbf{\Sigma}_0\mathbf{G}^{-1}\bigl(V\ped{r}\bar{\bm{v}}(\bm{x},s)\bigr)\mathbf{\Sigma}_0, \\
\mathbf{Q}_2(\bm{x},s) & \triangleq -\mathbf{H}\bigl(\bar{\bm{v}}(\bm{x},s),s\bigr)^{\mathrm{T}}\mathbf{\Sigma}_1 =\mathbf{M}^2\bigl(V\ped{r}\bar{\bm{v}}(\bm{x},s)\bigr) \mathbf{G}^{-1}\bigl(V\ped{r}\bar{\bm{v}}(\bm{x},s)\bigr)\mathbf{\Sigma}_1.
\end{align}
\end{subequations}
Clearly, $\mathbf{Sym}_2(\mathbb{R}) \ni \mathbf{Q}_1(\bm{x},s) \succeq \mathbf{0}$ for all $(\bm{x},s) \in \mathscr{C}\times\mathbb{R}_{\geq 0}$. Moreover, the condition $\mathbf{M}^2(\bm{y}) \mathbf{G}^{-1}(\bm{y})\mathbf{\Sigma}_1 \succeq \mathbf{0}$ for all $\bm{y}\in \mathbb{R}^2$ ensures that $\mathbf{Sym}_2(\mathbb{R}) \ni \mathbf{Q}_2(\bm{x},s) \succeq \mathbf{0}$ for all $(\bm{x},s) \in \mathscr{C}\times\mathbb{R}_{\geq 0}$. Hence, 
\begin{align}\label{eq:w03}
\begin{split}
- \bigl\langle p(\cdot)\bm{f}(\cdot,s),\bar{\bm{v}}(\cdot,s)\bigr\rangle_{L^2(\mathring{\mathscr{C}};\mathbb{R}^2)} & \geq \dod{W\bigl(\bm{z}(\cdot,s)\bigr)}{s} +\dfrac{1}{2}\iint_{\mathscr{C}} p(\bm{x})\bar{\bm{V}}(\bm{x})\cdot\nabla_{\bm{x}}\Bigl(\bm{z}^{\mathrm{T}}(\bm{x},s)\mathbf{\Sigma}_0\bm{z}(\bm{x},s)\Bigr) \dif \bm{x}, \quad s \in (0,S).
\end{split}
\end{align}
Integrating by parts and enforcing the BC~\eqref{eq:BCquasiStandard} or~\eqref{eq:BCquasistesd} gives
\begin{align}\label{eq:w02}
\begin{split}
-\bigl\langle p(\cdot)\bm{f}(\cdot,s),\bar{\bm{v}}(\cdot,s)\bigr\rangle_{L^2(\mathring{\mathscr{C}};\mathbb{R}^2)} & \geq \dod{W\bigl(\bm{z}(\cdot,s)\bigr)}{s}  -\dfrac{1}{2}\iint_{\mathscr{C}} \bigl( \nabla_{\bm{x}} \cdot p(\bm{x})\bar{\bm{V}}(\bm{x})\bigr) \bm{z}^{\mathrm{T}}(\bm{x},s)\mathbf{\Sigma}_0\bm{z}(\bm{x},s)\dif \bm{x}, \quad s \in (0,S).
\end{split}
\end{align}
Therefore, if inequality~\eqref{eq:condP} holds, it may be concluded that 
\begin{align}\label{eq:w03}
\begin{split}
-\bigl\langle p(\cdot)\bm{f}(\cdot,s),\bar{\bm{v}}(\cdot,s)\bigr\rangle_{L^2(\mathring{\mathscr{C}};\mathbb{R}^2)} & \geq  \dod{W\bigl(\bm{z}(\cdot,s)\bigr)}{s}, \quad s \in (0,S).
\end{split}
\end{align}
Integrating the above Eq.~\eqref{eq:w03} proves~\eqref{eq:Fvres} with $w(\bm{f}(\cdot,s),-\bar{\bm{v}}(\cdot,s)) =-\langle p(\cdot)\bm{f}(\cdot,s), \bar{\bm{v}}\ped{r}(\cdot,s)\rangle_{L^2(\mathring{\mathscr{C}};\mathbb{R}^2)}= -\langle p(\cdot)\bm{f}(\cdot,s), \bar{\bm{v}}(\cdot,s)\rangle_{L^2(\mathring{\mathscr{C}};\mathbb{R}^2)}$. 
\end{proof}

\section{Details on contact area and pressure distribution modelling}\label{app:Patch}

A general way of modelling a steady-state contact area is using the superellipse
\begin{align}\label{eq:contactPatch}
\mathscr{C} = \Biggl\{ \bm{x} \in \mathbb{R}^2 \mathrel{\Bigg|} \biggl(\dfrac{x}{a}\biggr)^n + \biggl(\dfrac{y}{b}\biggr)^m \leq 1\Biggr\},
\end{align}
where $a \in \mathbb{R}_{>0}$ and $b \in \mathbb{R}_{>0}$ denote the contact area semilength and semiwidth, respectively. For $n=m=2$, Eq.~\eqref{eq:contactPatch} obviously yields an elliptical contact patch, and it may also approximate arbitrarily well a rectangular shape for large $n, m \in 2\mathbb{N}$. Similarly, a steady-state contact pressure may be modelled as
\begin{align}
p(\bm{x}) = p_0\bar{p}(\bm{x}), \quad \bm{x}\in \mathscr{C},
\end{align}
where
\begin{align}
p_0 \triangleq \dfrac{F_z}{\iint_{\mathscr{C}} \bar{p}(\bm{x}) \dif \bm{x}},
\end{align}
and $\bar{p}\in C^0(\mathscr{C};\mathbb{R}_{\geq 0})$ denotes a shape function.
In particular, for a constant pressure distribution, the following expressions may be deduced:
\begin{align}\label{eq:constantPress}
p_0 = \dfrac{F_z}{4ab\mathrm{B}\biggl(\dfrac{1}{n},\dfrac{1}{m}+1 \biggr)}, \quad \textnormal{and} \quad \bar{p}(\bm{x}) = 1, \quad \bm{x}\in \mathscr{C},
\end{align}
where $\mathrm{B}(\cdot,\cdot)$ denotes the Beta function, defined as
\begin{align}
\mathrm{B}(p,q)= \mathrm{B}(p,q) = \dfrac{\mathrm{\Gamma}(p)\mathrm{\Gamma}(q)}{\mathrm{\Gamma}(p+q)},
\end{align}
in which $\mathrm{\Gamma}(\cdot)$ indicates the Gamma function.
For a parabolic pressure distribution, it is instead possible to infer the following expressions:
\begin{align}\label{eq:parabolicPressure}
p_0= \dfrac{F_z n(m+1)}{4m a b\mathrm{B}\biggl(\dfrac{1}{n}, \dfrac{m+1}{m}+1\biggr)}, \quad \textnormal{and} \quad \bar{p}(\bm{x}) = 1-\biggl(\dfrac{x}{a}\biggr)^n - \biggl(\dfrac{y}{b}\biggr)^m, \quad \bm{x} \in \mathscr{C}.
\end{align}
For instance, the numerical results reported in Sect.~\ref{sect:numer} for a rectangular contact area with a parabolic pressure distribution were generated using $n=m=18$ in Eqs.~\eqref{eq:contactPatch} and~\eqref{eq:parabolicPressure}. 

The function $\bar{p}(\cdot)$ appearing in Eqs.~\eqref{eq:constantPress} and~\eqref{eq:parabolicPressure} may be modified to account for asymmetric trends towards the leading edge. Possible choices are, respectively, 
\begin{align}\label{eq:constModified}
\bar{p}(\bm{x}) = \exp\Bigl(-\gamma\bigl(x_{\mathscr{L}}(y)-x\bigr)\Bigr), \quad \bm{x} \in \mathscr{C},
\end{align}
and
\begin{align}\label{eq:parabolicModified}
\bar{p}(\bm{x}) = \Biggl[1-\biggl(\dfrac{x}{a}\biggr)^n - \biggl(\dfrac{y}{b}\biggr)^m\Biggr]\exp\Bigl(-\gamma\bigl(x_{\mathscr{L}}(y)-x\bigr)\Bigr), \quad \bm{x} \in \mathscr{C},
\end{align}
where $\gamma \in \mathbb{R}_{\geq 0}$ is an appropriately selected parameter, and $x_{\mathscr{L}}(y)$ denotes an explicit parametrisation of the leading edge:
\begin{align}
x_{\mathscr{L}}(y) = a\Biggl[1-\biggl(\dfrac{y}{b}\biggr)^m\Biggr]^{\frac{1}{n}}, \quad y \in [-b,b].
\end{align}
Although the corresponding expressions for $p_0 \in \mathbb{R}_{>0}$ are not particularly neat, Eqs.~\eqref{eq:constModified} and~\eqref{eq:parabolicModified} may ensure the fulfillment of inequality~\eqref{eq:condP} (e.g., for~\modref{stdmodel}), whilst also capturing the sign reversal of the steady-state vertical moment $M_z$ at large slip values.

\section{Additional considerations}\label{app:details} 
Appendices~\ref{sect:details} and~\ref{app:param} provide additional details about the implementation of the models and their parametrisation.
\subsection{Implementational and computational details}\label{sect:details}
The numerical results presented in Sect.~\ref{sect:numer} were generated in MATLAB\textsuperscript{\textregistered} R2024a on an Intel(R) Core(TM) Ultra 7 155U (1.70 GHz) with a 32 GB RAM. ~\modref{stdmodel} and~\ref{linmodel2} were implemented using a finite-difference Upwind scheme, whereas~\modref{semilinmodel} employed a fixed-point iteration of~\modref{linmodel2}. In particular, using~\modref{stdmodel} with a $50\times50$ spatial meshgrid and 300 gridpoints for the slip variable, the elapsed time needed to generate Figs.~\ref{fig:GoughRect} and~\ref{fig:GoughEll} was 25.23 s and 8.3 s, respectively. Using~\modref{semilinmodel} with the same settings, Figs.~\ref{fig:RectSpins} and~\ref{fig:EllSpins} required 81.43 s and 64.95 s, respectively. The computational cost of the steady-state models is therefore relatively low, with an average of 0.054 s for every combination of slip and spin inputs in the worst scenario. The transient simulations performed in Sect.~\ref{sect:relax} were instead computationally expensive, requiring approximately 50 s each, which is unacceptable for applications demanding real-time or close-to-real-time performance. In this context, model-order-reduction strategies similar to those proposed in \cite{Tsiotras1,Tsiotras3,LuGreSpin} may be explored for potential integration with larger, system-level dynamic simulations of complex mechanical systems.

\subsection{Considerations on model parametrisation}\label{app:param}
\modref{stdmodel},~\ref{semilinmodel}, and~\ref{linmodel2} depend on several parameters that must be identified as functions of the material properties and the operating conditions, including the applied vertical load and the frictional characteristics at the contact interface. In this context, a comprehensive set of recommendations for model parametrisation can be found in \cite{Deur1}; the present appendix is limited to providing a set of practical guidelines aimed at simplifying the parametrisation process.

Specifically, the normalised micro-stiffness and micro-damping coefficients contained in the matrices $\mathbf{\Sigma}_0$ and $\mathbf{\Sigma}_1$ may be identified using standard optimisation techniques by considering the linearised forms of \modref{stdmodel},~\ref{semilinmodel}, and~\ref{linmodel2} obtained for small slip and spin inputs. In this regime, the models reduce to linear brush formulations equivalent to those presented in \cite{Meccanica2,SphericalWheel}. Importantly, the resulting equations are independent of friction-related parameters, which enables a simpler, two-stage identification procedure. In this context, it is worth noting that both $\mathbf{\Sigma}_0$ and $\mathbf{\Sigma}_1$ may be identified solely from steady-state experiments, since $\mathbf{\Sigma}_1$ multiplies the total time derivatives appearing in Eqs.~\eqref{eq:fxt} and~\eqref{eq:fxs}.

Once these parameters have been identified, the coefficients governing frictional behaviour may be optimised by considering highly nonlinear operating conditions associated with large slip values. In practice, it is common to assume that the matrix $\mathbf{M}(\bm{v}\ped{r})$ admits a factorisation of the type $\mathbf{M}(\bm{v}\ped{r}) = \theta \bar{\mathbf{M}}(\bm{v}\ped{r})$, where $\bar{\mathbf{M}} \in C^0(\mathbb{R}^2;\mathbf{Sym}2(\mathbb{R}))$ represents a matrix of nominal friction coefficients, and the scalar parameter $\theta \in \mathbb{R}_{>0}$ captures additional functional dependencies. For instance, in tyre-road interactions, $\theta$ may account for variations in friction arising from tread temperature, asphalt conditions, or third-body layer effects.

Finally, geometric parameters related to the contact patch may be inferred directly from measurements of the contact area, whilst the total vertical load is typically known from steady-state experiments. In the absence of direct measurements, a rectangular contact patch may be assumed for tyres and an elliptical one for railway wheels; alternative choices should be justified based on the geometry of the contacting bodies. With regard to the pressure distribution, a uniform vertical pressure is the simplest assumption, although non-uniform distributions, mildly decreasing along the rolling direction, may be adopted when passivity requirements must be satisfied.

\newpage
\textbf{Errata} \newline

Errata to "Romano, L. Two-dimensional FrBD friction models for rolling contact. Nonlinear Dyn 114, 444 (2026). https://doi.org/10.1007/s11071-026-12298-x". 

\begin{enumerate}
\item The statement of Theorem 3.1 was modified as follows:

Suppose that $\mathring{\mathscr{C}}\subset \mathbb{R}^2$ is bounded, with boundary $\partial \mathscr{C}$ piecewise $C^1$. Then, for all $\tilde{\mathbf{\Sigma}} \in C^0(\mathscr{C}\times[0,S];\mathbf{M}_2(\mathbb{R}))$ and $\tilde{\bm{h}} \in C^0([0,S];L^2(\mathring{\mathscr{C}};\mathbb{R}^2))$ as in Eq. (45), and ICs $\bm{z}_0 \in L^2(\mathring{\mathscr{C}};\mathbb{R}^2)$, the PDE (47) admits a unique \emph{mild solution} $\bm{z} \in C^0([0,S];L^2(\mathring{\mathscr{C}};\mathbb{R}^2))$. Additionally, if $\tilde{\mathbf{\Sigma}} \in C^1(\mathscr{C}\times[0,S];\mathbf{M}_2(\mathbb{R}))$, $\tilde{\bm{h}} \in C^1([0,S];L^2(\mathring{\mathscr{C}};\mathbb{R}^2))$, and the IC $\bm{z}_0 \in \mathscr{D}(\mathscr{A})$, with $\mathscr{D}(\mathscr{A}) \triangleq \{\bm{\zeta} \in L^2(\mathring{\mathscr{C}};\mathbb{R}^2) \mathrel{|} \pd{\bm{\zeta}}{x} \in  L^2(\mathring{\mathscr{C}};\mathbb{R}^2), \; \eval[0]{\bm{\zeta}}_{\mathscr{L}} = \bm{0}\} $, the solution is \emph{classical}, that is, $\bm{z} \in C^1([0,S];L^2(\mathring{\mathscr{C}};\mathbb{R}^2)) \cap C^0([0,S];\mathscr{D}(\mathscr{A}))$.

\item The statement of Theorem 3.2 was modified as follows:

Suppose that $\bar{\bm{V}} \in C^1(\mathscr{C};\mathbb{R}^2)$ reads as in Eq. (60), and that $\mathring{\mathscr{C}}\subset \mathbb{R}^2$ is bounded, with boundary $\partial \mathscr{C}$ piecewise $C^1$. Then, for all $\tilde{\mathbf{\Sigma}}_{\varphi} \in C^0(\mathscr{C}\times[0,S];\mathbf{M}_2(\mathbb{R}))$ and $\tilde{\bm{h}} \in C^0([0,S];L^2(\mathring{\mathscr{C}};\mathbb{R}^2))$ as in Eq. (56), and ICs $\bm{z}_0 \in L^2(\mathring{\mathscr{C}};\mathbb{R}^2)$, the PDE (59) admits a unique mild solution $\bm{z} \in C^0([0,S];L^2(\mathring{\mathscr{C}};\mathbb{R}^2))$. Additionally, if $\tilde{\mathbf{\Sigma}}_{\varphi} \in C^1(\mathscr{C}\times[0,S];\mathbf{M}_2(\mathbb{R}))$, $\tilde{\bm{h}} \in C^1([0,S];L^2(\mathring{\mathscr{C}};\mathbb{R}^2))$, and the IC $\bm{z}_0 \in \mathscr{D}(\mathscr{A})$, with $\mathscr{D}(\mathscr{A}) \triangleq \{\bm{\zeta} \in L^2(\mathring{\mathscr{C}};\mathbb{R}^2) \mathrel{|} (\bar{\bm{V}}\cdot\nabla_{\bm{x}})\bm{\zeta} \in L^2(\mathring{\mathscr{C}};\mathbb{R}^2), \;  \eval[0]{\bm{\zeta}}_{\mathscr{L}} = \bm{0}\}$, the solution is classical, that is, $\bm{z} \in C^1([0,S];L^2(\mathring{\mathscr{C}};\mathbb{R}^2)) \cap C^0([0,S];\mathscr{D}(\mathscr{A}))$.

\item In the proof of Theorem 3.2, Eq. (84b) was modified as
\begin{align*}
\mathscr{D}(\mathscr{A}) & \triangleq \Bigl\{\bm{\zeta} \in L^2(\mathring{\mathscr{C}};\mathbb{R}^2)\mathrel{\Big|} (\bar{\bm{V}}\cdot\nabla_{\bm{x}})\bm{\zeta} \in L^2(\mathring{\mathscr{C}};\mathbb{R}^2), \;  \eval[0]{\bm{\zeta}}_{\mathscr{L}} = \bm{0}\Bigr\}.
\end{align*}
The part "which is equivalent to $\bm{z} \in C^1([0,S];L^2(\mathring{\mathscr{C}};\mathbb{R}^2)) \cap C^0([0,S];H^1(\mathring{\mathscr{C}};\mathbb{R}^2))$ satisfying the BC (59b)" was removed.

\item In Definitions 4.1, 4.2, and 4.3, the condition "$\bm{z}_0 \in H^1(\mathring{\mathscr{C}};\mathbb{R}^2)$ satisfying the BC" was replaced by "$\bm{z}_0 \in \mathscr{D}(\mathscr{A})$".
\end{enumerate}

\end{document}